\documentclass[acmsmall,nonacm]{acmart}
\usepackage[utf8]{inputenc}
\usepackage[T1]{fontenc}
\usepackage{microtype}
\usepackage{amsmath,makecell}
\usepackage{stmaryrd}

\usepackage{bm}
\usepackage{multirow, bigdelim} 
\usepackage{tikz}
 \usepackage{subcaption}
 \usepackage[capitalize]{cleveref}
\usepackage[
       disable 
]
{todonotes}

  \usepackage{relsize}
     \usetikzlibrary{quantikz2}

\usepackage{wrapfig}
\usepackage{bussproofs}
\EnableBpAbbreviations%
\usepackage{enumitem}
\usepackage{siunitx}

\usepackage{xspace}
\usepackage[english]{babel}
\newtheorem{lemma}{Lemma}
\newtheorem{runex}{Running example}

\newtheorem{illus-case}{Illustration}
\newtheorem{notation}{Notation}
\newtheorem{corollary}{Corollary}
\newtheorem{remark}{Remark}

\theoremstyle{definition}
\newtheorem{definition}{Definition}
  \tikzstyle{bl}=[ rectangle, fill =red, minimum width=.4cm, minimum height=.4cm]
  \tikzstyle{un}=[ rectangle, fill =red, minimum width=.4cm, minimum height=.4cm]
  \tikzstyle{mun}=[ rectangle, fill =blue, minimum width=.4cm, minimum height=.4cm]
  \tikzstyle{ghost}=[ rectangle, minimum width=.4cm, minimum height=.4cm]
\tikzset{mymeter/.append style={draw, fill=white,inner sep=6, rectangle,
        font=\vphantom{\tiny a},
        minimum width=20,
        line width=.8,
        path picture={
            \draw[black] ([shift={(.1,.15)}]path picture bounding box.south
            west) to[bend left=50] ([shift={(-.1,.15)}]path picture bounding
            box.south east);\draw[black,-latex] ([shift={(0,.1)}]path picture
            bounding box.south) -- ([shift={(.3,-.1)}]path picture bounding 
box.north);}}}

\newcommand{\myparagraph}[1]{\medskip \noindent \textbf{#1.}}

\definecolor{yell}{RGB}{246,187,0}
\definecolor{blu}{RGB}{6,39,148}
\definecolor{re}{RGB}{148,6,39}
\definecolor{gree}{RGB}{10,150,10}
\definecolor{blugree}{RGB}{8,95,120}
\definecolor{reyell}{RGB}{190,75,25}

\newcommand{\fv}[1]{
{#1}}
\newcommand{\finv}[1]{\fv{#1}}
\newcommand{\disc}[1]{}

\newcommand{\pasuminline}[3]{\ensuremath{\langle #1,#3\cdot#2\rangle}}

\definecolor{mygrey}{rgb}{0.9, 0.95, 0.95}
\definecolor {iron}{RGB}{72, 73, 75}
\definecolor{yell}{RGB}{246,187,0}
\definecolor{yell}{RGB}{246,187,0}
\definecolor{blu}{RGB}{0,20,100}
\definecolor{re}{RGB}{148,6,39}
\definecolor{gree}{RGB}{0,100,0}
\definecolor{rgree}{RGB}{74,53,20}
\definecolor{orange}{RGB}{255,178,102}
\definecolor{tgrey}{RGB}{211,211,211}
\newcommand{\tyes}{\cellcolor{gree!50}}
\newcommand{\tno}{\cellcolor{re!80}}
\newcommand{\tquite}{\cellcolor{rgree!72}}

\newcommand{\nodata}{---}

\newcommand{\DO}{\ensuremath{\mathcal{DO}}\xspace}
\renewcommand{\xi}{\ensuremath{\mathcal{V}}\xspace}

\newcommand{\mncom}[1]{\todo[inline, color=iron!40]{MN: #1\xspace}}

\newcommand{\oad}{\ensuremath{\texttt{Oad}}\xspace}

\newcommand{\C}{\ensuremath{\texttt{c}}\xspace}

\newcommand{\X}{\ensuremath{\texttt{X}}\xspace}
\newcommand{\Qu}{\ensuremath{\texttt{Qu}}\xspace}

\newcommand{\qu}{\ensuremath{\texttt{qu}}\xspace}

\newcommand{\Cl}{\ensuremath{\texttt{Cl}}\xspace}

\newcommand{\pvar}{\ensuremath{\texttt{Pvar}}\xspace}

\newcommand{\n}{\ensuremath{\texttt{s}}\xspace}
\newcommand{\initqb}{\huge{\bm{\shortmid}}\xspace}
\newcommand{\tinitqb}{\footnotesize{\bm{\shortmid}}\xspace}

\newcommand{\y}{\ensuremath{\texttt{y}}\xspace}

\newcommand{\An}{\ensuremath{\texttt{a}}\xspace}

\newcommand{\Reals}{\ensuremath{\texttt{r}}\xspace}

\newcommand{\bsupport}{\text{\ttfamily\fontseries{b}\selectfont su}}

\newcommand{\creg}{\ensuremath{\operatorname{\texttt{creg}}\xspace}}
\newcommand{\qreg}{\ensuremath{\operatorname{\texttt{qreg}}\xspace}}
\newcommand{\cqreg}{\ensuremath{\operatorname{\texttt{(c/q)reg}}\xspace}}

\newcommand{\steps}[1]{{\ensuremath{\color{gree}\bm #1\color{black}}}\xspace}

\newcommand{\N}{\ensuremath{\texttt{s}}\xspace}
\newcommand{\F}{\ensuremath{\texttt{o}}\xspace}
\newcommand{\I}{\ensuremath{\texttt{i}}\xspace}
\newcommand{\J}{\ensuremath{\texttt{j}}\xspace}

\newcommand{\B}{\ensuremath{\texttt{b}}\xspace}
\newcommand{\bkappa}{\bm{\texttt{b}}}

\makeatletter
\def\moverlay{\mathpalette\mov@rlay}
\def\mov@rlay#1#2{\leavevmode\vtop{%
   \baselineskip\z@skip \lineskiplimit-\maxdimen
   \ialign{\hfil$\m@th#1##$\hfil\cr#2\crcr}}}
\newcommand{\charfusion}[3][\mathord]{
    #1{\ifx#1\mathop\vphantom{#2}\fi
        \mathpalette\mov@rlay{#2\cr#3}
      }
    \ifx#1\mathop\expandafter\displaylimits\fi}
\makeatother

\newcommand{\bN}{\text{\ttfamily\fontseries{b}\selectfont s}}
\newcommand{\bF}{\text{\ttfamily\fontseries{b}\selectfont o}}

\newcommand{\bB}{\text{\ttfamily\fontseries{b}\selectfont b}}
\renewcommand{\cos}{\ensuremath{\operatorname{\texttt{cos}}\xspace}}
\renewcommand{\sin}{\ensuremath{\operatorname{\texttt{sin}}\xspace}}
\renewcommand{\arccos}{\ensuremath{\operatorname{\texttt{arccos}}}}
\renewcommand{\arcsin}{\ensuremath{\operatorname{\texttt{arcsin}}}}
\newcommand{\U}{\ensuremath{\mathcal{G}}}
\newcommand{\Q}{\ensuremath{\texttt{q}}\xspace}

\renewcommand{\O}{\ensuremath{\texttt{o}}\xspace}

\newcommand{\cmem}{\ensuremath{\operatorname{\texttt{cmem}}}}

\newcommand{\qmem}{\ensuremath{\operatorname{\texttt{qmem}}}}

\newcommand{\de}[2]{\ensuremath{\frac{#1}{\sqrt{2^{#2}}}}\xspace}

\newcommand{\cnot}{\ensuremath{\texttt{cnot}}\xspace}

\DeclareMathOperator{\psadd}{\ensuremath{\uplus}\xspace }

\DeclareMathOperator{\pstens}
{\ensuremath{\otimes}\xspace }
\newcommand\uminus{\mathbin{\ooalign{$\cup$\cr%
   \hfil\raise0.42ex\hbox{$\scriptstyle -$}\hfil\cr}}}

\newcommand{\condition}{\ensuremath{\textsf{cont}}\xspace}

\newcommand{\probadef}{\ensuremath{\myrule{\mathbb{P}}}\xspace}

\newcommand{\intounitm}[2]{\ensuremath{\omega_n^{-k}}}

\newcommand{\norm}[1]{\ensuremath{\left| #1\right|}}

\newcommand{\qbrickCORE}{\ensuremath{\textsc{Qbricks}}}
\newcommand{\hqbricks}{H\qbrick}
\newcommand{\qbrick}{\qbrickCORE\xspace}

\newcommand{\prequiv}{\ensuremath{\equiv^{\P}}\xspace}
\newcommand{\equivr}{\ensuremath{\equiv_{\texttt{r}}}\xspace}
\newcommand{\prinduce}{\ensuremath{\Rrightarrow^{\P}}\xspace}
\newcommand{\prinducer}{\ensuremath{\preccurlyeq}_{\texttt{r}}\xspace}

\newcommand{\lift}[1]{\ensuremath{\overline{#1}}\xspace}

\newcommand{\zgate}{\begin{tikzpicture}\node[draw, rectangle](h)at(0,0){Z};\end{tikzpicture}\ }

\newcommand{\semantics}[1]{\ensuremath{\llbracket #1 \rrbracket}}

\renewcommand{\vec}[1]{\Vec{#1} }
\newcommand{\toint}[1]{\inttype(#1) }

\newcommand{\sqnorm}[1]{\ensuremath{\norm{#1}^2}\ }

\newcommand{\init}{\ensuremath{\textbf{Init}}\ }

\newcounter{magicrownumbers}

\renewcommand{\ket}[2]{\ensuremath{\vert #1 \rangle}_{#2}}

 \renewcommand{\bra}[2]{\ensuremath{\ \langle #1 \vert}_{#2}}
\newcommand{\ketbra}[2]{\ket{#1}{}\!\!\bra{#2}{}}

\newcommand{\card}[1]{\ensuremath{\# (#1)}}

\newcommand{\sem}[2]{\ensuremath{\llbracket #1\rrbracket_{#2} }}

\newcommand{\mycomment}[1]{}

\newcommand{\inttype}{\texttt{int}}

\newcommand{\program}{\prog}

\newcommand{\stack}[1]{\ensuremath{\uparrow{#1}}}

\newcommand{\cond}[1]{\left\{{#1}\right\}}

\newcommand{\hoare}[3]{\cond{#1}{#2}\cond{#3}}

\newcommand{\pasum}[3]{\ensuremath{\left\langle #1,#3\cdot#2\right\rangle}\xspace}
\newcommand{\pasumnormzero}[2]{\ensuremath{\left\langle #1,#2\right\rangle}\xspace}

\newcommand{\Skip}{\ensuremath{\textbf{Skip}}\xspace}

\newcommand{\case}[3]{\ensuremath{#1 ?\left( #2 ; #3 \right) }\xspace}

 \newcommand{\get}[1]{\ensuremath{\lceil#1\rceil}\xspace}

\newcommand{\q}{\ensuremath{\texttt{Q}}\xspace }
\newcommand{\cre}{\ensuremath{\texttt{C}}\xspace }
\newcommand{\prog}{\ensuremath{\texttt{Pr}}\xspace }
\newcommand{\qprog}{\ensuremath{\texttt{qP}}\xspace }

\renewcommand{\qprog}{\ensuremath{\texttt{qP}}\xspace }
\newcommand{\cb}{\ensuremath{\texttt{cb}}\xspace }

\newcommand{\length}{\ensuremath{\sharp}}

\newcommand{\Init}{\ensuremath{\textbf{Init}}\xspace}

\newcommand{\Apply}{\ensuremath{\textbf{Apply}}\xspace}

 \renewcommand{\for}[4]{\ensuremath{\textbf{For }#1 = #2,#3 \textbf{ do } #4 \textbf{ done} }\xspace}
 \newcommand{\forprog}[3]{\ensuremath{\textbf{For }#1 = #2,#3 \textbf{ do }}\xspace}

\newcommand{\ifthene}[3]{\ensuremath{\textbf{if }#1\textbf{ then }#2\texttt{ else }#3\textbf{ end}}\xspace}

\newcommand{\Ifthene}[3]{\ensuremath{\textbf{If }#1\textbf{ then }#2\textbf{ else }#3\textbf{ end}}\xspace}

\usetikzlibrary{decorations.markings,arrows,decorations.pathmorphing}
\usetikzlibrary{shapes.misc,matrix,backgrounds,folding,positioning}
\usetikzlibrary{shapes.geometric}
\pgfdeclarelayer{edgelayer}
\pgfdeclarelayer{nodelayer}
\pgfsetlayers{background,edgelayer,nodelayer,main}
\tikzset{every path/.style={draw=black!80, line width=0.6pt}}
\tikzstyle{every picture}=[baseline=-0.25em]
\tikzstyle{none}=[inner sep=0mm]
\tikzstyle{zxnode}=[shape=circle, minimum width=.25cm, inner sep=0.5pt, font=\footnotesize, draw=black,thick]
\tikzstyle{gn}=[zxnode ,fill=green, draw=green!10!black]
\tikzstyle{rn}=[zxnode ,fill=red, draw=red!10!black]
\tikzstyle{H box}=[rectangle,fill=yellow, draw=yellow!10!black,thick,xscale=1,yscale=1,font=\footnotesize,inner sep=1.2pt,minimum width=0.15cm,minimum height=0.15cm]
\tikzstyle{ug}=[regular polygon, regular polygon sides=3, fill=red,draw=black,inner sep = 0pt,minimum width=0.8em]
\tikzstyle{black dot}=[inner sep=0.7mm,minimum width=0pt,minimum height=0pt,fill=black,draw=black,shape=circle]
\tikzstyle{dot}=[black dot]
\tikzstyle{white dot}=[dot,fill=white]
\tikzstyle{arrow}=[decoration={markings,mark=at position 1 with
    {\arrow[scale=1.2,>=stealth]{>}}},postaction={decorate}]

\tikzstyle{glabel}=[rounded corners=0.2em,fill=green!30,inner sep=0.1em,font=\scriptsize, anchor=west, xshift=-0.3em, yshift=0,opacity=1]
\tikzstyle{rlabel}=[rounded corners=0.2em,fill=red!30,inner sep=0.1em,font=\scriptsize, anchor=west, xshift=-0.3em, yshift=0,opacity=1]
\tikzstyle{box}=[rectangle, draw=black, fill=white, inner sep=2pt]
\tikzstyle{box-no-outline}=[rectangle, draw=white, fill=white, inner sep=2pt]
\tikzstyle{circle-no-outline}=[circle, draw=white, fill=white, inner sep=0pt]
\tikzstyle{squigglearrow}=[->, line join=round, decorate, decoration={zigzag, segment length=4, amplitude=0.8, post=lineto, post length=2pt}]
\tikzstyle{divide}=[regular polygon, regular polygon sides=3, draw=black, fill=gray!50, inner sep=1.6pt, rounded corners=0.8mm]
\tikzstyle{very thick}=[-, line width=1pt]
\tikzstyle{boxedge}=[draw=gray!50]
\tikzstyle{rule-box}=[dashed, orange]

\pgfdeclarelayer{edgelayer}
\pgfdeclarelayer{nodelayer}
\pgfsetlayers{background,edgelayer,nodelayer,main}
\tikzstyle{every loop}=[]

\usetikzlibrary{circuits.ee.IEC}

\renewcommand{\P}{\ensuremath{\texttt{p}}\xspace}
\newcommand{\support}{\ensuremath{\texttt{su}}\xspace}
\newcommand{\myrule}[1]{\ensuremath{\mathbf{\color{re}#1}\color{black}}\xspace}
\newcommand{\bP}{\text{\ttfamily\fontseries{b}\selectfont p}}

 \newcommand{\hpps}{\ensuremath{\text{\ttfamily\fontseries{b}\selectfont HPS}}\xspace}
 
\newcommand{\Hps}{\ensuremath{\texttt{HPS}}\xspace}
\newcommand{\hps}{\ensuremath{\texttt{HPS}}\xspace}

\newcommand{\dsat}[3]{\ensuremath{\mathbb{P}\left(#1\right)#2 #3}\xspace}

\newcommand{\h}{\ensuremath{\texttt{h}}\xspace}
\newcommand{\reg}{\ensuremath{\texttt{reg}}\xspace}
\newcommand{\bh}{\text{\ttfamily\fontseries{b}\selectfont h}}

\renewcommand{\Vec}[1]{#1 \xspace }

\newcommand{\QReg}{\text{QReg}}
\newcommand{\CReg}{\text{CReg}}
\newcommand{\Reg}{\text{Reg}}
\newcommand{\Bool}{\text{Bool}}
\newcommand{\CBool}{\text{CBool}}
\newcommand{\QBool}{\text{QBool}}
\newcommand{\QB}{\text{qb}}
\newcommand{\Int}{\text{Int}}
\newcommand{\Prog}{\text{Prog}}
\newcommand{\QProg}{\text{QProg}}
\newcommand{\CProg}{\text{CProg}}

\newcommand{\equivdef}{\stackrel{\text{def}}{\Leftrightarrow}}

\newcommand{\bigpsadd}{\mathlarger \psadd} 
\allowdisplaybreaks

\newcommand{\histconcat}{\mathbin{+\mkern-5mu+}}
\newcommand{\bighistconcat}{\mathbin{+\mkern-5mu+}} 
\renewcommand{\kappa}{\ensuremath{\bkappa}}
\renewcommand{\vDash}{\ensuremath{\models}}

\renewcommand{\bh}{\h}

\renewcommand{\bP}{\P}
\renewcommand{\bN}{\N}
\renewcommand{\bF}{\F}
\renewcommand{\bB}{\B}
\renewcommand{\bsupport}{\support}

\begin{document}

\title{Hybrid Path-Sums for Hybrid Quantum Programs}

\author{Christophe Chareton}
\orcid{0000-0001-7113-563X}
\affiliation{%
  \institution{CEA List - Université Paris-Saclay}
  \city{Palaiseau}
  \country{France}
}
\email{christophe.chareton@cea.fr}

\author{Jad Issa}
\orcid{0009-0000-8595-728X}
\affiliation{%
  \institution{CEA List - Université Paris-Saclay}
  \city{Palaiseau}
  \country{France}
}
\affiliation{%
  \institution{Université de Lorraine, CNRS, Inria, LORIA}
  \city{F-54000 Nancy}
  \country{France}
}
\email{jad.issa@cea.fr}

\author{Mathieu Nguyen}
\orcid{0009-0002-6069-7471}
\affiliation{%
  \institution{CEA List - Université Paris-Saclay}
  \city{Palaiseau}
  \country{France}
}
\email{mathieu.nguyen@cea.fr}

\author{Nicolas Blanco}
\orcid{0009-0007-3037-0204}
\affiliation{%
  \institution{CEA List - Université Paris-Saclay}
  \city{Palaiseau}
  \country{France}
}
\email{nicalblanco@gmail.com}

\author{Sébastien Bardin}
\orcid{0000-0002-6509-3506}
\affiliation{%
  \institution{CEA List - Université Paris-Saclay}
  \city{Palaiseau}
  \country{France}
}
\email{sebastien.bardin@cea.fr}

\begin{abstract}

As quantum computing becomes an emerging reality, designing efficient quantum
programming capabilities is becoming more and more important. Particularly, the
debugging and validation of quantum programs is of paramount importance, as these programs are by definition hard to test. 
Static analysis and formal verification methods for quantum programs  
started to emerge a few years now, yet they often miss
hybrid quantum/classical reasoning facilities with, e.g., generic quantum
control, classical control and classical computation instructions. 
In this
paper, we lay out the foundations of a framework for the automated formal
verification of (full) hybrid quantum programs featuring both classical and quantum control,
measurement and hybrid data structures.  In particular, we propose: (1) a novel
symbolic representation for describing and manipulating sets of hybrid
quantum/classical states called Hybrid Path-Sums (HPS); (2) a set of rewriting rules 
providing a rich mechanism for simplifying and reasoning on these symbolic hybrid
states, and (3) a core assertion language  to specify 
equivalence of hybrid quantum programs, the satisfaction of properties on (parts of) hybrid states, 
and the extraction of probabilistic statements about the program behavior.  We prove  the
correctness of the novel symbolic representation, of its rewriting system and of
the specification system.  
Finally, we  
propose a full implementation of this framework as a dedicated symbolic execution engine for hybrid programs. We present an evaluation of a set of representative hybrid case-studies from the literature, showcasing the advantage of our approach and its efficiency compared to  state-of-the-art solutions.

\end{abstract}

 \maketitle
  \section{Introduction}%
\label{intro}

\myparagraph{Context}
 While quantum acceleration may soon become an industrial reality~\cite{arute2019quantum, acharya2024quantum}, it remains unclear how to evaluate the correctness of quantum programs. On the one hand, the only way to extract useful information from the state of a
quantum system is through \textit{{quantum measurement}}  which collapses the
state and irremediably modifies it, making  runtime debugging and monitoring 
impossible.  On the other hand, as quantum computations are probabilistic by nature, end-to-end testing requires enormous numbers of
repetitions to establish confidence in the probability distribution of the
output.  Therefore, \textit{robust, expressive,
    efficient, and tractable formal verification tools are absolute
requirements for the industrialization of quantum computing.}

Formal verification relies on mathematically establishing the correctness of a program
for any possible input.  It is already preferred over testing in critical
applications such as the formalization of mathematics~\cite{gonthier2008formal},
software architectures~\cite{Klein2010,leroy2012compcert}, and industrial
development systems~\cite{behm1999meteor, cuoq2012frama}. Its application for
quantum programs, on the other hand, is still in its infancy. Yet, 
there are already promising results 
involving symbolic representations~\cite{bauer2023symqv, amy2023complete,
amy2018towards, sistla2023symbolic}, specification language
design~\cite{chen2023autoq}, deductive
systems~\cite{ying2012floyd,ying2017invariants} and the development of
practical
tools~\cite{coqq,hietala19:verif_optim_quant_circuit,chareton2021automated,cheng2025embedding}
-- see~\cite{chareton2023formalm} for a survey. But none of these works 
supports all the core features necessary for the automated formal verification
of \textit{hybrid quantum programs}.

\myparagraph{The Problem of Hybrid Programs Formalization} 
Hybrid computation is ubiquitous in the literature, with algorithms intertwining
classical and quantum computation such as QAOA/VQE~\cite{peruzzo2014variational, farhi2014quantum, shor1994algorithms}, or embedding quantum computations
within classical \emph{repeat until success} control patterns. Furthermore, many purely
quantum algorithms such as Quantum Phase Estimation gain in performance when
split into smaller components and
hybridized~\cite{svore2013faster,kimmel2015robust}, and
even seemingly purely quantum primitives end up being hybridized to support
error detection and correction~\cite{kitaev2003fault, gottesman1997stabilizer}, which is critical in taming the intrinsic high noise levels in qubit systems. 

In hybrid architectures, a quantum processor (QPU) communicates with the
classical CPU, where measurement translates quantum superpositions into classical
non-determinism.   The challenge is therefore to design a formalism
flexible enough to handle primitives of both models as well as their
communication through measurement and quantum-,  classical-, or hybrid- control. 

\myparagraph{Goals and Challenges} We aim to provide the foundations of a
 verification framework for \textit{full} hybrid quantum programming, providing
\textit{measurement} \textbf{(M)}, \textit{hybrid data representation}
\textbf{(HD)} and \textit{classical} \textbf{(CC)} \textit{and quantum control}
\textbf{(QC)}. In addition, the representation should be \textit{compact}
\textbf{(CT)}, avoiding representations of exponential size, and provide both  \textit{equational theories} \textbf{(ET)} and \textit{refinement theories}, for
reasoning on equivalences and entailment of representations. Requirement
\textbf{(ET)} is inherently computational and is strongly tied to the possibility of
automated verification.

\medskip

\finv{\it To this end, the core intellectual challenge  is to design a compact symbolic formalism for hybrid classical/quantum processes.  A particular obstacle is to design a tractable way of representing measurement, the induced set of potentially exponentially many alternative projections, and the evolution of the probability distributions of those exponentially many states in a way that avoids exponential calculations in practice. 
} 

\medskip

A discussion of prior attempts is given in
\cref{sec:sota}. In summary, the standard \emph{density
operator representation}~\cite{qhlprover, coqq, Selinger2004} ($2^n \times 2^n$
positive self-adjoint matrices) from physics does not scale well
with the number of qubits $n$ and is unnatural for computational features such as
quantum control.

Perhaps the most successful correction to density operators to
include classical data is the \textit{cq-states}~\cite{qhlcv} where a hybrid
memory is described by pairs $(\sigma, \rho)$ of a classical state $\sigma$ and
a quantum state $\rho$ (described as a (partial) density operator) occurring
with probability $\mathrm{tr}(\rho)$. However, cq-states still suffer from the
exponential blowup in the size of the density operators (\textbf{CT}), lack
efficient equational theories to equate two states without computing them
entirely (\textbf{ET}), and, by erasing global phases, make it impossible to
express quantum control compositionally (\textbf{QC}). This last feature is
essential to enable a modular analysis of programs, where language constructs
receive a functional interpretation. 

The \emph{ZX
calculus}~\cite{coecke2017picturing,Coecke2011interacting}, on the other hand,
is efficient for analysis and optimization but is not suitable for modeling program
constructs and hybrid specifications. Finally, (plain)
\emph{path-sums}~\cite{amy2018towards,amy2023complete} provide compact symbolic
representations with good equational theories, but are limited to unitary
operations and do not support hybrid features.

\myparagraph{Insights}
To address these challenges, we \finv{extend the path-sum~\cite{amy2018towards,amy2023complete}
symbolic representation of quantum superposition to more involved structures where quantum
superposition and probability distributions are interleaved. }
  In path-sums, the unitary evolution of a quantum
system from a state $A$ to a state $B$ is given by a symbolic weighted sum of
all possible paths from $A$ to $B$. For example, the Hadamard gate $H$ applied
to a basis state $\ket{x}{}$ is represented as
$\pasum{\frac{xy}{2}}{\ket{y}{}}{\frac 1 {\sqrt 2}} = \sum_{y \in \{0, 1\}} \frac 1
{\sqrt 2} {e}^{2\pi i \frac{xy}{2}} \ket{y}{}$, where $y$ indexes the paths that are being
summed. \finv{Notice how the phase term $\frac{xy}{2}$ encodes the matrix for $H, \left(\frac{1}{\sqrt{2}}\begin{psmallmatrix} 1&1\\1&-1
\end{psmallmatrix}\right)$ with indexes $x$ and $y$.} We  extend this formalism by interpreting measurement as 
muting the symbolic expression for qubits  $\ket{y}{}$ into a classical bit which we
denote $[y]$ to obtain $\pasum{\frac{xy}{2}}{[y]}{\frac 1 {\sqrt 2}}$.
This $[\cdot]$ decoration instructs us to partition the set of paths according
to the value of $y$. In this case, we obtain
two \emph{worlds}\footnote{The reference to alternative \emph{worlds} is  inspired by
    the many-world interpretation of quantum
mechanics~\cite{everett1957relative,dewitt1970quantum}}: $\frac 1 {\sqrt 2}
[0]$ and $\frac 1 {\sqrt 2} [1]$ no longer in superposition
and each happening with probability $\left(\frac{1}{\sqrt{2}}\right)^2 = \frac 1 2$. The term $[y]$ also serves as a natural representation of classical data
(\textbf{HD}) that is accessible to the program, something which is missing
from standard path-sums and density operators~\cite{qhlprover, coqq,
Selinger2004}. 

\disc{Our vision is that the most extensive description is instead a set of
\textit{worlds}, each with its own hybrid classical/quantum state, which we
describe symbolically and compactly with \textit{hybrid path sums} (\hps). }

In
order to handle information about the structure of world splittings, hybrid
path-sums do not only explicitly encode alternative measurement results, but
also preserve this information by storing the entire history of classical
values. As such, the branching structure of worlds is preserved by (non-unitary)
operations over classical data, such as resetting a register to a constant.

To overcome the potential induced accumulation of non-relevant information, the
formalism is also provided with abstractions and  mechanisms enabling
an automated or user-controlled filtering of the relevant information out of the
representation (\cref{sec:discarding}). 

\finv{Hence, our solution 
 (i) is expressive enough to represent classical \textbf{(CC)} and quantum \textbf{(QC)} control and to derive standard algorithm specifications  and safe information discard;  and (ii) preserves the compactness (\textbf{CT}), tractability (\textbf{ET}), and elegance of original path-sums for unitary programs [3].}

\subsection*{Contributions} 
Building on the previous ideas, we set the basis for a fully integrated hybrid
deductive verification framework. 
To anchor it, we introduce  \hqbricks, a {language for hybrid classical-quantum programs}
    (\cref{sec:hqbricks-syntax}) that  supports measurement (\textbf{M}),  
    quantum and classical instructions (\textbf{HD}) with bounded loops, as control a both  quantum (\textbf{QC}) and classical
    (\textbf{CC}) conditions.\@ 
    Based on this language, we present the following contributions:
     \begin{itemize}

\item \Hps, a \textbf{compact (\textbf{CT}) symbolic representation} for hybrid
    classical/quantum  program states (\cref{hps}) that  extends the path-sum
    formalism~\cite{amy2018towards} to support measurement, classical variables
    and hybrid control, and is provided with  \emph{concretizations} interpreting it in
    terms of probabilistic hybrid classical/quantum state descriptions
    (\cref{probadefs}).  \Hps{} also soundly (\cref{theorem:sound-rewrite-gen})
    extend the path-sum \textbf{equational theory} (\textbf{ET}), enabling rewriting and
    simplification in the hybrid case. This equational theory is  completed with \textbf{local reasoning rules}, enabling partial state assertions and information discarding. 
    Local reasoning widens the scope of patterns that can be handled by
    equational reasoning.  We show it in action in several case studies of interest in \cref{sec:cases}; 
    
\item A symbolic execution framework for hybrid quantum programs based on \hps{}. It contains   a full \textbf{forward propagation semantics} (\cref{sec:operational_semantics}) for \hqbricks{} programs that is sound  relative to standard~\cite{qhlcv} density operators based semantics (\cref{theo:language-soundness}), and a dedicated 
    \textbf{core assertion language over hybrid states} (\cref{sec:static-program-analysis}), with the corresponding rules to check them on \hps{}.  
    Notably,  our assertion language can express hybrid state equivalence, hybrid predicate satisfaction and the probability of satisfying a given Boolean condition;  
    
\item  
A \textbf{complete implementation} of this framework in a working prototype symbolic executor for \hqbricks programs, 
 and a comparative  experimental evaluation over $5$  involved  cases from the literature (\cref{sec:implementation}). These case studies illustrate the  main aspects of our framework. In addition, they showcase the flexibility of the method.
Furthermore,  it turns out that our approach scales significantly better than state-of-the-art tools on  examples where comparison is available.    
\end{itemize}

In short, we provide a
\finv{foundational  framework for the
automated formal
verification of hybrid quantum programs.
We propose a symbolic representation together with a  reasoning mechanism that
is sufficient for deriving textbook specifications for standard computation
patterns from the literature (e.g., effective transmission of quantum states
through teleportation, probability of success for QPE, Repeat Until Success
schemes), and show the potential of this formalism by showcasing its 
\begin{itemize}
    \item 
 scalability:
 this is the first solution to provide symbolic
execution for hybrid classical/quantum computations scaling to thousands of
qubits on different standard case studies (teleportation, phase estimation, error correction);  
\item  flexibility: the formalism can handle diverse features such as
high-level classically controlled operations (Bounded Repeat-Until-Success), quantum information processing (Quantum Teleportation), noise modeling and error correction, as well as gate to gate quantum circuit composition (Phase Estimation), whereas in the literature these features are investigated in
separated developments. 
\end{itemize}}

\finv{
In future works, we intend to build upon the present contribution to handle essential hybrid quantum features still out of scope,  such as  
   unbounded control loops,
approximative correctness, deepened error correction analysis, integration of complex operations over classical data and expected values/resource estimation. }

\section{Prior Attempts}%
\label{sec:sota}

\def\tquite{\tikz\draw[scale=0.4,fill=black](0,.35) -- (.25,0) -- (1,.7) -- (.25,.15) -- cycle (0.75,0.2) -- (0.77,0.2)  -- (0.6,0.7) -- cycle;}
\def\tyes{\tikz\fill[scale=0.4](0,.35) -- (.25,0) -- (1,.7) -- (.25,.15) -- cycle;} 
\def\tno{} 

 \begin{wraptable}{r}{7.2cm}
\footnotesize     \caption{\fv{Comparison with the state-of-the-art}}%
    \label{SRs}

    \vspace{-1em}

        \begin{tabular}{|c|c|c|c|c|c|c|}
\cline{2-7}       \multicolumn{1}{c|}{} &\textbf{M}&
\textbf{HD} & \textbf{CC} &\textbf{QC}&\textbf{CT}& \textbf{ET}\\ \cline{2-7}\hline
Path-sums~\cite{amy2018towards}         &\tno{}  &\tno{} &\tno{}  &\tno&\tyes&\tyes\\
PPS~\cite{chareton2021automated}        &\tno{}  &\tno{} &\tno{}   &\tno&\tyes&\tno\\
PS Vilmart~\cite{Vilmart2020SOP}        &\tyes{}  &\tyes&\tno  &\tno&\tyes&\tyes\\
DOs~\cite{coqq,ying2012floyd}           &\tyes{}   &\tno{} &\tquite&\tno{}   &\tno&\tno\\
cq-states~\cite{qhlcv}                  &\tyes{}   &\tyes{}   &\tyes&\tno&\tno&\tno\\
Dirac n.~\cite{coqq,xu2024automating}   &\tyes&\tno &\tquite&\tno&\tquite&\tquite\\
ZX~\cite{Coecke2011interacting}         &\tyes{}   &\tyes{} &\tno&\tquite &\tyes&\tno\\
Qafny~\cite{li2024qafny,cheng2025embedding}         &\tyes{}  &\tyes{}&\tyes   &\tyes&\tyes&\tno\\
\hline\hline\textbf{\Hps{}} (this article)                                  &
\tyes{} &\tyes&\tyes{}  &\tyes&\tyes&\tyes\\
    \hline\end{tabular}

    \begin{center}
        \begin{tabular}{r@{}lr@{}l}
            M:\,&Measurement      &ET:\,&Equational theory \\ 
            HD:\,&Hybrid Data      &CC:\,&Classical Control \\
            QC:\,&Quantum Control  &CT:\,&Compactness 
        \end{tabular}
    \end{center}

\end{wraptable}

Formal verification of quantum programs has grown in interest for the last 5
to 10 years, with the emergence of a sustainable body of solutions. In
\cref{SRs}, we present some notable symbolic representations of quantum
and/or hybrid processes. They are compared to one another and to our \hps{}
solution in their ability to address the different items mentioned in the `Goals and challenges' paragraph above.

As introduced in \cref{intro}, the \emph{density operator formalism},
together with the superoperator interpretation of quantum processes involving
measurement~\cite{Selinger2004}, is the standard representation for quantum
states, and as such it serves as a reference
formalism for many quantum information analysis tools
 such as purity, trace
distance, entropy, etc.\ (see Nielsen and Chuang~\cite{nielsen2002quantum}, part III).

 It also provides a robust reference for
operational semantics. Nevertheless, density operators provide no means to
simplify, abstract, or compactify in any way the \emph{exponential explosion}
inherent to quantum program execution. Therefore, density operators alone fail
to provide an automatable and tractable solution for analyzing programs. To fill this
gap, they need to be provided with an additional symbolic layer targeting their
representation.
\smallskip
To do so, a recently explored
solution~\cite{bordg2021certified,coqq,xu2024automating} relies on the
formalized usage of the labeled Dirac notation~\cite{dirac1939new}.  This extra
symbolism provides more compact and intuitive data representations and
specifications, as well as additional reasoning rules. In particular, some of
these rules reason over state descriptions instead of density operators. One
therefore encodes specifications about an $n$-qubit system in a $2^n$
dimensional object, a quadratic reduction with regard to the $2^{2n}$ density
operator based representation.

As a side effect, our current proposition also generalizes this use of
symbolic state description encoded in density operators/cq-states
(\cref{def:density_maps}).

\smallskip 

An additional limitation of density operators is the use of complex vector
spaces with unnatural encodings of classical information. Solutions either
encode booleans in tensor combinations~\cite{Selinger2004} or restrict the
quantum processes under consideration to those that do not require explicit reference
to classical data~\cite{ying2012floyd,coqq}. The cq-state~\cite{qhlcv}
extension of the density operator semantics overcomes this last limitation.
However, it still lacks compactness (uses exponentially sized density operators),
a functorial interpretation of quantum control, and theory for simplification. 

\smallskip

\emph{ZX calculus}~\cite{coecke2017picturing, Coecke2011interacting} is an
alternative 
representation for instances of quantum processes and
circuits equipped with a very powerful rewriting theory. The derived
PyZX~\cite{pyzx} tool, e.g., is capable of performing verification for circuits
that feature thousands of gates in seconds, making it a powerful tool for
equivalence checking and/or circuit optimization for the design of compilation tasks.  As a drawback with respect to our purpose, ZX calculus does not
represent process inputs nor outputs, be they hybrid, classical, or quantum
data. Therefore, it is not suitable for representing and evaluating specifications
in terms of the probability of a memory being in a state satisfying a given
condition. Also, the formalism is very low-level and misses programming
constructs such as iterations or quantum control. Raising ZX verification
techniques to the parametrized case was recently addressed through the
\emph{scalable ZX} extension~\cite{carette_et_al:LIPIcs.MFCS.2019.55,
carette2021quantum}, demonstrating standard algorithm modeling but
still suffering from the difficulty of connecting an actual program to a diagrammatic
representation. 

\smallskip
In the recent \emph{Qafny} proposition~\cite{li2024qafny,cheng2025embedding},
the user chooses between several  provided
state representations to carry out a correctness proof. All these representations are
state-based, with similarities to the path-sum formalism. The solution enables
to formalize unitary computations with terminal measurement.
However, it cannot handle feedback
from the measurement results to the rest of the execution (classical control).
Therefore, hybrid programs with  proper interactions between classical and quantum data
(e.g., teleportation, error correction, repeat until success) are not supported.
\smallskip

Our present development is deeply inspired by the \emph{path-sum
calculus}~\cite{amy2018towards}, which combines compactness with an equational theory with good  computational properties and which is complete for the unitary Clifford fragment. In recent years, it has also
served as a seminal basis for different further extensions such as
parameterization~\cite{chareton2021automated,deng2024case}, unbalanced sums
~\cite{amy2023complete,deng2024case}, or
measurement~\cite{Vilmart2020SOP}. 

However, while the unbalanced treatment from~\cite{amy2023complete} seems to
pave the way for the representation of quantum control, a possibility that has not yet been exploited.  Similarly, the
treatment of quantum measurement in~\cite{Vilmart2020SOP} (via ``discarded
qubits'') generates a collapse of the quantum state system, but measurement results
are never stored. Therefore, once again, this formalism cannot handle classical
control based on measurement outcomes. In
particular, as in the case of Qafny, it is not sufficient to handle teleportation, error correction, or classically controlled loops.
Finally, while these features partly appear in different solutions, there exists
no single proposition simultaneously offering all of them to meet the requirements from \cref{SRs}.
In this article, we provide such a unified solution based on path-sums.

  \section{Background: Quantum Algorithms and Programs}%
\label{background}

\myparagraph{Quantum Computing} We adopt the quantum coprocessor
model~\cite{knill1996conventions} where a classical computer performs classical
computation while delegating to a quantum processing unit (QPU) the quantum
parts of the computation. The combined flow of classical and quantum operations
is described by \emph{hybrid circuits}. On the QPU, $n$-bit vector are replaced
by $n$-\emph{qubit} registers which can be in a linear combination
(\emph{superposition}) of all $2^n$ bit strings of length $n$ (\emph{basis
states}). Each basis state $\vec b$ is weighted by its \emph{amplitude}
$\alpha_{\vec b} \in \C$, so that the state of the register becomes
$\ket{\psi}{} = \sum_{\vec b \in {\{0,1\}}^n} \alpha_{\vec b} \ket{\vec b}{}$.
The vector  $\ket{\psi}{}$ cannot be read in full; instead, it is
\emph{measured} resulting in an outcome $\vec b$ with probability 
$|\alpha_{\vec b}|^2$, implying the \emph{normalisation} of the state: $\|
\ket{\psi}{} \| = 1$. After measurement the state changes (\emph{collapses}) to
$\ket{\vec b}{}$.  A typical execution flow consists of \emph{initializing} a
qubit into a basis state, applying \emph{unitary operations} (linear maps
preserving the inner product), measuring the state, and using the measurement
result for further classical computing and control. For a more comprehensive
introduction to quantum information and quantum computing, we refer the desirous
reader to~\cite{nielsen2002quantum}. In this work, we focus largely on the
interplay of quantum and classical computation, a notorious blind spot in the
literature which has only recently gained some attention in the context of
low-level languages~\cite{AmazonBraket, cross2022openqasm,gellerintroducing}.

\myparagraph{The Path-Sum Representation}
Our proposition relies on the path-sum
representation~\cite{amy2018towards,amy2023complete,vilmart2023rewriting,Vilmart2020SOP,deng2024case},
a discrete version of  Feynman's path integrals~\cite{feynman}. In it,
the evolution of a quantum system from a state $A$ to a state $B$ is described
by a symbolic sum of possible paths from $A$ to $B$.  Simply put,
a path-sum is a triplet $\pasum{\P}{\F}{\n}$ of symbolic expressions in boolean
variables $\vec y$ called \emph{path variables} and interpreted as indexing the
set of superposed paths the system takes to evolve from an original state
$\ket{\vec x}{}$ to a final state $\ket{\vec z}{}$, such that, with $r$ being the
number of variables in 
$\pasum{\P}{\F}{\N}$\footnote{Path-sum notations vary significantly within the literature}
\[
    \ket{x}{} \mapsto \ket{\vec z}{} = \sum_{\vec y \in \{0,1\}}^r \N(\vec x,
    \vec y) e^{2i\pi \P(\vec x,
    \vec y)} \ket{\F(\vec x, \vec y)}{}
\]

As representations of unitaries, path-sums can be composed sequentially $\circ$
and in parallel $\otimes$:
\begin{align*}
    \pasum{\P'}{ \F'}{\n'} \circ \pasum{\P}{\F}{\n} &:= 
    \pasum{\P + \P'[\F/\vec x]}{\F'[\F/\vec x]}{\n \cdot \n'[\F/\vec x]} \\
        \pasum{\P'}{\F'}{\n'} \otimes \pasum{\P}{\F}{\n} &:= \pasum{\P + \P'}{\F \otimes \F'}{\n
        \cdot \n'}
\end{align*}

As such, we can describe complex unitaries from primitive ones (\emph{gates}),
such as the Hadamard $H$ which introduces superposition, or the control not
(\text{CNOT}) which entangles two qubits by flipping the target when the control
is 1:

\begin{align*}
    H &: \ket{x_0}{}\mapsto \pasum{\frac{x_0 y_0}{2}}{\ket{y_0}{}}{\frac{1}{\sqrt{2}}}\quad &
    \text{CNOT} &: |x_0 x_1\rangle \mapsto 
    \pasum{0}{\ket{x_0 (x_0 \oplus
    x_1)}{}}{1} &
\end{align*}

However, path-sum representations are not unique. For instance, as a linear
operator $H^2 = I$, but the path-sum representation of $H^2$ is
$\pasum{\frac{y_0y_1}{2} + \frac{y_0 \F}{2}}{|y_1\rangle}{\frac 1 2}$, manifestly
different from the identity $\pasum{0}{|\F \rangle}{1}$. As such, rewrite rules
(\cref{sec:rewrite}) such as the $HH$ below are required to detect these
equivalent \hps{} and simplify them.

\[
\pasum{\P + \frac{y_0y_1}{2} + \frac{y_0 \F}{2}}{|y_1\rangle}{\frac 1 2 \n}
\xrightarrow[]{HH} \pasum{\P}{|\F\rangle}{\n}
\]

\section{Hybrid Path-sums by Example: the Quantum Teleportation Protocol}%
\label{sec:walking_through_telep}
 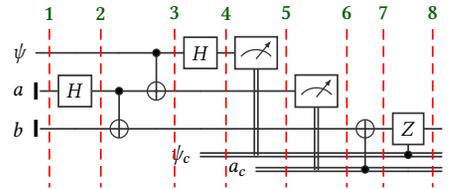
\begin{wrapfigure}{r}{6.5cm}
\centering
\scalebox{.8}{\begin{quantikz}
[row sep=0.1cm,column sep=.3cm,wire
types={q,q,q,n,n,n}]
    \lstick{$\psi$}
    &&&\ctrl{1}\slice{\steps{3}}&\gate{H}\slice{\steps{4}}&\meter{}\slice{\steps{5}}\\
     \lstick{$a$}\slice{\steps{1}}\initqb& \gate{H}\slice{\steps{2}}
     &\ctrl{1}&\targ{} &&&\meter{}\slice{\steps{6}}\\
     \initqb\lstick{$b$}&&\targ{} && &&&\targ{}&\gate{Z}&\\
 &&&&\lstick{$\psi_c$}&\setwiretype{c}\wire[u][3]{c}&&&\ctrl{-1}&\\
 &&&&&\lstick{$
 a_c$}&\setwiretype{c}\wire[u][3]{c}&\ctrl{-2}\slice{\steps{7}}&\slice{\steps{8}}&
\end{quantikz}}
    \caption{Quantum teleportation circuit}%
    \label{QTelcirc}
\Description[]{}
\end{wrapfigure}

Consider the standard quantum teleportation
protocol~\cite{bennett1993teleporting, nielsen2002quantum} 
which moves the state $\ket{x}{\psi}$ of a qubit $\psi$ into a qubit $b$.
Teleportation is an essential building block in quantum information
transmission~\cite{briegel1998quantum}, Measurement-based Quantum Computation
(MBQC)~\cite{raussendorf2003measurement}, and more.

\smallskip

The hybrid circuit for teleportation is given in \cref{QTelcirc} and we describe the induced hybrid state evolution in 
\cref{QTel}, using (i) a pseudo-formal  description of
the state, (ii) the path-sum abstraction as provided in~\cite{amy2018towards},
(iii) the cq-state~\cite{qhlcv} representation, and (iv) the \hps{} formalism,
subject of the current article.

The costs indicated in the top row of the table express the size of the representation
for the teleportation example parametrized with the number $n$ of teleported
qubits. Despite being seemingly specific to teleportation, this example is
actually representative of the sizes of the various representations in typical
applications.
Furthermore, 
while the \hps{} representation is extensible, by linearity, to any
        state of the Hilbert space, this is not the case for the representation
        in cq-states and the corresponding column is only valid for basis
        vectors.

\footnotesize
\begin{table}

    \caption{The example of quantum teleportation extended with a derivation in
    terms of cq-states. $*$ denotes the unique element of the empty product.
    The approximate sizes are given w.r.t.~the number of
    qubits $n$.}%
    \centering
 \scalebox{0.733}{
\begin{tabular}{|c|c|c|c||c|}
\hline
Labels& State $\sim 3\cdot2^{2n}$& Path-sums~\cite{amy2018towards} $\sim 3  \cdot 2^{2n}$ &cq-state~\cite{qhlcv} $\sim 2^{2\cdot(3n)}$
&\hps (this work)
$\sim 3n $
\\
\hline
\hline
\steps{1}&$\ket{x}{\psi}\ket{0}{a}\ket{0}{b}$&$\pasuminline{0}{\ket{x}{\psi}\ket{0}{a}\ket{0}{b}}{1}_{\emptyset}$&$    * \mapsto
 \begin{psmallmatrix}
     \overline{x} &0&0&0&0&0&0&0\\
     0 &0&0&0&0&0&0&0\\
     0 &0&0&0&0&0&0&0\\
     0 &0&0&0&0&0&0&0\\
     0 &0&0&0&x&0&0&0\\
     0 &0&0&0&0&0&0&0\\
     0 &0&0&0&0&0&0&0\\
     0 &0&0&0&0&0&0&0\\
 \end{psmallmatrix}$
&$\langle\ket{x}{\psi}\rangle = \pasumnormzero{ 0}{\ket{x}{\psi}\ket{0}{a}\ket{0}{b}}_{\emptyset}$
         \\
\hline

    \steps{2}&$\frac{1}{\sqrt{2}}\begin{psmallmatrix}\ket{x}{\psi}&\ket{0}{a}&\ket{0}{b}&+ \\ \ket{x}{\psi}&\ket{1}{a}&\ket{0}{b}&\end{psmallmatrix}$
             &$\pasum{0}{\ket{x}{\psi}\ket{y_0}{a}\ket{0}{b}}{\frac 1 {\sqrt 2}}_{\{\y_0\}}$
&
$ * \mapsto \frac{1}{2}\begin{psmallmatrix}
    \overline{x} &0&\overline{x}&0&0&0&0&0\\
    0 &0&0&0&0&0&0&0\\
    \overline{x} &0&\overline{x}&0&0&0&0&0\\
    0 &0&0&0&0&0&0&0\\
    0 &0&0&0&x&0&x&0\\
    0 &0&0&0&0&0&0&0\\
    0 &0&0&0&x&0& x&0\\
    0 &0&0&0&0&0&0&0\\
\end{psmallmatrix}$
&$\pasum{0}{\ket{x}{\psi}\ket{y_0}{a}\ket{0}{b}}{\frac{1}{\sqrt{2}}}_{\{\y_0\}}$\\
\hline
\steps{3}&
$\frac{1}{\sqrt{2}}\begin{psmallmatrix}\ket{x}{\psi}&\ket{x}{a}&\ket{0}{b}&+\\ \ket{x}{\psi}&\ket{x\oplus1}{a}&\ket{1}{b}&\end{psmallmatrix}$
&$\pasum{0}{\ket{x}{\psi}\ket{y_0\oplus x}{a}\ket{y_0}{b}}{\frac 1 {\sqrt 2}}_{\{\y_0\}}$
&
$* \mapsto \frac{1}{2}\begin{psmallmatrix}
    \overline{x} &0&0&\overline{x}&0&0&0&0\\
    0 &0&0&0&0&0&0&0\\
    0 &0&0&0&0&0&0&0\\
    \overline{x} &0&0&\overline{x}&0&0&0&0\\
    0 &0&0&0&0&0&0&0\\
    0 &0&0&0&0&x&x&0\\
    0 &0&0&0&0&x&x&0\\
    0 &0&0&0&0&0&0&0\\
\end{psmallmatrix}$
&$\pasum{0}{\ket{x}{\psi}\ket{y_0\oplus x}{a}\ket{y_0}{b}}{\frac{1}{\sqrt{2}}}_{\{\y_0\}}$\\
\hline
\steps{4}&
$\frac{1}{2}\begin{psmallmatrix}
&\ket{0}{\psi}&\ket{x}{a}&\ket{0}{b}&+\\ &\ket{0}{\psi}&\ket{x\oplus1}{a}&\ket{1}{b}
&+\\
&\ket{1}{\psi}&\ket{x}{a}&\ket{0}{b}&+\\ &\ket{1}{\psi}&\ket{x\oplus1}{a}&\ket{1}{b}&
\end{psmallmatrix}$

&$\pasum{\frac{y_1x}{2}}{\ket{y_1}{\psi}\ket{y_0\oplus x}{a}\ket{y_0}{b}}{\frac
1 2}_{\{\y_0,y_1\}}$
&
$* \mapsto \frac{1}{4}\begin{psmallmatrix}
    \overline{x}&0&0&\overline{x}&\overline{x}&0&0&\overline{x}\\
    0&x&x&0&0&-x&-x&0\\
    0&x&x&0&0&-x&-x&0\\
    \overline{x}&0&0&\overline{x}&\overline{x}&0&0&\overline{x}\\
    \overline{x}&0&0&\overline{x}&\overline{x}&0&0&\overline{x}\\
    0&-x&-x&0&0&x&x&0\\
    0&-x&-x&0&0&x&x&0\\
    \overline{x}&0&0&\overline{x}&\overline{x}&0&0&\overline{x}\end{psmallmatrix}$&$\pasum{\frac{y_1x}{2}}{\ket{y_1}{\psi}\ket{y_0\oplus x}{a}\ket{y_0}{b}}{\frac{1}{2}}_{\{\y_0,y_1\}}$\\
    
\hline

\steps{5}&
$\begin{smallmatrix}
&
\ldelim\{{.7}{*}\frac{1}{2}&;&&\begin{psmallmatrix}
\ket{x}{a}&\ket{0}{b}&+\\ \ket{x\oplus1}{a}&\ket{1}{b}&
\end{psmallmatrix}&[0]_{\psi_c}\rdelim\}{.7}{*}\\
 &
\ldelim\{{.7}{*}\frac{1}{2}&;&&\begin{psmallmatrix}
\ket{x}{a}&\ket{0}{b}&+\\ \ket{x\oplus1}{a}&\ket{1}{b}&
\end{psmallmatrix}&[1]_{\psi_c}\rdelim\}{.7}{*}\\
\end{smallmatrix}$

&
$\begin{smallmatrix}
\ldelim\{{.7}{*}
    \frac{1}{2}&;&\langle&0,&\frac 1 {\sqrt 2} &,&\ket{y_0\oplus x}{a}&\ket{y_0}{b}& \rangle_{\{\y_0\}}&[0]_{\psi_c}\rdelim\}{.7}{*}\\ \ldelim\{{.7}{*}
\frac{1}{2}&;&\langle&x,&\frac 1 {\sqrt 2} &,&\ket{y_0\oplus x}{a}&\ket{y_0}{b}& \rangle_{\{\y_0\}}&[1]_{\psi_c}\rdelim\}{.7}{*}\\
\end{smallmatrix}$

&
$\begin{array}{c}
0_b \mapsto \frac 1 4 \begin{psmallmatrix}
    \overline{x} &0&0&0\\
    0 &x&x&0\\
    0 &x&x&0\\
    0 &0&0&\overline{x}\\
\end{psmallmatrix}\\
1_b \mapsto  \frac 1 4
\begin{psmallmatrix}
    \overline{x} &0&0&0\\
    0 &x&x&0\\
    0 &x&x&0\\
    0 &0&0&\overline{x}\\
\end{psmallmatrix}\end{array}$
&$\pasum{\frac{y_1x}{2}}{\ket{y_0\oplus x}{a}\ket{y_0}{b}[y_1]_{\psi_c}}{\frac{1}{2}}_{\{\y_0,y_1\}}$
\\
\hline
\steps{6}&  
$
\begin{smallmatrix}
 \ldelim\{{.7}{*}\frac{1}{4}&;&&\ket{0}{b}&[0]_{\psi_c}&[x]_{a_c}&\rdelim\}{.7}{*}\\
 \ldelim\{{.7}{*}\frac{1}{4}&;&&\ket{1}{b}&[0]_{\psi_c}&[x\oplus 1]_{a_c}&\rdelim\}{.7}{*}\\
 \ldelim\{{.7}{*}\frac{1}{4}&;&&\ket{0}{b}&[1]_{\psi_c}&[x]_{a_c}&\rdelim\}{.7}{*}\\
 \ldelim\{{.7}{*}\frac{1}{4}&;&&\ket{1}{b}&[1]_{\psi_c}&[x\oplus 1]_{a_c}&\rdelim\}{.7}{*}\\
\end{smallmatrix}
$
&
$
\begin{smallmatrix}
    \ldelim\{{.7}{*}\frac{1}{4}&;&\langle 0,&1 \cdot &\ket{0}{b}\rangle_{\emptyset}&[0]_{\psi_c}&[x]_{a_c}&\rdelim\}{.7}{*}\\
 \ldelim\{{.7}{*}\frac{1}{4}&;&\langle 0,&1 \cdot &\ket{1}{b}\rangle_{\emptyset}&[0]_{\psi_c}&[x\oplus 1]_{a_c}&\rdelim\}{.7}{*}\\
 \ldelim\{{.7}{*}\frac{1}{4}&;&\langle x,&1 \cdot &\ket{0}{b}\rangle_{\emptyset}&[1]_{\psi_c}&[x]_{a_c}&\rdelim\}{.7}{*}\\
 \ldelim\{{.7}{*}\frac{1}{4}&;&\langle x,&1 \cdot &\ket{1}{b}\rangle_{\emptyset}&[1]_{\psi_c}&[x\oplus 1]_{a_c}&\rdelim\}{.7}{*}\\
\end{smallmatrix}
$
&
$
\begin{array}{c}
    (0_b, 0_{\psi_c}) \mapsto \frac 1 4 \begin{psmallmatrix}
    \overline{x} &0\\
    0 &x\\
\end{psmallmatrix}\\
(0_b, 1_{\psi_c}) \mapsto \frac 1 4 \begin{psmallmatrix}
    \overline{x} &0\\
    0 &x\\
\end{psmallmatrix}\\
(1_b, 0_{\psi_c}) \mapsto \frac 1 4 \begin{psmallmatrix}
    x&0\\
    0&\overline{x}\\
\end{psmallmatrix}\\
(1_b, 1_{\psi_c}) \mapsto \frac 1 4 \begin{psmallmatrix}
    x&0\\
    0&\overline{x}\\
\end{psmallmatrix}\\\end{array}
$
&$\pasum{\frac{ y_1x}{2}}{\ket{y_0}{b}[y_1]_{\psi_c}[y_0\oplus x]_{a_c}}{\frac{1}{2}}_{\{\y_0,y_1\}}$\\
\hline
\steps{7}&
$
\begin{smallmatrix}
 \ldelim\{{.7}{*}\frac{1}{4}&;&\ket{x}{b}&[0]_{\psi_c}&[x]_{a_c}&\rdelim\}{.7}{*}\\
 \ldelim\{{.7}{*}\frac{1}{4}&;&\ket{x}{b}&[0]_{\psi_c}&[x\oplus 1]_{a_c}&\rdelim\}{.7}{*}\\
 \ldelim\{{.7}{*}\frac{1}{4}&;&\ket{x}{b}&[1]_{\psi_c}&[x]_{a_c}&\rdelim\}{.7}{*}\\
 \ldelim\{{.7}{*}\frac{1}{4}&;&\ket{x}{b}&[1]_{\psi_c}&[x\oplus 1]_{a_c}&\rdelim\}{.7}{*}\\
\end{smallmatrix}
$
&
$
\begin{smallmatrix}
\ldelim\{{.7}{*}\frac{1}{4}&;&\langle 0,&1 \cdot &\ket{x}{b}\rangle_{\emptyset}&[0]_{\psi_c}&[x]_{a_c}&\rdelim\}{.7}{*}\\
 \ldelim\{{.7}{*}\frac{1}{4}&;&\langle 0,&1 \cdot &\ket{x}{b}\rangle_{\emptyset}&[0]_{\psi_c}&[x\oplus 1]_{a_c}&\rdelim\}{.7}{*}\\
 \ldelim\{{.7}{*}\frac{1}{4}&;&\langle x,&1 \cdot &\ket{x}{b}\rangle_{\emptyset}&[1]_{\psi_c}&[x]_{a_c}&\rdelim\}{.7}{*}\\
 \ldelim\{{.7}{*}\frac{1}{4}&;&\langle x,&1 \cdot &\ket{x}{b}\rangle_{\emptyset}&[1]_{\psi_c}&[x\oplus 1]_{a_c}&\rdelim\}{.7}{*}\\
\end{smallmatrix}
$
&
$
\begin{array}{c}
    (0_b, 0_{\psi_c}) \mapsto \frac 1 4 \begin{psmallmatrix}
    \overline{x} &0\\
    0 &x\\
\end{psmallmatrix}\\
(0_b, 1_{\psi_c}) \mapsto \frac 1 4 \begin{psmallmatrix}
    \overline{x} &0\\
    0 &x\\
\end{psmallmatrix}\\
(1_b, 0_{\psi_c}) \mapsto \frac 1 4 \begin{psmallmatrix}
    x&0\\
    0&\overline{x}\\
\end{psmallmatrix}\\
(1_b, 1_{\psi_c}) \mapsto \frac 1 4 \begin{psmallmatrix}
    x&0\\
    0&\overline{x}\\
\end{psmallmatrix}\\\end{array}
$&$\pasum{\frac{y_1 x}{2}}{\ket{ x}{b}[y_1]_{\psi_c}[y_0\oplus x]_{a_c}}{\frac{1}{2}}_{\{\y_0,y_1\}}$
\\
\hline
\steps{8}&
$
\begin{smallmatrix}
 \ldelim\{{.7}{*}\frac{1}{4}&;&\ket{x}{b}&[0]_{\psi_c}&[x]_{a_c}&\rdelim\}{.7}{*}\\
 \ldelim\{{.7}{*}\frac{1}{4}&;&\ket{x}{b}&[0]_{\psi_c}&[x\oplus 1]_{a_c}&\rdelim\}{.7}{*}\\
 \ldelim\{{.7}{*}\frac{1}{4}&;&\ket{x}{b}&[1]_{\psi_c}&[x]_{a_c}&\rdelim\}{.7}{*}\\
 \ldelim\{{.7}{*}\frac{1}{4}&;&\ket{x}{b}&[1]_{\psi_c}&[x\oplus 1]_{a_c}&\rdelim\}{.7}{*}\\
\end{smallmatrix}
$
&
$
\begin{smallmatrix}
 \ldelim\{{.7}{*}\frac{1}{4}&;&\langle0,&1 \cdot &\ket{x}{b}\rangle_{\emptyset}&[0]_{\psi_c}&[x]_{a_c}&\rdelim\}{.7}{*}\\
 \ldelim\{{.7}{*}\frac{1}{4}&;&\langle0,&1 \cdot &\ket{x}{b}\rangle_{\emptyset}&[0]_{\psi_c}&[x\oplus 1]_{a_c}&\rdelim\}{.7}{*}\\
 \ldelim\{{.7}{*}\frac{1}{4}&;&\langle0,&1 \cdot &\ket{x}{b}\rangle_{\emptyset}&[1]_{\psi_c}&[x]_{a_c}&\rdelim\}{.7}{*}\\
 \ldelim\{{.7}{*}\frac{1}{4}&;&\langle0,&1 \cdot &\ket{x}{b}\rangle_{\emptyset}&[1]_{\psi_c}&[x\oplus 1]_{a_c}&\rdelim\}{.7}{*}\\
\end{smallmatrix}
$
&
$
\begin{array}{c}
    (0_b, 0_{\psi_c}) \mapsto \frac 1 4 \begin{psmallmatrix}
    \overline{x} &0\\
    0 &x\\
\end{psmallmatrix}\\
(0_b, 1_{\psi_c}) \mapsto \frac 1 4 \begin{psmallmatrix}
    \overline{x} &0\\
    0 &x\\
\end{psmallmatrix}\\
(1_b, 0_{\psi_c}) \mapsto \frac 1 4 \begin{psmallmatrix}
    x&0\\
    0&\overline{x}\\
\end{psmallmatrix}\\
(1_b, 1_{\psi_c}) \mapsto \frac 1 4 \begin{psmallmatrix}
    x&0\\
    0&\overline{x}\\
\end{psmallmatrix}\\\end{array}
$&$\pasum{0}{\ket{ x}{b}[y_1]_{\psi_c}[y_0\oplus x]_{a_c}}{\frac{1}{2}}_{\{\y_0,y_1\}}$\\
\hline
\hline
\steps{9}&\multicolumn{2}{c|}{}
         &
$* \mapsto \left( \begin{array}{c c}
    \overline{x} &0\\
    0 &x\\
\end{array}\right)$
&
$\begin{array}{l}
    \langle\ket{x}{b}\rangle_{\emptyset} \otimes\pasum{0}{
[y_1]_{\psi_c}[y_0\oplus x]_{a_c}
}{\frac{1}{2}}_{\{\y_0,y_1\}}    
\\
\prinducer \langle\ket{x}{b} \rangle_{\emptyset}\\
\end{array}$
\\
\hline
\end{tabular}
 }
    \label{QTel}

\end{table}

\normalsize

\myparagraph{Unitary evolution}
At step $\steps{1}$, Alice and Bob initialize each a qubit in state $0$.  Then
(step $\steps{2}$), Alice applies the Hadamard gate on her qubit $a$, 
setting it in the superposition $\ket{+}{a} = \frac{1}{\sqrt{2}} (\ket{0}{} +
\ket{1}{})$ (\cref{background}). In the path-sum column, this (branching)
superposition manifests by the introduction of a fresh free boolean
\emph{path-variable} $y_0$, encoding the two new branches $\ket{0}{a}$ and
$\ket{1}{a}$ as $\ket{y_0}{a}$.  Step $\steps{3}$ then consists of a sequence of
\cnot{} gates, interpreted as an addition modulo 2 (written $\oplus$) in the
target wire (see \cref{background} again). 
At step $\steps{4}$, another Hadamard gate further
 splits each path,  encoded by the introduction of
another boolean path variable $y_1$. So far, the column \hps{} is identical to
the standard path-sum column, as the circuit is purely quantum.

\myparagraph{Measurements and Corrections}%
\label{sec:meas_and_corr}
Next, measurement is performed on the wire $\psi$ ({Step $\steps{5}$}). This
operation introduces non-determinism, according to the Born rule
(see \cref{background}): for each possible measurement result $m \in \{0,1\}$, the
initial state of the register is projected onto the subspaces characterized by
$\ket{m}{\psi}$.  Standard path-sums~\cite{amy2018towards} cannot support this
non-determinism, so we must use an extensive description of the probability
distribution as in the first two columns of line {\steps{5}}.

Importantly, however, while the set of paths is split into these two measurement scenarios, each path is individually  preserved by the
measurement.
This is
harnessed by \hps{} which encodes the measurement as a transfer of the measured
quantum value $\ket{y_1}{\psi}$ into a classical memory cell ${[y_1]}_{\psi_c}$
to denote that this expression $y_1$ must now be used to partition paths.

{Step $\steps{6}$} then corresponds to the second measurement operation
similar to Step $\steps{5}$, and {Step $\steps{7}$} (resp. $\steps{8}$) encodes
unitary correction $\oplus$ which performs a bit flip, i.e.\ transforming the state
$\ket{x}{}$ into $\ket{x\oplus 1}{}$ (resp.\ phase shift \zgate{} transforming
state $\ket{x}{}$ into ${(-1)}^x\ket{x}{}$) on the wire $b$, controlled by
measurement result  {$a_c$} (resp. ${\psi_c}$). At that stage (Step
$\steps{8}$), in the \hps{} representation, the wire $b$ clearly holds the value
$\ket{x}{}$ showing that the qubit has indeed been transferred to Bob,
regardless of the measurement results.

\myparagraph{Distribution and Discard}
Finally, step $\steps{9}$ illustrates the usage of
\hps{} as specifications. 
It consists of two steps: first, the \hps{} is
 separated into
$\pasuminline{0}{\ket{x}{b}}{1}_\emptyset$ and $\pasum{0}{ {[y_1]}_{
\psi_c}{[y_0\oplus x]}_{ a_c} }{\frac{1}{2}}_{\{\y_0,y_1\}}$ according to the
construction of the tensor product,
then, the irrelevant part of the state is formally discarded
(see \cref{sec:rewrite} ) leaving us with the original state
$\pasum{0}{\ket{x}{b}}{1}_\emptyset$ now on the qubit $b$.

\section{Hybrid Quantum Programs: The \hqbricks{} Language}%
\label{sec:hqbricks-syntax}

Consider the hybrid imperative language \hqbricks{}. It supports unitary gates
$G \in\ \mathcal G$ on qubits $\q$, measurement, quantum and classical conditioning,
bounded for loops, and oracles $f$ applied on classical data $\cre$. The
\hqbricks{} language is given by a simply typed $\lambda$-calculus with seven
base types: \QReg, \CReg, \QBool, \CBool, \Int, \QProg{}, and \CProg, all of
which are inductive\finv{, and with $\QProg$ being a subtype of $\CProg$ corresponding to unitary programs.}

\begin{definition}[\hqbricks(\U)]%
\label{def:hqbricks}
    The grammar of the seven inductive base types is as follows:
\[
\begin{array}{lcll}   

\QReg, \q & := & x_\Q \mid \q[\I: ] \mid \q[:\I] \mid \q[\I:\J] \mid \q[\I] & \\
    \CReg, \cre, \cre_1, \cre_2 &:= & x_\C \mid \cre[\I: ] \mid \cre[:\I] \mid
    \cre[\I:\J] \mid \cre[\I]& \\
    \QBool,   \QB,  \QB_1,  \QB_2 &:= &x_{\QB} \mid 0_\QB \mid 1_\QB   \mid\q \mid  \QB_1 \oplus  \QB_2 \mid  \QB_1 \land  \QB_2 \mid  \case{\QB}{\QB_1}{\QB_2}& \\
    \CBool, \cb, \cb_1, \cb_2 &:=&  x_{\cb} \mid 0_\cb \mid 1_\cb \mid \cre \mid \cb_1 \oplus \cb_2\mid \cb_1 \land \cb_2 \mid \case{\cb}{\cb_1}{\cb_2} &\\
    \Int, \I, \J &:= & x_\I \mid\texttt{const} \mid \I + \J \mid \I * \J \mid \I^\J\mid  \card{\cre}\mid  \card{\q}&\\
    \QProg, \qprog, \qprog' &:=& \textbf{Skip } \mid \qprog ; \qprog' \mid \for{x_n^I} i j {\qprog} \mid 
            \hspace{0cm} \Ifthene \QB {\qprog} {\qprog'} \\
                            &&\mid \Apply\ \texttt{G} (\q) & \\
   \Prog, \prog, \prog' &:= & \Init\ \q\mid\qprog \mid \prog ; \prog' \mid \for{x_{\texttt{c}}^{\texttt{I}}} \I \J \prog & \\
           &      & \mid \Ifthene{\cb}
    \prog {\prog'} \mid   \textbf{Measure}(\q, \cre) \mid \cre_1 := f(\cre_2)\ & 
\end{array}
\]
where, for any type identifier $\texttt{T}$, $x_{\texttt{T}}$ is a variable of
type $\texttt{T}$ and \texttt{const} is a constant integer.  
\end{definition}

In explicit terms, \QReg{} and \CReg{} are quantum and classical registers and
\QBool{} and \CBool{} are booleans that depend on values of quantum and
classical registers, respectively. This distinction is important when it comes to
control, also explaining the need to isolate a subtype \QProg{} of \Prog{},
which corresponds to purely quantum (unitary) programs, the only
programs that can be quantum controlled. The
ternary construct $\case{a}{b}{c}$ in both boolean types is equivalent to $((
a\oplus 1)\wedge b)\oplus(a\wedge c)$.  In this article, this ternary
notation is also used in \hps{} more broadly.

\subsection{Intuition of the Syntax}
The intuition is as follows: \textbf{Skip} does nothing, $\prog ; \prog'$
performs  $\prog$ followed by $\prog'$,
The bounds
of \textbf{For} loop are parameters of the program and cannot make reference to
values in the memory; i.e.\ loops can always be unraveled after instantiation of
program parameters, but before execution.

\textbf{If } $b$ \textbf{ then } $\prog$
\textbf{ else }$\prog'$\textbf{ end } performs quantum or hybrid conditioning
when $\prog$ and $\prog'$ are unitary, and classical conditioning otherwise,
$\textbf{Measure}(\q, \cre)$ measures $\q$ storing the result in $\cre$.
\hqbricks{} can embed purely classical computation via oracles $f$ in $\cre_1 :=
f(\cre_2)$.

As for unitaries, \finv{they are applied by command \Apply \texttt{G}, where $\texttt{G} \in \U$.} In the rest of this paper, we instantiate $\U$ as the set $ Z^*
= \{H , X\}\cup {\{Z_k\}}_{k\in\mathbb{N}}$, containing the Hadamard gate, the
\texttt{not} $X$ gate and rotation gates $Z_k$ on the Z-axis of angle
$\frac{2i\pi}{2^k}$ \finv{given by matrix $\begin{psmallmatrix}
    1&0\\0&e^{\frac{2i\pi}{2^k}}
\end{psmallmatrix}$}. 

Finally, note that the syntax allows using entire registers of more than one bit/qubit as
booleans. In this case, the boolean value represented is the conjunction of the values of the bits/qubits
in the register, reflecting the typical use of multiple control in quantum circuits. 

\subsection{Comparison with State-of-the-Art Formally Verified Languages}
\finv{Quantum formal verification
solutions~\cite{cheng2025embedding,li2024qafny,qhlprover}   are often built upon
a  \textit{while} language where the   measurement is never separated from the
control that is immediately performed on it. This restriction is highly limiting
on the possible implementations (e.g., teleportation or error correction are not
supported). Interestingly,  \textsc{qhl-cv}~\cite{qhlcv} extends
this approach with classical variables. We use \textsc{qhl-cv} as an anchor for the
soundness of our semantics (\cref{theo:language-soundness}).
By explicitly presenting  \hqbricks, we aim 
to highlight: (i) the benefits from embedding in a classical host language. It
enables us to integrate pure classical computation results  in \hqbricks{} using oracles; 
and (ii) the fact that, in \hqbricks, quantum data are
never addressed directly, but only through operational commands. This provides a
\textit{correct-by-construction} guarantee over  potential violations of quantum
mechanical postulates (non-cloning theorem and unitarity) without requiring a
linear type system such as  in,
e.g.,~\cite{10.1007/11417170_26,ross2015algebraic,green2013quipper}. }

\subsection{Discussion}
\finv{ In \hqbricks{},  arbitrary  classical computations are introduced as
black-box oracles stored in classical memory cells. We envision to handle these oracles with logical function names 
with adequate axioms ("logical oracles", in a sense) expressed in standard theories over classical data (e.g., integers or bit
vectors), and then to delegate the associated reasoning to the existing classical programming verification ecosystem.} 

\finv{Note also that, as opposed
to \textsc{qhl-cv}~\cite{qhlcv}, \hqbricks does not support unbounded 
loops. Handling them requires an extension of the path-sum semantics with infinite sums over paths 
 and some additional structure for the set of path-sums, so as to properly symbolically encode invariants. 
We consider these aspects as
orthogonal to the scope of this paper and leave them for future work.}

\subsection{Type System}\label{sub:type-system}

The language is a typed $\lambda$-calculus with constructors for hybrid circuits
as (closed) terms of type Prog. The typing enforces some well-formedness
criteria and raises unitarity constraints or proof obligations on the go.
In order to address questions of unitarity under quantum control, we introduce a
subtype QProg of Prog for purely unitary programs, the only ones allowed to be
quantum controlled. 

\finv{ Note that  the quantum control rule (\myrule{\textsc{Q-if}} below) raises a syntactic  precondition (ensuring non-intersection of control and target qubits and introduced with  $\vdash^*$ ). 
Additional  syntactic conditions of well-formedness include applying gates only to allocated (initialized)
registers, not re-initializing already initialized qubits, and having the same memory allocation
profile (sets of allocated/unallocated registers) in either branch of a conditional.  
In most standard cases (including in our case studies implementation, see \cref{sec:cases}), simple syntactic checks are sufficient for verifying these conditions.
  In future advanced usage,  corner cases might still appear that will  require discharging these verifications to external SMT-solvers.
}

The main language constructors rules are as follows 
 (completed with an exhaustive
rule exposition in Appendix~\ref{type-system}).

\begin{center}
\vskip 1em    
                \AxiomC{$\Gamma \vdash p : \QProg$}
                \RightLabel{\myrule{QProg}}
                \UnaryInfC{$\Gamma \vdash p : \Prog$}
            \DisplayProof\hskip 1em
            \AxiomC{$\Gamma \vdash q : \QReg$}
                \RightLabel{\myrule{Init}}
                \UnaryInfC{$\Gamma \vdash \textbf{Init}(q) : \Prog$}
            \DisplayProof\hskip 1em
                \AxiomC{$\Gamma \vdash q : \QReg$}
                \AxiomC{$\Gamma \vdash c : \CReg$}
                \RightLabel{\myrule{Measure}}
                \BinaryInfC{$\Gamma \vdash \textbf{Measure}(q, c) : \Prog$}
            \DisplayProof\vskip 1em
                \AxiomC{$U \in \mathcal G$}
                \AxiomC{$\Gamma \vdash q : \QReg$}
                \RightLabel{\textcolor{re}{\myrule{Unitary}}\color{black}}
                \BinaryInfC{$\Gamma \vdash \Apply\ U(q) : \QProg$}
            \DisplayProof\hskip 1em
                \AxiomC{$\Gamma \vdash c_1 : \CReg$}
                \AxiomC{$\Gamma \vdash c_2 : \CReg$}
                \RightLabel{\myrule{Classical}}
                \BinaryInfC{$\Gamma \vdash c_1 := f(\lceil c_2 \rceil) : \Prog$}
 \DisplayProof\vskip 1em
                \AxiomC{$\Gamma \vdash p : \Prog$}
                \AxiomC{$\Gamma \vdash q : \Prog$}
                \RightLabel{\myrule{Seq}}
                \BinaryInfC{$\Gamma \vdash p;q : \Prog$}
            \DisplayProof\hskip 1em
            \AxiomC{$\Gamma \vdash i : \Int$}
                \AxiomC{$\Gamma \vdash j : \Int$}
                \AxiomC{$\Gamma \vdash p : \Prog$}
                \RightLabel{\myrule{For}}
                \TrinaryInfC{$\Gamma \vdash \textbf{For } x_n^I = i,j \textbf{ do } \{ p\}: \Prog$}
            \DisplayProof\vskip 1em
                \AxiomC{$\Gamma \vdash b : \CBool$}
                \AxiomC{$\Gamma \vdash p : \Prog$}
                \AxiomC{$\Gamma \vdash q : \Prog$}
                \RightLabel{\myrule{C-if}}
                \TrinaryInfC{$\Gamma \vdash \Ifthene{b}{p}{q}: \Prog$}
\DisplayProof\vskip 1em
                \AxiomC{$\Gamma \vdash^* \qreg (p) \cap\qreg(b) = \emptyset$}
                \AxiomC{$\Gamma \vdash b : \Bool$}
                \AxiomC{$\Gamma \vdash p : \QProg$\; $\Gamma \vdash q : \QProg$}
                \RightLabel{\myrule{Q-if}}
                \TrinaryInfC{$\Gamma \vdash \Ifthene{b}{p}{q} :\QProg$}
\DisplayProof\vskip 1em
\end{center}

\smallskip
  
The usual $\lambda$-calculus typing rules are also included. Note also that,
since we do not address quantum data directly, but only via operational commands, we do not need linear types to ensure unitarity and our type
system admits weakening and contraction.
\small
  \begin{center}      
    \AxiomC{}
    \RightLabel{\myrule{Ax}}
    \UnaryInfC{$\Gamma, x : T \vdash x : T$}
\DisplayProof\hskip 1em
    \AxiomC{$\Gamma \vdash \Delta$}
    \RightLabel{\myrule{Weak}}
    \UnaryInfC{$\Gamma, \Gamma' \vdash \Delta, \Delta'$}
\DisplayProof\hskip 1em
    \AxiomC{$\Gamma, x : T \vdash t : U$}
    \RightLabel{\textcolor{re}{$\lambda $}\color{black}}
    \UnaryInfC{$\Gamma \vdash \lambda x. t : T \to U$}
\DisplayProof\hskip 1em
    \AxiomC{$\Gamma \vdash f : T \to U$}
    \AxiomC{$\Gamma \vdash t : T$}
    \RightLabel{\myrule{@}}
    \BinaryInfC{$\Gamma \vdash f t : U$}
\DisplayProof%
\end{center}
\normalsize
\subsection{Standard Semantics}
$\hqbricks$ has well-established semantics in terms of density
operators~\cite{qhlprover, coqq, Selinger2004} (without classical data) or
cq-states~\cite{qhlcv} (with classical data).  In the cq-state semantics, a
hybrid memory is described as a probability distribution over  pairs $(\sigma,
\rho)$ of a classical state $\sigma$ and a quantum state $\rho$ (described as a
density operator).\footnote{Technically, it is a map $\sigma \mapsto \mathbb
    P(\sigma, \rho) \rho$ into \emph{partial} density operators with norm $\leq
1$} A quantum input $\ket{x}{\reg}$ is represented as the cq-state $(-, \ketbra{x}{x})$,
where $-$ is the empty classical state. A denotational semantics can then be
defined as such:
\begin{align*}
    \llbracket U \rrbracket \langle \sigma, \rho \rangle &= \langle \sigma, U
    \rho U^\dagger \rangle \\
    \llbracket \textbf{Measure}(\q, \cre) \rrbracket \langle \sigma, \rho
    \rangle &= \langle \sigma[\cre := 0], |0\rangle_\q \langle0|_\q \rho
    |0\rangle_\q \langle0|_\q \rangle + \langle \sigma[\cre := 1],
    |1\rangle_\q \langle1|_\q \rho |1\rangle_\q \langle1|_\q \rangle \\
    \llbracket \cre_1 := f(\cre_2) \rrbracket \langle \sigma, \rho \rangle &=
    \langle \sigma[\cre_1 := f(\sigma(\cre_2))], \rho \rangle \\
    \llbracket \Ifthene{\cb}{\prog}{\prog'} \rrbracket \langle \sigma, \rho
    \rangle &= \left\{\begin{array}{ll}
        \llbracket \prog \rrbracket \langle \sigma, \rho \rangle & \text{if }
        \sigma \models \cb \\
        \llbracket \prog' \rrbracket \langle \sigma, \rho \rangle & \text{otherwise}
    \end{array}\right.
    \end{align*}

This semantics is lacking compactness and symbolism (\cref{intro}), so we
instead provide alternative semantics based on path-sums that surmounts these
limitations, and we show that it is sound with respect to these usual
cq-state/density operator semantics.

 \section{Hybrid Path Sums (HPS)}%
\label{hps}
\subsection{Formal Definition}
Hybrid path-sums extend unbalanced path-sums to the hybrid case. They manipulate
hybrid quantum-classical memories, that we first introduce:

\begin{definition}[Hybrid memory]
    A \emph{quantum memory} is a finite map from a set of \emph{quantum addresses}   to  symbolic Boolean
    expressions, while a \emph{classical memory} is a list of finite maps from a set of \emph{classical addresses} to  symbolic Boolean
    expressions, storing the history of classical value computation along the execution. We often refer to  a hybrid memory (a
    pair of quantum and classical memories) simply as a \emph{memory}. 
    \end{definition}

    Only the head of the list in a classical memory is accessible by programs while the
    remainder merely contains past values necessary to track the structure
    of world splittings. 
    One can visualize a classical memory as a two-dimensional table featuring
    addresses (space) on one axis and time on the other. Due to initializations
    and Discard operations (\cref{sec:discarding}), some cells of the table are
    `holes' denoted $-$.  If a pair of classical memories $C_1$ and
    $C_2$, seen as partial maps from pairs $(\text{address}, \text{time})$ to
    boolean expressions, have disjoint domains, they can be combined in parallel
    as $C_1 \otimes C_2$. 

\vspace{1em}\noindent
   \begin{minipage}{.51\textwidth}
   \begin{runex} Aside is the evolution of classical memory  in    
    teleportation  (\cref{QTel}).
   This memory stores the result of measurements at steps \steps{5} (time $t_1$)
   and \steps{6} (time $t_2$) in addresses $\bm \psi$ and $\bm a$ respectively.
   We note that pasts (non-head entries) of classical memories were omitted from
  \cref{QTel} since no overwriting of classical addresses occurs.
\end{runex}
   \end{minipage}
\begin{minipage}{.45\textwidth}
\begin{table}[H]
    \caption{Evolution of the classical memory.}
     $\begin{array}{ccc|c}
    \hline
    \textbf{Addresses} & \bm{\psi} & \bm{a}&\textbf{Time step}\\
    \hline
    \hline
    &y_1& y_0 \oplus x&t_2: 6,7\\
    &y_1& \_&t_1:5\\
    &\_& \_&t_0: 1,2,3,4
\end{array}$
\vspace{0.5em}
\end{table}
\end{minipage}

\begin{definition}[HPS]%
\label{def:hps}
    A hybrid path-sum (\hps) $\I \mapsto \pasum{\P}{\O}{\N}_{\support}$ is given
    by:
    \begin{itemize}
        \item An \emph{input signature}: a hybrid memory containing boolean
            \emph{input variables} $\{x_i\}$ or constants.
        \item A \emph{path support} $\support$: a set of boolean \emph{path
            variables} $\{y_j\}$.
        \item An \emph{output signature} $\O$: a hybrid memory
            containing boolean polynomials in $x_i$ and $y_j$.
        \item A \emph{phase polynomial} $\P \in \mathbb D[x_i, y_j]$ in $x_i$
            and $y_j$ over the ring of dyadics $\mathbb D=\{\frac {n}{2^m} : n,m
            \in \mathbb Z\}$.
        \item A \emph{scalar function} $\N$ over $x_i$ and 
            $y_j$ given by the following \emph{constructible
            numbers}~\cite{descartes1987discours} grammar:
        $\begin{array}{rclc}
         \textsc{Scalar},\N&:=& \frac{\I_1}{\I_2} \mid \sqrt{\N}\ \mid \cos\ \An
         \mid
        \sin\ \An \mid \N_1 +_r \N_2 \mid - r \mid \N_1 *_r \N_2
        \mid \frac{1}{\N} \mid x_i \mid y_j &
        \\
         \textsc{Angle},\An&:=&  \frac{2\pi\I_1}{2^{\I_2}} \mid \arccos\ \N\mid \arcsin\ \N \mid \I *_a \An\mid \B *_a \An \mid \An_1 +_a \An_2 \mid - \An&
        \\
        \end{array}
        $
    \end{itemize}
\end{definition}

\begin{notation}
A \emph{Register} $\texttt{r}$ is an array of indexed memory addresses
$\texttt{r}[i]$. $|x\rangle_{\qu,i}$ (resp.\ ${[x]}_{\texttt{c},i}$) is a quantum
(resp.\ classical) data cell with \emph{address} $(\qu[i])$ (resp.\
$(\texttt{c}[i])$), \emph{index} $i$, and \emph{containing} $x_0$. To address a vector of
several variables, we use a capital font: $X$ designates a vector of input
variables while $Y$ designates a vector of path variables. $\get \reg_i$ is
the expression contained at $\reg[i]$.. In the case of classical register, only
the topmost value is used.
The quantum and classical parts of the output $\O$ are
denoted by $\O_{\Qu}$ and $\O_{\Cl}$, respectively.  The height of $\O_{\Cl}$ is
its \emph{age} and is also the \emph{age} of the \hps{}.  A fully
instantiated $\O_{\Cl}$ is called a \emph{history} and denoted $\eta$. When
$\reg$ is of size 1, we write $\reg = \reg[0]$. Finally, $i$ and $\support$ are
almost always implicit.
\end{notation}

\subsection{Probability Derivation and Concretizations}%
\label{probadefs}
In a hybrid path sum $\h = \pasum{\P}{\F}{\N}_{\support}$,  each possible (history) $\eta$ for $\F_{\text{cl}}$ 
determines a world where $\eta$ describes the past evolution of the classical memory. Each such
world is associated to a state 
of the quantum registers. We can therefore define the vector-map of an
\hps{} as the following. 
    
\begin{definition}[Concretization I:\ Vector map of an \hps] Let $\h =
    \pasum{\P}{\F}{\N}_{\support}$, then:
    \[
    \xi : \eta \mapsto \sum_{
 \vec{y}\in2^{\support},\F_{\text{cl}}(\vec{y}) = \eta        
    } \N(\vec{y}) \cdot e^{2 \pi i \P(\vec{y})} \ket{\F_{\text{qu}}}{}
    \]
\end{definition}

\begin{runex}
At step 6 of \cref{QTel}, we have $PS_6 =\pasum{\frac{
    y_1x}{2}}{\ket{y_0}{b}{[y_1]}_{\bm \psi_c}{[y_0\oplus x]}_{\bm
a\finv{_c}}}{\frac{1}{2}}_{\{\y_0,y_1\}}$ 
\[  
    \xi(PS_6): \eta \in {\{0,1\}}^{\bm{\psi\finv{_c},a_c}} \mapsto \sum_{\begin{smallmatrix}
  y_0 = \eta(\bm a_c) \oplus 1\\
  y_1 = \eta(\bm \psi_c)
  \end{smallmatrix}}\frac{1}{2}e^{\pi i (y_1 x)}\ket{y_0}{b} = 
\frac{{(-1)}^{x\eta(\bm \psi_c)}}{2}\ket{\eta(\bm a_c) \oplus 1}{b}   \]  

\end{runex}

This map anchors our \hps{} representation with concrete vectors of the
Hilbert space and is essential in discussing and proving the soundness of
both the equational theory (Section~\ref{sec:rewrite}) and symbolic
semantics (Section~\ref{sec:operational_semantics}), and in defining
semantics assertions over \hqbricks{} programs
(Section~\ref{sec:reasoning-about-programs}). However, in practice, worlds
that differ only by a global phase are indistinguishable, so we also
introduce $ \tilde \xi(\h) : \eta \mapsto \overline{\xi(\h)(\eta)}$, the
equivalence class of $\xi(\h)(\eta)$ modulo global phase. 

Each history $\eta$ is associated to a  present value $\get{\eta}$ of the
classical memory (top row) which is reached with probability $
\label{proba_def_eq} \texttt{proba}(\h)(c) = \sum_{\eta \mid \get{\eta} =c}
|\xi(\h)(\eta)|^2 $. More generally, the probability of a predicate
$\varphi$ over the classical registers is $\texttt{proba}(\h)(\varphi) =
\sum_{\eta \mid \get\eta\vdash\varphi} |\xi(\h)(\eta)|^2$. The \emph{norm} of
$\h$ can then be defined as $|\h| = \sqrt{\sum_{\eta} |\xi(\h)(\eta)|^2}$.

To discuss the soundness of our symbolic execution with respect to the
cq-state~\cite{qhlcv} semantics, we also provide another concretization in terms
of cq-states.

\begin{definition}[Concretization II:\ Density operator map of an \hps]%
\label{def:density_maps} 
Let
    $\h = \pasum{\P}{\F}{\N}_{\support}$ Then:
    \[
    \DO(\h) : c \mapsto
\sum_{\begin{smallmatrix}
 \vec{y_1},\vec{y_2}\in2^\support, \F_{\text{cl}}(\vec{y_1}) = \F_{\text{cl}}(\vec{y_2}), 
  \get{\F_{\text{cl}}(\vec{y_1})} = c
    \end{smallmatrix}} \N(\vec{y_1})\N(\vec{y_2}) \cdot e^{2 \pi i
(\P(\vec{y_1})-\P(\vec{y_2}))} \ketbra{\F_{\text{qu}}(\vec{y_1})}{\F_{\text{qu}}(\vec{y_2})}
\]
\end{definition}
This definition is equivalent to $\DO(\h)(c) = \sum_{\get{\eta} = c} \ketbra
{\xi(\h)(\eta)} {\xi(\h)(\eta)}$, so $\DO(\h)$ is definable from $\xi(\h)$ by a
non-invertible loss of past classical information. This equivalence also
highlights the following equality, holding for an \hps{} \h and history $\eta$:
$\texttt{proba}(\h)(c) = \mathrm{tr}(\DO(\h)(c))$.

\subsection{Additional Features of \hps{} and Unbalancement Supplies.}%
\label{sec:hps_main_features}
In~\cite{amy2023complete}, M. Amy extends the original path-sum
framework~\cite{amy2018towards} to \emph{unbalanced} path-sums, with norms
varying along paths, and describes a complete equational theory for them, at the
expense of a potentially exponential size  of  terms. This is not our interest
here; instead, we focus on keeping \hps{} compact and practical for symbolic
execution and specification.

Hybrid path-sums eventually represent vectors in a vector space, so we would
wish to have a vector space structure over them. This is made possible by the
unbalancement feature, and grants us scalar multiplication and addition of \hps,
but also tensor products defined as follows.
\begin{align*}
    r \h_1 \psadd \h_2 &:=
    \pasum{\case{y_f}{\P_1}{\P_2}}{\case{y_f}{\F_1}{\F_2}}{\case{y_f}{\Pi(\support_2
        \setminus \support_1)r n_1}{\Pi(\support_1 \setminus
    \support_2)n_2}}_{\support_1 \cup \support_2 \cup \{y_f\}}\footnote{\text{Where
    $y_f$ is a  fresh path variable in  $\h_1$ and $\h_2$ and $r$ a
    constructible real number}.}\label{eq:def-psadd} \\
    \h_1 \pstens \h_2 &:= \pasum{\P_1+\P_2}{\F_1\cup\F_2}{\N_1*\N_2}_{\support_1\cup\support_2}
\end{align*}

\begin{theorem}
    Given $\h_1$ and $\h_2$ two \hps{} and $a$ a constructible real number, we
    have:
    \[
        \xi(a\h_1 \psadd \h_1) = a\xi(\h_1) + \xi(\h_2)
        \quad \text{and} \quad
        \xi(\h_1 \otimes \h_2) = \xi(\h_1) \otimes \xi(\h_2)
    \]
\end{theorem}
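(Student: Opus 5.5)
Both identities follow by unfolding the definition of the vector map $\xi$ (Concretization I) and reorganising the finite sum over path assignments $\vec y \in 2^{\support}$. Two conventions need fixing first. The statement's $\xi(a\h_1 \psadd \h_1)$ is read as $\xi(a\h_1 \psadd \h_2)$. Sums of vector maps are taken pointwise in the history $\eta$. Throughout, the input variables $X$ are treated as fixed parameters.

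For $\psadd$, I would take $\h = a\h_1 \psadd \h_2$ with support $\support = \support_1 \cup \support_2 \cup \{y_f\}$ and split the sum defining $\xi(\h)(\eta)$ according to the value of the fresh variable $y_f$. On the branch $y_f = 0$, the ternary construct $\case{y_f}{\cdot}{\cdot}$ selects $\P_1$, $\F_1$ and the scalar $\Pi(\support_2\setminus\support_1)\, a\, \N_1$. Since $\P_1$, $\F_1$ and $\N_1$ do not mention the variables in $\support_2\setminus\support_1$, the sum over those variables is a sum of $2^{|\support_2\setminus\support_1|}$ identical terms. The condition $\F_{\text{cl}}(\vec y)=\eta$ also reduces to $\F_{1,\text{cl}}(\vec y)=\eta$, which does not involve them. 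The role of the compensating factor $\Pi(\support_2\setminus\support_1)$ is to cancel this multiplicity, so I would read it as $2^{-|\support_2\setminus\support_1|}$. The $y_f=0$ branch then collapses to $a\,\xi(\h_1)(\eta)$, and symmetrically the $y_f=1$ branch collapses to $\xi(\h_2)(\eta)$. I would also check that the phase on each branch is exactly $e^{2\pi i \P_j}$. This holds because the selected $\P_j$ is evaluated on its own variables and $y_f$ does not appear in it.

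For $\otimes$, I would use $\support_1 \cap \support_2 = \emptyset$ (by renaming path variables if necessary) and the disjointness of the memories, so that $\F_1 \cup \F_2$ is well defined as $\F_1 \otimes \F_2$ in the sense of classical memories combined in parallel. A history $\eta$ of the combined memory then splits uniquely as $\eta_1 \otimes \eta_2$, and the constraint $\F_{\text{cl}}(\vec y)=\eta$ factors as $\F_{1,\text{cl}}(\vec y_1)=\eta_1 \wedge \F_{2,\text{cl}}(\vec y_2)=\eta_2$. The summand factorises as $\N_1\N_2\, e^{2\pi i \P_1} e^{2\pi i \P_2} \ket{\F_{1,\text{qu}}}{}\otimes\ket{\F_{2,\text{qu}}}{}$. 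Fubini on the finite product index set then gives $\xi(\h_1)(\eta_1)\otimes\xi(\h_2)(\eta_2)$, which is the intended meaning of $\xi(\h_1)\otimes\xi(\h_2)$ evaluated at $\eta$.

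The main obstacle is the $\psadd$ case, specifically the bookkeeping of histories. The histories of $\h_1$ and $\h_2$ may have different ages or domains, so the combined output $\case{y_f}{\F_1}{\F_2}$ must be read with a common shape, padding with holes $-$ as needed. I would make explicit that $\xi(\h_j)(\eta)$ is zero for any $\eta$ outside the range of $\F_{j,\text{cl}}$, so that the pointwise sum is well defined on the union of history spaces. The normalisation-factor cancellation is the other point to state carefully, but it is a straightforward count.
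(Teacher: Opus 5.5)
Your proposal is correct and follows the paper's route. The tensor identity is exactly the appendix lemma on correctness of $\otimes$: unfold $\xi$, use disjoint supports and memories to split $\eta=\eta_1\otimes\eta_2$, and factor the finite sum. The $\psadd$ identity is the same direct unfolding, which the paper treats as immediate from the definition.

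One small correction: in the paper $\Pi(\support_2\setminus\support_1)$ is the product of those boolean path variables, the same $\Pi$ as in \myrule{Filter}, not a $2^{-k}$ weight. It therefore cancels the multiplicity by keeping only the all-ones assignment. Since $\sum_{z\in\{0,1\}^k} z_1\cdots z_k = 1$, your collapse of each branch goes through unchanged.
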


Decomposition over $\psadd$ and $\otimes$ corresponds to case analysis and
subsystem analysis respectively, and as such allow local reasoning both in
state space and in physical space. This is essential in practice, especially
under complex control flows.

\begin{runex}
    The path-sum at step 7 of \cref{QTel} can be decomposed over $\otimes$, with
    this decomposition constituting step 8 of the same table.
    \[
        \pasum{0}{\ket{ x}{b}{[y_1]}_{\bm \psi}{[y_0\oplus
x]}_{\bm a}}{\frac{1}{2}}_{\{\y_0,y_1\}} =  \langle\ket{x}{b}\rangle_{\emptyset} \otimes\pasum{0}{
    {[y_1]}_{\bm \psi}{[y_0\oplus x]}_{\bm a}
}{\frac{1}{2}}_{\{\y_0,y_1\}}
    \]
\end{runex}

The ability to perform case analysis over paths is critical for quantum control,
a notorious blind spot of most prior attempts (\cref{sec:sota}). In fact, once
an \hps{} $\h$ is split into $\h_b \psadd \h_{\lnot b}$ with $\h_b$ (resp.
$\h_{\lnot b}$) satisfying (resp.\ not satisfying) a quantum or classical boolean
condition $b$, an $\ifthene{b}{\prog_1}{\program_2}$ can be simply executed
by executing $\prog_1$ over $\h_b$ and $\program_2$ over $\h_{\lnot b}$
and recombining as $\sem{\prog_1}{}(\h_b) \psadd \sem{\program_2}{}(\h_{\lnot
b})$ (\cref{sec:operational_semantics}).

 \section{Rewriting Rules for \hps{}}%
\label{sec:rewrite}

We equip \hps{} with a rewrite system enabling reasoning about equivalence,
simplification, and specification of \hps{}. Concretely, we provide, an
equational theory $\prequiv$ for simplifying quantum interference patterns,
$\equiv$ which also identifies \hps{} up to undetectable global phase, and a
preorder $\prinduce$, compatible with $\equiv$ to capture lossy transformations
such as discarding subsystems.

\myparagraph{Path-Variable Reductions}%
\label{sec:path-variable-reductions}
A first goal of simplification is to reduce  the number of path-variables.
This is achieved by the rules (\myrule{HH}, \myrule{PB}, \myrule{Filter}, and
\myrule{CV}) below, which extend those of~\cite{amy2018towards, Vilmart2020SOP}.
In these rules, $f$ is a boolean expression over the input and path
variables, $\lift f$ is its casting as an integer expression where, in
particular, $\lift{f_1 \oplus f_2} = \lift{f_1} + \lift{f_2} - 2
\lift{f_1}\,\lift{f_2}$, $\Vec y$ is a vector of boolean variables of size $\#
\Vec y$, $\Pi \Vec y$ is the product $y_1 \cdots y_{\# {\Vec y}}$, and finally,
$\mathfrak S({\{0,1\}}^{n})$ is the set of bijections over $n$-bit strings,
expressed as boolean formulae over the bits.  The Phase-Bisector \myrule{PB}
rule computes the superposition of paths varying only by a phase, using Euler's
formula $1 + e^{2\pi i\bP_2} = 2 (\cos \pi \bP_2) e^{2 \pi i ({\bP_2} / 2)}$,
\myrule{Filter} eliminates null-amplitude paths, and \myrule{CV} allows changing
path-variables, i.e.\ the indexing of the paths.  These three rules generalize
[HH], [Elim] and [$\omega$] from~\cite{amy2018towards}.  Although it is
derivable from the other rules, we also  explicitly introduce the \myrule{HH}
rule from~\cite{amy2018towards} below, since it is used and referred to in
\cref{sec:cases}.

\smallskip
\begin{prooftree}
    \AxiomC{$\Vec{y_0}\cap\bsupport= \emptyset$}
    \AxiomC{$\Vec{y_1}\cap\pvar(f,\bF_{\Cl})=\emptyset$}
    \RightLabel{(\myrule{HH})}
    \BinaryInfC{$\pasum{\bP+\frac{\Vec{y_0}\cdot (\Vec{y_1} +
        \lift{f})}{2} }{\bF}{\bN}_{\bsupport
        \cup\Vec{y_0}\cup\Vec{y_1}}\prequiv
        \pasum{\bP}{\bF}{{2^{\#\Vec{y_0}}}\bN}_{\bsupport}[\Vec{y_1}\leftarrow
        f]$}
    \DisplayProof\vskip 1em
    \AxiomC{$y \notin \bsupport$}
    \AxiomC{$y \notin \bP_1, \bP_2, \bN, \bF$}
    \RightLabel{(\myrule{PhaseBisector -PB})}
    \BinaryInfC{$\pasum{\bP_1+y \bP_2}{\bF}{\bN}_{\bsupport\cup\{y\}}\prequiv
    \pasum{\bP+\frac {\bP_2} {2}}{\bF}{(2\cos \pi \bP_2)\bN}_{\bsupport}$}
 
 \DisplayProof\vskip 1em
\AxiomC{}
\RightLabel{(\myrule{Filter})}
\UnaryInfC{$\pasum{\bP}{\bF}{(\Pi \Vec{y})\bN}_{\bsupport\cup \Vec{y}}\prequiv \pasum{\bP}{\bF}{\bN}_{\bsupport}[\Vec{y} \leftarrow 1]$}
\DisplayProof\hskip 1em
\AxiomC{$\sigma \in \mathfrak S({\{0,1\}}^{\#(\Vec y)})$ bijective}
\RightLabel{(\myrule{CV})}
 \UnaryInfC{$\bh\prequiv \bh[\Vec{y}\leftarrow \sigma(\Vec{y})]$}
\end{prooftree}

\myparagraph{World-Relative Phase Equivalence}%
\label{sec:world-relative-phase}
The generalized global phase elimination rule introduces a modulo-phase $\equiv$
between \hps{} equivalence:  a phase factor that is uniform across each given
world of a \hps{} is physically undetectable. We write $\bP_2 \in \mathbb D[c
\mid c\in \get{\Cl}]$ to indicate that $\bP_2$ is a polynomial over the dyadics
in the boolean expressions appearing in the classical memory.
\begin{prooftree}
    \AxiomC{$\bh\prequiv \pasum{\bP_1 + \bP_2}{\bF}{\bN}_{\bsupport}$}
    \AxiomC{ $\bP_2 \in \mathbb D[c \mid c\in \get{\Cl}]$}
    \RightLabel{(\myrule{Phase-Elimination - PE})}
    \BinaryInfC{$\bh\equiv \pasum{\bP_1 }{\bF}{\bN}_{\bsupport}$}
\end{prooftree}

\myparagraph{Local Reasoning}%
\label{sec:local-rewriting}
We may also perform rewriting locally after a case analysis or a subspace
analysis. For this, we introduce the case analysis rule \myrule{Split} and the
subspace factorization rule \myrule{FD}, as well as compatibility rules for
these decompositions (\myrule{Plus} and \myrule{Times}). To these is added a
whole set of algebraic properties of $\psadd$ and $\otimes$ replicating the
structure of the underlying vector space (e.g.\ commutativity, distributivity,
etc.)

\begin{prooftree}
    \AxiomC{$\pvar(\bP_1,\bF_1,\bN_1, \bsupport_1) \cap\pvar(\bP_2,\bF_2,\bN_2,
    \bsupport_2) = \emptyset$}
    \AxiomC{$\oad_1\cap \oad_2 = \emptyset$}
    \RightLabel{(\myrule{Factor./Distr.-FD})}
    \BinaryInfC{$\pasum{\bP_1}{\bF_1}{\bN_1}_{\bsupport_1} \pstens
        \pasum{\bP_2}{\bF_2}{\bN_2}_{\bsupport_2} \prequiv
        \pasum{\bP_1+\bP_2}{\bF_1\cup\bF_2}{\bN_1*\bN_2}_{\bsupport_1\cup\bsupport_2}$}
    \DisplayProof\vskip 1em
    \AxiomC{$y \in \pvar(\h)$}
    \RightLabel{(\myrule{Split})}
    \UnaryInfC{$\h \prequiv \h[y:=0] \psadd \h[y:=1]$}
    \DisplayProof\hskip 1em
    \AxiomC{$\bh_1 \prequiv \bh_2$}
    \RightLabel{(\myrule{Plus})}
    \UnaryInfC{$\bh \psadd \bh_1 \prequiv \bh \psadd \bh_2$}
    \DisplayProof\hskip 1em
    \AxiomC{$\bh_1 \prequiv \bh_2$}
    \RightLabel{(\myrule{Times})}
    \UnaryInfC{$\bh \otimes \bh_1 \prequiv \bh \otimes \bh_2$}
\end{prooftree}
\smallskip

\myparagraph{Discarding and \hps{} Refinements}%
\label{sec:discarding}
When the state of a composite hybrid system $AB$ is separable (not entangled), the
\hps{} representing it can be factored as $\h_{AB} = \h_A \otimes \h_B$ using
(\myrule{FD}). Under such conditions, it is meaningful to \emph{discard} a
subsystem (\myrule{Disc.}) and obtain a partial description of the state
instead. We obtain a preorder relation which we call \emph{\hps{} refinement}
and denote $\prinduce$. 

\begin{prooftree}
\AxiomC{$\norm{\bh_1}= 1$}
\AxiomC{$\bh_1 \cap X = \emptyset$}
\RightLabel{(\myrule{Disc.})}
 \BinaryInfC{$\bh_1\otimes\bh_2 \prinduce \bh_2$}
   \DisplayProof\vskip 1em
\AxiomC{$\bh_1 \prinduce \bh_2$}
\AxiomC{$\bh_2 \prinduce \bh_3$}
\RightLabel{(\myrule{ \prinduce{}-Trans.})}
\BinaryInfC{$\bh_1 \prinduce \bh_3$}
  \DisplayProof\hskip 1em
\AxiomC{$\bh \prequiv \bh'$}
 \RightLabel{(\myrule{\prequiv\prinduce-Comp.})}
\UnaryInfC{  $\bh \prinduce \bh'$}
  \DisplayProof\vskip 1em
\AxiomC{$\bh_1 \prinduce\bh_2$}
 \RightLabel{(\myrule{\prinduce\otimes-Mon.})}
\UnaryInfC{  $\bh_1 \otimes \bh_3 \prinduce \bh_2 \otimes \bh_3$}
  \DisplayProof\hskip 1em
  \AxiomC{$\bh_1 \prinduce\bh_2$}
 \RightLabel{$\left(\myrule{\prinduce \sem{\cdot}{} - Mon.}\right)$}
\UnaryInfC{  $\sem{P}{}(\bh_1) \prinduce
\sem{P}{}(\bh_2)$}
\end{prooftree}

This subsystem discard rule has important implications on formal verification as
it provides a satisfaction relation $\prinducer$. 

\begin{theorem}[Soundness of \hps{} reductions]%
\label{theorem:sound-rewrite-gen}%
Let $\h$ and $\h'$ be two \hps{}, then,
\begin{itemize}
    \item If $\bh \prequiv \bh'$ then $\xi(\bh) =
        \xi(\bh')$
    \item If $\bh \equiv \bh'$ then $\tilde{\xi}(\bh)
        = \tilde{\xi}(\bh')$
    \item If $\bh \prinduce \bh'$ then there is $\bh''$ such that $\bh \prequiv
        \bh'\otimes \bh''$ with $|\bh''| = 1$ and $\bh$ containing no input
        variables $x$.
    \end{itemize}
\end{theorem}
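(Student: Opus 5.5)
We argue by induction on the derivation of $\bh \prequiv \bh'$, $\bh \equiv \bh'$, or $\bh \prinduce \bh'$. For each axiom we check the claimed invariant directly, and for each structural rule we check that the invariant is preserved. Throughout, the key tool is the pointwise formula for the vector map: for every history $\eta$, $\xi(\bh)(\eta)$ is a sum over those path assignments $\vec y$ with $\F_{\Cl}(\vec y)=\eta$. A rule that only reorganises this sum, by reindexing, summing out a variable, or deleting zero terms, preserves $\xi$ world by world, provided the world partition is not altered.

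\textbf{First item ($\prequiv$ preserves $\xi$).} Reflexivity, symmetry and transitivity are immediate.
\begin{itemize}
\item[(CV)] A bijection $\sigma$ on the path variables induces a bijection on each fibre $\{\vec y : \F_{\Cl}(\vec y)=\eta\}$, because $\F_{\Cl}$ is substituted along with everything else.
\item[(Filter)] Assignments with some $y\in\vec y$ equal to $0$ contribute zero amplitude, and the remaining sum is the substituted term.
\item[(PB)] $y$ appears neither in $\bF$ nor in $\bN$, so on each fibre we sum over $y$ inside and apply Euler's identity $1+e^{2\pi i\bP_2}=2\cos(\pi\bP_2)e^{\pi i\bP_2}$.
\item[(HH)] Summing over $\vec y_0$ gives $\prod_k \sum_{y_{0,k}} (-1)^{y_{0,k}(y_{1,k}+\lift f_k)}$. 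This equals $2^{\#\vec y_0}$ when $\vec y_1=f$ and $0$ otherwise, which collapses the $\vec y_1$-sum. The side condition that $\vec y_1$ does not occur in $f$ or in $\bF_{\Cl}$ is exactly what guarantees that the substitution $\vec y_1\leftarrow f$ respects the fibres of $\F_{\Cl}$.
\item[(Split)] Choosing the fresh variable $y_f=y$ makes the definition of $\psadd$ reproduce $\bh$ up to CV. Alternatively, it follows from the earlier theorem on $\xi$ of $\psadd$, which gives $\xi(\bh)=\xi(\bh[y:=0])+\xi(\bh[y:=1])$ by splitting each fibre.
\item[(FD)] This is the definition of $\otimes$ combined with the earlier theorem $\xi(\bh_1\otimes\bh_2)=\xi(\bh_1)\otimes\xi(\bh_2)$. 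Disjointness of path variables and addresses makes the fibres product sets.
\item[(Plus), (Times)] These follow from that same theorem, since $\xi$ of a sum or tensor depends only on $\xi$ of the components. The algebraic vector-space laws follow in the same way.
\end{itemize}

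\textbf{Second item.} $\equiv$ is generated by $\prequiv$ together with (PE), so by the first item it suffices to treat (PE). If $\bP_2$ is a polynomial in the expressions of $\get{\Cl}$, then on the fibre of $\eta$ it takes the constant value $\bP_2(\get\eta)$. Hence $\xi(\bh)(\eta)=e^{2\pi i\bP_2(\get\eta)}\xi(\bh')(\eta)$, and the two vectors agree modulo global phase, world by world. One must check that $\tilde\xi$ is understood pointwise in $\eta$; then closure under the structural rules transfers.

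\textbf{Third item.} Here the claimed witness form, $\bh\prequiv\bh'\otimes\bh''$ with $|\bh''|=1$ and no input variables, is an invariant, and we prove it is closed under every $\prinduce$ rule.
\begin{itemize}
\item[(Disc.)] This gives the form directly, with $\bh''=\bh_1$ after commuting the tensor.
\item[($\prequiv\prinduce$-Comp.)] Take $\bh''$ to be the trivial unit \hps $\langle\rangle_\emptyset$.
\item[(Trans.)] From $\bh_1\prequiv\bh_2\otimes\bh''$ and $\bh_2\prequiv\bh_3\otimes\bh'''$ we get $\bh_1\prequiv\bh_3\otimes(\bh'''\otimes\bh'')$, using (Times) and associativity. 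The norm is multiplicative under $\otimes$ by the earlier theorem, and input-freeness is preserved.
\item[($\otimes$-Mon.)] Use commutativity and associativity of $\otimes$.
\end{itemize}
The main obstacle is monotonicity under program semantics, $\sem{P}{}$-Mon. It requires knowing that $\sem{P}{}(\bh_2\otimes\bh'')\prequiv\sem{P}{}(\bh_2)\otimes\bh''$ whenever $P$ does not touch the addresses of $\bh''$. Proving this needs a frame lemma for the forward semantics of Section~\ref{sec:operational_semantics}: executing $P$ acts only on the addresses it references. We would argue that lemma by induction on $P$, and implicitly restrict the rule to such $P$, or rename the discarded addresses to fresh ones. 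This is the step we expect to need the most care, including the bookkeeping of the classical history rows added by age increments.
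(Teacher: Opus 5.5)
\textbf{First two items: a different route.} Your treatment of these is sound but not the paper's. The paper checks semantically only (Split), a scalar-extraction rule and a phase-extraction rule, each a one-line manipulation of the sum defining $\xi$. It gets the vector-space and tensor laws, (Plus), (Times) and (FD) from linearity of $\xi$ and $\xi(\bh_1\otimes\bh_2)=\xi(\bh_1)\otimes\xi(\bh_2)$. It then proves (Filter), (CV) and (PB) sound by \emph{deriving} them: expand with (Split-Many), i.e.\ iterated (Split) over all assignments, pull out scalars and phases, and recombine with the vector-space laws. (HH) is in turn derived from (PB), (CV) and (Filter).

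You instead verify each rule directly on the fibres $\{\Vec{y}:\bF_{\Cl}(\Vec{y})=\eta\}$. The paper's route reduces soundness to a tiny semantic kernel and shows the other rules are admissible in it. Yours avoids the exponential detour and shows which side conditions the world partition really needs.

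On that point, for (HH) the condition that matters is that $\Vec{y_0}$ occurs only in the displayed phase term, in particular not in $\bF_{\Cl}$; that is what lets you sum $\Vec{y_0}$ inside a fibre. The condition $\Vec{y_1}\notin\pvar(\bF_{\Cl})$ is not actually needed, since the collapse to $\Vec{y_1}=f$ happens within the fibre anyway.

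Your (PE) argument is the paper's. One caution there: transferring (PE) through ``structural rules'' is valid for the equivalence closure and $\otimes$-contexts, but not under $\psadd$. With $\bh_2=\bh_1+_{\bP}\tfrac12$ one has $\bh_2\equiv\bh_1$, yet $\xi(\bh_1\psadd\bh_1)=2\xi(\bh_1)$ while $\xi(\bh_1\psadd\bh_2)=0$. Make sure (Plus) is not among the rules you close $\equiv$ under.

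\textbf{Third item: a missing congruence step.} Here you follow the paper: its monotonicity theorem proves exactly your frame lemma $\sem{P}{}(\bh'\otimes\bh_a)=\sem{P}{}(\bh')\otimes\bh_a$ by induction on $P$. You are more complete than the paper in treating ($\prequiv\prinduce$-Comp.) and (Trans.) explicitly. But the $\sem{\cdot}{}$-monotonicity case has a gap.

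Your invariant holds only up to $\prequiv$. From $\bh_1\prequiv\bh_2\otimes\bh''$ the frame lemma gives $\sem{P}{}(\bh_2\otimes\bh'')\prequiv\sem{P}{}(\bh_2)\otimes\bh''$. To reach $\sem{P}{}(\bh_1)$ you additionally need $\sem{P}{}$ to be a congruence for $\prequiv$, i.e.\ $\bh\prequiv\bh'\Rightarrow\sem{P}{}(\bh)\prequiv\sem{P}{}(\bh')$.

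This is not automatic, because the rewrite rules carry syntactic side conditions that later program steps can break. For example, $\pasum{\frac{y_0y_1}{2}}{\ket{y_1}{q}}{\frac12}_{\{y_0,y_1\}}\prequiv\pasum{0}{\ket{0}{q}}{1}_{\emptyset}$ by (HH). After $\textbf{Measure}(q,c)$, however, the left-hand side carries $y_1$ in $\bF_{\Cl}$, so the same (HH) step is no longer licensed. The equivalence has to be re-derived, e.g.\ via (PB), (CV) and (Filter).

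The paper states this congruence as a separate lemma (the consistency part of its lemma on linearity and consistency of $\xi\circ\sem{p}{}$) and uses it alongside the frame lemma. You need to add it, proved by induction on $P$ and on the derivation of $\prequiv$. Alternatively, prove the frame lemma directly in the stronger form $\bh\prequiv\bh'\otimes\bh''\Rightarrow\sem{P}{}(\bh)\prequiv\sem{P}{}(\bh')\otimes\bh''$.
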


  \section{Reasoning about Hybrid Quantum Programs}%
\label{sec:reasoning-about-programs}

\subsection{Forward Symbolic Semantics}%
\label{sec:operational_semantics}

A valid $\lambda$-term for obtaining a parametrized hybrid circuit is a term of
type Prog. The typing judgement $\Gamma, \Delta \vdash t : \text{Prog}$ may have
validity constraints in $\Delta$. After instantiation of the parameters and
normalization in the $\lambda$-calculus, this term has the form of one of the
constructors of the type \Prog{}. We can therefore inductively define the
interpretation of $t$ as a transformation of \hps{}; i.e.\ a function mapping an
\hps{} to another. The transformations are given in~\cref{fig:denot-sem}
with the example of the $Z^* = \{H ,
X\}\cup {\{Z_k\}}_{k\in\mathbb{N}}$  set of gates from Section~\ref{sec:hqbricks-syntax}. For sake of
readability, we suppose that $\Qu$ is a single qubit register with $\qu = \Qu[0]$. The
general case is easily obtained by iterated parallelism. 

In figure~\ref{fig:denot-sem}, we use notations briefly summarized as: $\bh +_\bP
\P$ adds $\P$ to the phase of $\bh$ (for instance, $\pasum{\P_1}{\bF}{\bN} +_{\P} \P_2 =
\pasum{\P_1 + \P_2}{\bF}{\bN}$); $\bh \cup_{\bsupport} \{y\}$ adds a new
path-variable $y$ to the support of $\bh$; $\bh[\qu \leftarrow \bB]$ replaces
the value of $\qu$ in $\bh$ by $\bB$; $\bh[\Cl \leftarrow \bB]$ appends $\bB$ to
the history stack in the register $\Cl$; and $\get{\qu}$ returns the value of $\qu$ in
$\bh$.

\small
\begin{table}[htpb]
\caption{Program symbolic semantics in terms of transformations of \hps{}}%
\label{fig:denot-sem}
\begin{subfigure}{0.5\textwidth}
\begin{align*}
    \llbracket \textbf{Skip} \rrbracket(\bh) &= \bh \\
    \llbracket \init{\Qu} \rrbracket(\bh) &= \bh[\Qu\leftarrow 0] \\
    \llbracket \textbf{Measure}(\Qu,\Cl) \rrbracket(\bh) &= \bh[\Cl \leftarrow \lceil \Qu \rceil] \\
    \llbracket \Cl_1 = f(\lceil \Cl_2 \rceil) \rrbracket(\bh) &= \bh[\Cl_1 \leftarrow f(\lceil \Cl_2 \rceil)] \\
    \llbracket \Ifthene \bkappa p q \rrbracket(\bh) &= \bkappa{\semantics{p}\bh}
    + (1- \bkappa){\semantics{q}\bh}\\
    \llbracket p ; q \rrbracket &= \llbracket q \rrbracket \circ \llbracket p \rrbracket\\
    \llbracket \for{x} j k p \rrbracket &=\\\multicolumn{3}{r}{$
    \semantics{p[x:=j]}{} \circ \semantics{p[x:=j+1]}{} \circ \cdots \circ \semantics{p[x:=k]}{}$} 
\end{align*}
\caption{Language constructs}
\end{subfigure}
\hfill
\begin{subfigure}{0.38\textwidth}
\begin{align*}
          \llbracket H(\qu) \rrbracket(\bh) &= \\ 
     \multicolumn{3}{r}{\qquad     $\left(\frac{1}{\sqrt{2}} \cdot \bh 
     \cup_{\bsupport} \{y\} +_\bP{} \frac{y \lceil\qu\rceil} 2 \right)[\qu\leftarrow y]$}  \\
    \llbracket X(\qu) \rrbracket(\bh) &= \bh[\qu \leftarrow 1 \oplus \lceil \qu\rceil] \\
    \llbracket Z_k(\qu) \rrbracket (\bh)&= \bh +_\bP \frac {\lceil \qu \rceil} {2^k}
\end{align*}
\caption{Primitive gates: the $Z^*$ example}

\end{subfigure}

    \Description[]{}
\end{table}

\normalsize

\myparagraph{Soundness of the \hps{} Semantics} 
We establish the soundness of our symbolic semantics  with respect to the
 cq-state formalism and the \textsc{qhl-cv} language~\cite{qhlcv}, \fv{as it is the closest to a standard among existing propositions for the formalization of programming languages over hybrid data.} Technically, modulo some rewriting, the object languages only
differ in that the loops of \hqbricks{} are bounded
\textbf{for} loops, while those of \textsc{qhl-cv} are potentially
unbounded \textbf{while} loops.\footnote{At the constructor level, an
    additional apparent differences comes from the fact $S$ language also
    provides oracle probabilistic classical assignment $c :=_\$ p$. This is
    covered by \hqbricks{} through the possibility of introducing non-unitary
    elementary operations
    through \hpps{} such as $ |\vec 0\rangle \mapsto \sqrt{p_i}|i\rangle$. See \cref{sec:cases-corrections} for an application to error modelling.}
In the following theorem, we write
$\semantics{\cdot}_{cq}$ for the cq-state semantics
and we identify density operator maps from
\cref{def:density_maps} to cq-states (provided they are obtained from an \hps{}
of norm $\leq 1$).

\begin{theorem}[Soundness of the semantics]%
\label{theo:language-soundness}
\hqbricks{} is sound with respect to \textsc{qhl-cv}: there is a
    (non-surjective) mapping $\phi$ from \hqbricks{} to language $S$ such that for
    any closed \hqbricks{} program $\Pr$ (i.e.\ a program where all parameters
    have been instantiated) and an \hps{} $\h$ with norm $\leq 1$,
    \[\DO(\semantics{\Pr}(\h)) = \semantics{\phi(\Pr)}_{cq}(\DO(\h))\]
    
\end{theorem}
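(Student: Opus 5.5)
We first construct the translation $\phi$ inductively on the (normalized, fully instantiated) syntax of closed \hqbricks{} programs, and then prove the equality $\DO(\semantics{\Pr}(\h)) = \semantics{\phi(\Pr)}_{cq}(\DO(\h))$ by structural induction on $\Pr$. The translation $\phi$ is the identity on \textbf{Skip}, \Init, \textbf{Measure}, gate application, classical assignment $\cre_1 := f(\cre_2)$ and sequencing. A bounded \textbf{For} loop is sent to its unrolling as a sequence, which is legitimate since the bounds are instantiated parameters. A classical conditional is sent to the \textsc{qhl-cv} classical \textbf{if}. A quantum conditional $\Ifthene{\QB}{\qprog}{\qprog'}$ is sent to the application of the controlled unitary $P_{\QB}\otimes U_{\qprog} + P_{\lnot\QB}\otimes U_{\qprog'}$. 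This is where the \myrule{Q-if} side condition $\qreg(p)\cap\qreg(b)=\emptyset$ is used: it guarantees that this operator is unitary. The map $\phi$ is not surjective, because \textbf{while} loops are never in its image.

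The key tool is the identity stated after \cref{def:density_maps}: $\DO(\h)(c) = \sum_{\get\eta = c} \ketbra{\xi(\h)(\eta)}{\xi(\h)(\eta)}$. This lets us reason world by world on $\xi$ and only sum at the end. Each case then proceeds as follows.

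\begin{itemize}
\item \emph{Unitary gates} $H$, $X$, $Z_k$. We show $\xi(\semantics{G}(\h))(\eta) = G\,\xi(\h)(\eta)$ for every $\eta$. This is a direct path-sum computation: for $H$, the fresh variable $y$ with phase $y\get{\qu}/2$ and factor $1/\sqrt2$ reproduces the matrix of $H$. Since the classical outputs are untouched, conjugating the sum gives $G\rho G^\dagger$.
\item \emph{Measurement.} Appending $\get\Qu$ to $\Cl$ refines every history $\eta$ into $\eta\cdot 0$ and $\eta\cdot 1$, and $\xi$ of the new \hps{} at $\eta\cdot m$ equals $\ket{m}{}\!\bra{m}{}\,\xi(\h)(\eta)$. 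Summing the outer products over histories with top row $c[\Cl:=m]$ yields exactly the two projected branches of the cq-semantics.
\item \emph{Classical assignment.} This is the same argument with the deterministic value $f(\get{\Cl_2})$ in place of $m$.
\item \emph{Init.} Following \textsc{qhl-cv}, this is the reset channel; it is handled as the discard-and-prepare of a fresh zero qubit.
\item \emph{Sequencing and For.} These follow immediately from the induction hypothesis and functoriality.
\end{itemize}

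The main obstacle is the conditional cases, where $\semantics{\cdot}$ is defined by the sum $\bkappa\semantics{p}\h + (1-\bkappa)\semantics{q}\h$, realized through $\psadd$.

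\emph{Classical conditional.} We use the Split/$\psadd$ decomposition $\h = \h_b \psadd \h_{\lnot b}$ and the linearity theorem $\xi(a\h_1\psadd\h_2) = a\xi(\h_1)+\xi(\h_2)$. The point to check is that the paths of $\h_b$ and $\h_{\lnot b}$ produce disjoint sets of histories: $\cb$ depends only on $\get{\Cl}$, hence is constant on each world. Consequently no cross terms appear in the outer products, and the $\DO$ of the sum is the sum of the $\DO$'s, restricted to the $c\models\cb$ and $c\not\models\cb$ parts. We also need the induction hypothesis to apply to $\h_b$ and $\h_{\lnot b}$ separately, and that each has norm at most $1$. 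The latter holds because $\h_b$ and $\h_{\lnot b}$ are restrictions of $\h$ to disjoint sets of worlds, so their norms are bounded by $|\h|$.

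\emph{Quantum conditional.} Here the programs are unitary and do not write to classical memory, so cross terms between the two branches do appear. We therefore argue at the level of $\xi$ rather than $\DO$. By a strengthened induction on \QProg{}, $\xi(\semantics{\qprog}(\h))(\eta) = U_{\qprog}\,\xi(\h)(\eta)$ for all $\eta$. Then, by linearity of $\xi$ over $\psadd$ together with the disjointness condition, $\xi(\semantics{\Ifthene{\QB}{\qprog}{\qprog'}}(\h))(\eta) = (P_{\QB}\otimes U_{\qprog} + P_{\lnot\QB}\otimes U_{\qprog'})\,\xi(\h)(\eta)$. The \DO{} statement then follows as in the gate case.
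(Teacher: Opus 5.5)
Your proposal is correct and follows essentially the paper's route: an inductive translation $\phi$ (unrolling \textbf{For}, sending classical conditionals to \textsc{qhl-cv} conditionals and unitary parts to unitary statements), then structural induction on closed programs with $\DO$ computed world by world. Your strengthened $\xi$-level induction on \QProg{} is in fact more careful than the appendix, which maps each \QProg{} block to $\bar q \mathrel{*}= U(p)$ without checking that $\semantics{p}$ realizes $U(p)$ and never treats \Init{}. Two points should be made explicit. First, the \Init{} case holds only because well-formedness forbids re-initializing an allocated qubit: on an allocated $\ket{+}{}$, the substitution gives $\xi(\bh[\Qu\leftarrow 0]) = \sqrt{2}\,\ket{0}{}$, which is not the reset channel. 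Second, in the classical conditional the disjointness of histories must be carried through $p$ and $q$, which holds because the semantics only appends to the history stack.
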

\begin{proof}[Proof sketch:]
    After defining a mostly cosmetic translation of the programs of \hqbricks{}
    in the language of \textsc{qhl-cv}, the proof proceeds by induction on the
    constructors of \prog{}. For each of the constructors, we show that the
    symbolic semantics $\semantics{\prog}(\h)$ matches the denotational
    semantics of \textsc{qhl-cv}. The complete induction is developed in
    Appendix~\ref{full-proof-soundness-language}.
\end{proof}
\subsection{A Minimal \Hps{} Based Assertion Language}%
\label{sec:static-program-analysis}

\hps{} provide a sound symbolic representation for hybrid programs semantics, paving the way for the static analysis of these programs. In the present section we introduce a core assertion language on top of \hpps{} derivation. As illustrated in Section~\ref{sec:cases}, assertions introduced below are sufficient for reaching the standard literature specifications for all application cases we thought about.

\myparagraph{Full Hybrid State  Description Assertions}
The most straightforward usage of \hps{} is to derive the result $\bh'$ of
executing a hybrid program $\P$ on a symbolic hybrid state \bh. The induced
assertion is written $ \equivr \bh'$  and it is tested by the following condition, with soundness established through Theorem~\ref{theorem:sound-rewrite-gen}.
\[\hoare{\equivr\bh}{\Prog}{\equivr\bh'} \textit{ iff } \semantics{\Prog}(\bh)
\equiv \bh'\]

\myparagraph{Satisfaction Assertions}
Our refinement relation $\prinduce$, which formalizes the Discard operation,
directly induces a satisfaction relation $\prinducer$ as its logical
counterpart, referring to \hps{} as predicates over hybrid classical/quantum
states.  This allows to specify properties of a state without the need to
describe the full state. For example, such properties include specifying values
of classical registers.  The satisfaction of this assertion is given by:
\[\hoare{\equivr\bh}{\Prog}{\prinducer\bh'} \textit{ iff there is $\bh''$ such
    that $\norm{\bh''} = 1$ and } \semantics{\Prog}(\bh) \equiv \bh'\otimes
\bh''\] with soundness established, again, through
Theorem~\ref{theorem:sound-rewrite-gen}. Note that the specification requires an
existential witness. Concretely, the test is built-in and does not require
writing \hps{} variables.

\myparagraph{Probabilistic Satisfaction Assertions}
There is an even more general way to specify properties of a state, by
quantifying the probability that a certain quantum or classical property is
true. More formally, given an \hps{} $\h$, we can quantify the probability of being
in a world of that \hps{} in which a certain property \texttt{cond} holds.
The induced assertion is written $\dsat{\texttt{cond}}{\sim}{r}$  and it is
defined as:
\[\hoare{\equivr\bh}{\Prog}{\dsat{\texttt{cond}}{\sim}{r}} \textit{ iff }
    \sum_{\begin{smallmatrix}\eta\in \semantics{\Prog}(\bh)\\ \eta \models
    \texttt{cond}\end{smallmatrix}} \norm{\xi(\semantics{\Prog}\h)(\eta)} \sim r
\]
where $\eta, \h \models \texttt{cond}$ is evaluated as a standard boolean
formula, quantum atoms 
are evaluated under state
$(\bh) (\eta)\prinducer  \texttt{At}$, and with $\sim$ being
one of $\leq, \geq, =, < , >$.

\fv{\myparagraph{Discussion}
As illustrated in \cref{sec:cases}, this core assertion language is sufficient
for expressing textbook specifications for a broad range of hybrid computing
features. Nevertheless, it does not allow expressing  some advanced features such
as  program approximations or relational assertions (approximate correctness,
quantitative bounds)~\cite{10.1145/3290346,barthe2025duality,Yu:2025tod}. We do not see any fundamental limitations that would impede extending our
approach  to such cases; nonetheless, we leave this for future work, as we first
focus on fundamental basic assertions.}
    \section{Illustrative Cases}%
\label{sec:cases}
\def\checkmark{\tikz\fill[scale=0.3](0,.35) -- (.25,0) -- (1,.7) -- (.25,.15) -- cycle;} 
 \begin{wraptable}{r}{7.5cm}
\tiny
\centering
\caption{\fv{Features demonstrated by our  case studies}}%
\label{tab:implementation}
\begin{tabular}{|c|c|c|c|c|c|}
\cline{2-6}
\multicolumn{1}{c|}{}& Telep. & \begin{tabular}{c}
    QPE \\(exact)
\end{tabular}    
    & \begin{tabular}{c}
    QPE \\(appr)
\end{tabular}    
 & BRUS & QEC \\
\cline{2-6}
\hline
 \multicolumn{6}{|c|}{\textbf{Equational Theory \fv{(\textbf{ET})}}}\\
\hline
$\myrule{Change Variable(CV)}$&\checkmark&&&& \\
\hline
$\myrule{HH}$&&\checkmark&&&\checkmark \\
\hline
$\myrule{Phase Bisector (PB)}$&&&\checkmark&& \\
\hline
\hline
\multicolumn{6}{|c|}{\textbf{Partial state verification \fv{--- Hybrid data (HD}) }}\\
\hline
$\myrule{Factorize - Distr (FD)}$&\checkmark&\checkmark&\checkmark&\checkmark& \\
\hline
$\myrule{Discard}$&\checkmark&\checkmark&\checkmark&\checkmark& \\
\hline
\hline
  \multicolumn{6}{|c|}{\textbf{Probabilistic specification checking \fv{--- Hybrid data (\textbf{HD})}}}  \\
\hline
$\myrule{Filter}$&&\checkmark&\checkmark&\checkmark&\checkmark \\
\hline
\end{tabular}

\end{wraptable}
We show now the advantages of \hps{} and their different features on a representative set of hybrid quantum programs: the Quantum Phase Estimation algorithm --- both in the exact and in the approximate case 
(\cref{sec:cases-qpe}), a bounded version of the Repeat-Until-Success pattern
(\cref{sec:cases-brus}) and a basic quantum error correction scheme
(\cref{sec:cases-corrections}).  Together with the teleportation running example, these cases illustrate the main rewriting, refinement, and assertion features presented so far.  
\fv{This is summarized  in \cref{tab:implementation},  refining  classification  from \cref{SRs} over the 
 Hybrid data (\textbf{HD}) and 
Equational Theory (\textbf{ET}) criteria.}

Let us first introduce some
\hqbricks{} syntactic sugar:
\[
    \centering
\begin{array}{rclcrcl}
    \q\triangleright \U&:=&\Apply\ \U(\q)&    \hspace{1cm}&\kappa\Rightarrow \prog &:=& \Ifthene \kappa p {\textbf{Skip}}    \\      
\q \multimap \cre&:=&\textbf{Measure}(\q,\cre)&   &\textbf{Let } f(\Vec x)  = t \textbf{ in } u   &:=& (\lambda f. u) (\lambda
   \Vec x. t) \\
   \multicolumn{7}{c}{\q\rightrightarrows p(k):=\for{k}{0}{\card{\q}-1}{\get{Q[k]}\Rightarrow p(k)}}\\
\\
\end{array}
 \]

\subsection{Quantum Phase Estimation (QPE)}%
\label{sec:cases-qpe}
Developed by Cleve et al.~\cite{cleve1998quantum} and
Kitaev~\cite{kitaev1995quantum}, Quantum Phase Estimation (QPE) is a central
piece in many quantum algorithms, such as quantum
simulation~\cite{georgescu2014quantum} and the HHL
algorithm~\cite{harrow2009quantum} which solves linear systems of equations in
PolyLog time. QPE takes as input a unitary operator $U$ and an Eigenvector
$\ket{v}{\texttt{E}}$ of $U$ represented as a \hpps{} $\bh =
\pasum{\bP}{\bF_{\texttt{E}}}{\bN}$ of norm $1$, and finds the eigenvalue $e^{2
i\pi\theta}$ associated with $\ket{v}{\texttt{E}}$. In other words: assuming the
existence of a $\theta$ satisfying $\hoare{\bh}{O}{\bh+_{\bP}\theta}$, the goal
is to find it. More precisely, we search for the value 
$\tilde \theta \in [0;2^n)$ 
that minimizes the distance $\tilde \theta - \frac{\theta}{2^n}$ (modulo 1). The
target specification is that the QPE output is $\tilde \theta$ with
probability at least $\frac{4}{\pi^2}$. Detailed introduction to the case is
given in~\cite{nielsen2002quantum} Chapter 5, among others.

Figure~\ref{fig:qpe-exec} presents the symbolic circuit for QPE and its
\hqbricks{} code together with its specification derivation. In the
specifications, the function $\inttype(Y)$ refers to the integer interpretation
of bit vectors: $\inttype(Y):= \sum_{j=0}^{2n-1}Y(j)* 2^j$.  QPE goes through
three main steps:
\begin{enumerate}
    \item The state $\ket{0}{\texttt{C}}\ket{v}{\texttt{E}}$ is raised to the
        superposition $\left(\sum_{0\leq k < 2^{n}}\ket{k}{\texttt{c}}\right)
        \otimes \ket{v}{\texttt{E}}$ by a parallel application of the Hadamard
        gate in register \texttt{C}. This is encoded by the \hps{} $
        \pasum{0}{\ket{\Vec{Y_0}}{\texttt{C}}}{\frac{1}{\sqrt{2^n}}}_{\Vec{Y_0}}\otimes
        \ket{v}{\texttt{E}}$.  \item A series of $U^{2^{i}}$ gates is applied to
        register $\texttt{E}$ each controlled by the value in register
        $\texttt{c}$ at position
        $i^{\text{th}}$, so as to produce the state 
        $\pasum{\frac{\theta\cdot \toint{Y_0}}{2^n}}{\ket{\Vec{Y_0}}{\texttt{C}}}{\frac{1}{\sqrt{2^n}}}_{\Vec{Y_0}}\otimes
\ket{v}{\texttt{E}}$,
    \item This last  is a close approximation of 
        $\ket{\text{QFT}(\tilde \theta)}{\texttt{c}}\ket{v}{\texttt{E}}$, 
        the result of applying the Quantum Fourier Transform, a standard unitary
        primitive, to  $\ket{\tilde \theta} {\texttt{c}} \ket{v}
        {\texttt{E}}$.  Hence, 
        applying the inverse of QFT,
        written QFT$^{\dagger}$, 
to $\pasum{\frac{\theta\cdot \toint{Y_0}}{2^n}}{\ket{\Vec{Y_0}}{\texttt{C}}}{\frac{1}{\sqrt{2^n}}}_{\Vec{Y_0}}\otimes
\ket{v}{\texttt{E}}$
        results in an  approximation of state
        $\ket{\tilde \theta} {\texttt{c}} \ket{v} {\texttt{E}}$ that is good
        enough to output value $\get{\texttt{C}} = \tilde\theta$ with probability
        $\geq \frac{4}{\pi^2}$ after measurement.
\end{enumerate}

\medskip

\disc{,
giving  state ${[\begin{smallmatrix}
\Vec{Y_1}\\X \end{smallmatrix}]}_\cre$ for  register \cre{} of. This \emph{memory of the past} is then
separated from the rest of the specification (\myrule{FD}) and
discarded (\myrule{Disc}), so that \cre{} holds a single present value.}

With regard to specification, the QPE case is twofold, depending on whether the
sought eigenvalue has (exact case, when $\theta = \tilde \theta$) or does not
have (approximate) an exact $n$ bits decomposition. In the exact case the
algorithm guarantee is to output the exact eigenvalue with certainty.
The probabilistic specification is obtained by rule \myrule{Filter}. In the approximate case, the resulting geometric series is computed by use of  \myrule{Phase Bisector}.

\begin{figure}[htbp]
    \centering
    \scriptsize
\[
\begin{array}{cc}
\begin{array}{ll}
    \textbf{Let } QPE(U,\texttt{c},\texttt{E},\cre)=\{ &\equivr \ket{v}{\texttt{E}}
\\
\quad \init(\texttt{c}); \myrule{FD}&\equivr
\pasum{0}{\ket{0}{\texttt{c}}}{1}{1}\otimes
\ket{v}{\texttt{E}}

\\
\quad \texttt{c} \triangleright H;&\equivr
\pasum{0}{\ket{\Vec{Y_0}}{\texttt{c}}}{\frac{1}{\sqrt{2^n}}}_{\Vec{Y_0}}\otimes
\ket{v}{\texttt{E}}
\\
\quad \texttt{c} \rightrightarrows (\texttt{E} \triangleright U^{k+1});&\equivr

\pasum{\frac{\theta\cdot
\toint{Y_0}}{2^n}}{\ket{\Vec{Y_0}}{\texttt{c}}}{\frac{1}{\sqrt{2^n}}}_{\Vec{Y_0}}\otimes
\ket{v}{\texttt{E}}

\\\quad
c \triangleright QFT^{\dagger};&\equivr
\pasum{\frac{\toint{Y_0}(\theta -
\toint{Y_1})}{2^{n}}}{\ket{\Vec{Y_1}}{\texttt{c}}}{\frac{1}{2^n}}_{\Vec{Y_0},\Vec{Y_1}}\otimes
\ket{v}{\texttt{E}}

\\\quad
c \multimap \cre\}&\equivr
\pasum{\frac{\toint{Y_0}(\theta- \toint{Y_1})}{2^{n}}}{{[
\Vec{Y_1}]}_{\texttt{C}}}{\frac{1}{2^n}}_{\Vec{Y_0},\Vec{Y_1}}\otimes
\ket{v}{\texttt{E}}
  \\
\multicolumn{1}{c}{\myrule{FD/Disc/Filter/PB} }& \dsat{\get{\texttt{C}} = \tilde \theta}{=}{\sqnorm{\frac{1}{2^{2n}}\sum_{\Vec{y_0}}e^{2\pi i\Vec{Y_0}\frac{\theta- \tilde\theta}{2^{n}}}}} \geq \frac{4}{\pi^2}
\\
\end{array}& 
\begin{array}{c}\hspace{-.5cm}\scalebox{.8}{   \input{qpeq}}
\end{array}
\end{array} 
\] 
\normalsize
    \caption{\fv{Quantum Phase Estimation, code, specification derivation and
    circuit}}%
    \label{fig:qpe-exec}
\end{figure}

\subsection{Repeat Until Success (RUS)}%
\label{sec:cases-brus}

Repeat-Until-Success (RUS) is a standard routine in quantum programming: given a
stochastic quantum process $U$ achieving a success condition $\condition$ with
probability $\delta$, it consists in repeating iterations of $U$ until reaching this
condition. Also referred to as \emph{post-selection}, the scheme essentially enables one to select a desired result from a
non-deterministic process, shifting the indeterminacy from the computation
output to its execution time. It is used at various levels in the quantum
development process, be it as an elementary brick in involved microcode
implementation protocols (magic state distillation~\cite{knill2004fault}) or as
the highest-level layer for most quantum algorithms with non-deterministic
outcome (Grover, Shor, Phase estimation, etc.).

We develop a parameter-bounded version of the scheme, Bounded Repeat Until
Success (BRUS), where the number of loop calls is bounded by a user-fixed
parameter.

\begin{table}[htbp]
\footnotesize
    \caption{\fv{Bounded Repeat Until Success: code and specifications }}%
    \label{fig:brus-exec}
\[
\begin{array}{c|c}
\textit{Code + spec.}&\textit{Invariant definition}\\\hline
\begin{array}{lc}
    \textbf{Let } \texttt{BRUS}(\{U,\texttt{Q}\},\{\texttt{C}\},\texttt{cont},n)=\{&\Gamma_{0}\\\quad
\forprog{i}{1}{n}&\Gamma_{i}
\\\qquad
 (\neg \texttt{cont}(\get{\texttt{C}_c})) \Rightarrow
\ U(\{\texttt{Q}\},\{\texttt{C}\})& \Gamma_{i+1}
\\
\quad \textbf{done}\}& \Gamma_{n}
\\
\end{array}
&\begin{array}{rl}\Gamma{i}(\delta, \texttt{cont},\cre_c):=& \prinducer \bh, \\
&\dsat{ \texttt{cont}(\get{\cre_c})}{\geq}{1-{(1-\delta)}^i},\\
&\dsat{ \neg\texttt{cont}(\get{\cre_c})}{\leq}{{(1-\delta)}^i},\\
     
\end{array}\end{array}\]
\end{table}

The code and specifications are given abstractly in~\cref{fig:brus-exec}.
They rely  on a non-instantiated invariant condition $\Gamma_i$ (formalized in
~\cref{fig:brus-exec}, right side), such that:

\begin{itemize}
    \item For all integers $i$, $\Gamma_i$ entails
        $\dsat{\condition(\cre_c)}{\geq}{1-{(1-\delta)}^i}$;
    \item Any conditional application  of $U(\{\texttt{Q}\},\{\texttt{C}\})$
        from a  $\Gamma_i$ state ends in a 
        $\Gamma_{i+1}$ state. 
\end{itemize}
Then, any iterative execution of $n$ application leads from a $\Gamma_0$ state
to a $\Gamma_n$ satisfying $\dsat{\condition(\cre_c)}{\geq}{1-{(1-\delta)}^i}$.
The proof of correctness requires the invariant $\Gamma_i$ to be of the form $\bh\otimes(\bh_i\psadd\bh'_i)$, that is a tensored composition
of:
\begin{itemize}
    \item an invariant component $\bh$. In the QPE example from
        above, it corresponds to the description of the
        initial state of quantum register \texttt{E}:
        $\ket{v}{\texttt{E}}$, and
    \item a variant component, itself structured as a path-sum addition
        $\bh_i\psadd\bh'_i$ whose terms are separated by the classically read
        condition $\texttt{cont}(\cre_c)$. 
        \disc{In Figure~\ref{fig:brus-exec}, $\bh\models
        \varphi$ is used as a shortcut for: $\bh$ satisfies
        $\dsat{\varphi}{=}{\norm{\bh}}$.}

\end{itemize} 
Then we analyse $\bh_i$ and $\bh'_i$  separately. The
instruction  $(\neg \texttt{cont}(\get{\texttt{C}_c})) \Rightarrow \
U(\{\texttt{Q}\},\{\texttt{C}\})$ triggers the execution of $U$ in the $\bh_i$
component but not in $\bh'_i$ and it transforms $\bh_i$ into a path-sum addition
with a $(\norm{\bh_i}\cdot\delta)$ probable component satisfying
$\texttt{cont}(\get{\texttt{C}_c})$. Adequate information discarding then
enables reshaping the induced three-term
path-sum addition into the $\Gamma_{i+1}$ form.

We applied this generic loop scheme to both a parametrized iteration of Hadamard gates and our instances of the Phase Estimation circuit from \cref{sec:cases-qpe}. Experimental details are provided in \cref{table:xps}. We think this development as a foundational step toward the formalization of Hoare style
reasoning and probability propagation upon \hps{} analysis tools.  

 \subsection{Quantum Errors and Quantum Error Correction}%
\label{sec:cases-corrections}
\begin{figure}[h]
    \tiny
    \[
\begin{array}{cc}
\begin{tabular}{ll}
\multicolumn{1}{l}{\text{\textbf{Let}  QEC3(\texttt{Q,\cre})=\{}}&$\equivr\pasum{0}{\ket{x}{\psi}}{1}_{\emptyset}$\\\quad
$\q_0,\q_1:=0$;&$\equivr\pasum{ 0}{\ket{x}{\psi}\ket{0}{\q_0}\ket{0}{\q_1}}{1}_{\emptyset}$\\\quad
$\psi\Rightarrow(\q_0 \triangleright X)$;&$\equivr\pasum{0}{\ket{x}{\psi}\ket{x}{\q_0}\ket{0}{\q_1}}{1}_{\emptyset}$\\\quad
$\q_0\Rightarrow(\q_1 \triangleright X)$;&$\equivr\pasum{0}{\ket{x}{\psi}\ket{x}{\q_0}\ket{x}{\q_1}}{1}_{\emptyset}$\\\quad
 $\psi,\q_0,\q_1\triangleright\tilde{I}$;&$\equivr\pasum{0}{\ket{x\oplus
 w_0}{\psi}\ket{x\oplus w_1}{\q_0}\ket{x\oplus w_2}{\q_1}{[\Vec{w}]}_{\bot}}{\sqrt{f(\Vec w)}}_{\{\Vec w\}}$\\\quad
$c_0,c_1:=0$;&$\equivr\pasum{0}{\ket{x\oplus w_0}{\psi}\ket{x\oplus
w_1}{\q_0}\ket{x\oplus w_2}{\q_1}\ket{0}{c_0}\ket{0}{c_1}{[\Vec w]}_{\bot}}{\sqrt{f(\Vec w)}}_{\{\Vec w\}}$\\\quad
$c_0,c_1\triangleright H$;&$\equivr\pasum{0}{\ket{x\oplus w_0}{\psi}\ket{x\oplus
w_1}{\q_0}\ket{x\oplus w_2}{\q_1}\ket{y_0}{c_0}\ket{y_1}{c_1}{[\Vec w]}_{\bot}}{\frac{1}{2^n}\sqrt{f(\Vec w)}}_{\{\Vec w,y_0,y_1\}}$\\\quad

$c_0 \Rightarrow(\psi \triangleright Z;\q_0 \triangleright Z)$;&$\equivr\pasum{ \frac{y_0(w_0+w_1)}{2}}{\begin{array}{ll}
    \ket{x\oplus w_0}{\psi}&\ket{x\oplus w_1}{\q_0}\\\ket{x\oplus w_2}{\q_1}&\ket{y_0}{c_0}\ket{y_1}{c_1}\end{array}}{\de{1}{2}\sqrt{f(\Vec w)}}_{\{\Vec w,y_0,y_1\}}$\\\quad

$c_1\Rightarrow(\q_0 \triangleright Z;\q_1 \triangleright Z)$;&$\equivr\pasum{ \frac{
    y_0(w_0+w_1)+y_1(w_1+w_2)}{2}}{ 
    \begin{array}{ll}
    \ket{x\oplus w_0}{\psi}&\ket{x\oplus w_1}{\q_0}\\\ket{x\oplus
w_2}{\q_1}&\ket{y_0}{c_0}\ket{y_1}{c_1}\end{array}{[\Vec w]}_{\bot}}{\de{1}{2}\sqrt{f(\Vec w)}}_{\{\Vec w,y_0,y_1\}}$\\\quad

$c_0,c_1\triangleright
H$;&$\equivr\left\langle\frac{y_0(w_0+w_1)+y_1(w_1+w_2)+y_2y_0 + y_3y_2}{2},
    {\de{1}{2}\sqrt{f(\Vec w)}}\right. $\\
   &\qquad $\left.
{\begin{array}{ll}
    \ket{x\oplus w_0}{\psi}&\ket{x\oplus w_1}{\q_0}\\\ket{x\oplus
w_2}{\q_1}&\ket{y_2}{c_0}\ket{y_3}{c_1}\end{array}{[\Vec
w]}_{\bot}}\right\rangle_{\{\Vec w,y_0,y_1,y_2,y_3\}}$\\

\multicolumn{1}{c}{$\myrule{HH^2}$}
&$\pasum{ 0}{\ket{x\oplus w_0}{\psi}\ket{x\oplus w_1}{\q_0}\ket{x\oplus
w_2}{\q_1}\ket{w_0+w_1}{c_0}\ket{w_1+w_2}{c_1}{[\Vec w]}_{\bot}}{\sqrt{f(\Vec w)}}_{\emptyset}$\\\quad

$c\multimap$\cre\}&$\equivr\pasum{0}{\ket{x\oplus w_0}{\psi}\ket{x\oplus
w_1}{\q_0}\ket{x\oplus w_2}{\q_1}{[w_0+w_1]}_{\cre_0}{[w_1+w_2]}_{\cre_1}{[\Vec
w]}_{\bot}}{\sqrt{f(\Vec w)}}_{\emptyset}$\\\quad

$\begin{array}{l}
\cre_0\cre_1\Rightarrow(\q_0 \triangleright X);\\
\cre_0\overline{\cre_1}\Rightarrow(\psi \triangleright X);\\
\overline{\cre_0}\cre_1\Rightarrow(\q_1 \triangleright X);
\end{array}$
&$\equivr\pasum{0}{\ket{x\oplus (\sum_{\Vec{w}} \geq
2)}{\psi}\ket{x\oplus (\sum_{\Vec{w}} \geq 2)}{\q_0}\ket{x\oplus
(\sum_{\Vec{w}} \geq
2)}{\q_1}{[w_0+w_1]}_{\cre_0}{[w_1+w_2]}_{\cre_1}{[\Vec w]}_{\bot}}{\sqrt{f(\Vec w)}}_{\emptyset}$\\

\multicolumn{1}{c}{\probadef\myrule{/Disc.}}&$\dsat{\get{\psi \q_0 \q_1} =
xxx}{=}{{(1-p)}^3+3p{(1-p)}^2}$\\
\end{tabular}
&
\hspace{-5.2cm}
\begin{tabular}{c}
\scalebox{.76}{
\begin{quantikz}[row sep={0.5cm,between origins},column sep=0.05cm,wire
types={q,q,q,n,n,n,n}]
\lstick{$\ket{x}{\psi}$}\slice[style = blue!0]{\steps{1}}&\ctrl{1}& & \gate[3,disable auto height]{\tilde{I}^{\otimes 3}
    }\slice[style = blue!0]{\steps{3}}& &&\gate[2]{Z}& && && &\targ{}&& \\
    \lstick{$\ket{0}{\q_0}$}&\targ{}&\ctrl{1}\slice[style = blue!0]{\steps{2}}& & && &\gate[2]{Z}& && &\targ{} &&&\\
    \lstick{$\ket{0}{\q_1}$}& &\targ{}&    & && && && && &\targ{}&\\
& & &\lstick{$\ket{0}{c_0}$}&\setwiretype{q}&\gate{H}&\ctrl{-3}& &\gate{H}\slice{\steps{4}}&\meter{}\\    
& & &\lstick{$\ket{0}{c_1}$}&\setwiretype{q}&\gate{H}& &\ctrl{-3}&\gate{H}& &\meter{}\slice{\steps{5}}\\    
& && && && &\lstick{$\cre_0$}&\setwiretype{c}\wire[u][2]{c}& &\ctrl{-4}&\ctrl{-5}&\octrl{-3}&\\
& && && && &\lstick{$\cre_1$}&\setwiretype{c}&\wire[u][2]{c}&\ctrl{-1}&\octrl{-1}&\ctrl{-1}&
\end{quantikz}
  }
\\\vspace{1cm}\\\vspace{1.3cm}\\\vspace{1.3cm}  \end{tabular}

  \end{array}
\]
\normalsize
\caption{\fv{Circuit and \hps{} derivation for the 3-qubit error correction code}}%
\label{fig:freriv-trhree-qbits}
\end{figure}
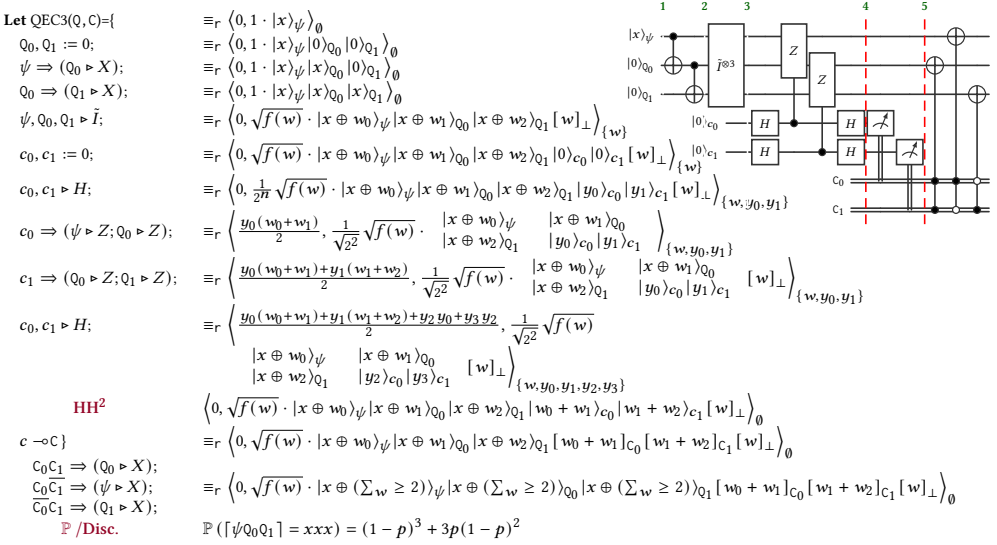
This example illustrates the possibility to encode non-deterministic processes
(Kraus operators) and error models and correction in \hps{}. Errors can be
modeled as mixed gates $(f, {\{U_{\Vec w}\}}_{\Vec{w} \in \mathbb Z_2^n})$ which
act as $U_{\Vec w}$ with probability $f(\Vec w)$.  Formally,
\[ \semantics{(f, \{U_{\Vec w}\})}_{\Vec w \in \mathbb Z_2^n}(\h) =
\bigpsadd_{w\in \mathbb Z_2^n} \sqrt{f(\Vec w)} \semantics{U_{\Vec w}(\h)}
\otimes {[w]}_{\bot} \]
where the $\bot$ register is to be interpreted as an inaccessible `dump site'
address, allowing world splitting without access to the values.

\fv{For instance, consider the three-qubit code described in details by
Roffe~\cite{roffe2019quantum} with its circuit and the corresponding specification derivation in \hps{} given in \cref{fig:freriv-trhree-qbits}. Here,
an erroneous identity operator $\tilde I$,  behaves as the identity with
probability $p$ and otherwise applies a bit flip. In \cref{fig:freriv-trhree-qbits}, the state
$\ket{x}{\psi}$ is first broadcast to the register $\psi \q_0 \q_1$ (Step~\steps{1}).  Then, each qubit in $\psi \q_0 \q_1$ undergoes  the
non-deterministic  erroneous identity $\tilde I$ (Step~\steps{2}) modeling the
occurrence of errors. Next,  a \emph{syndrome} is measured by applying
$Z$-stabilizers controlled by ancilla (Step~\steps{3}) and these ancillas are
measured (Step~\steps{4}). The results are then processed by classical control to
deduce a correction to apply (Step~\steps{5}). 
}

\fv{Note that Step~\steps{5} performs successive control on the same measurement results.
Therefore, storing that classical data in actual memory cells is unavoidable for this
case, a requirement that is missed by most pre-existing solutions (see \cref{SRs})}.

\section{Implementation and Experiments}%
\label{sec:implementation}

We have implemented \hqbricks{} and the \hps{} framework described so far in OCaml. 
This development provides the full material introduced so far
and enables symbolic execution over hybrid quantum programs for our core assertion language.  
It consists of 14567 lines of code, including 
\hqbricks{} and a \hps{} library, tests that cover both the
implementation and analysis of the illustrative cases from \cref{sec:cases}, an
intermediate representation with translations to and from OpenQASM2, and a
Jupyter notebook tutorial providing an accessible and interactive introduction
to the tool.

\disc{\medskip 

\textit{Our implementation as well as all data and scripts for experiments will be submitted for artifact evaluation and made open source upon acceptance.} 
}

Our  experiments cover  the standard hybrid quantum use
cases from the literature from \cref{tab:implementation}.  Different implementation metrics   are reported in \cref{table:xps}:
number of (quantum + classical) \emph{wires}, number of quantum
\emph{gates}, number of alternative measurement scenarios (\emph{worlds}) \emph{computation time} in seconds \fv{and number of \textit{Rewriting rules} automatically triggered. 
In addition to the final specifications to verify (assertions from the language
in \cref{sec:static-program-analysis}), our experiments require the user to provide a generic invariant  for the BRUS scheme case. Note that this only
concerns the generic scheme: no additional input is required for applying BRUS
to a specific unitary function. Other cases are fully automated. The last column in   \cref{table:xps} provides data about automatically triggered rewriting rules. For example, the
quantum teleportation with $50\ 000$ wires automatically triggers $10\ 000$
occurrences of the ChangeVariable (\myrule{CV}) rule and one occurrence of both the
Factorize-Distr (\myrule{FD}) and the Discard (\myrule{Disc.}) rules}.
\fv{\cref{table:xps} also provides, when available, computation time data for similar tests from the literature, along with the related references.}

\begin{table}[htbp]
\caption{\fv{Experimental evaluation result for \hqbricks{} symbolic execution}}%
\scriptsize
\begin{tabular}{|l|ccccc||c|}
\hline
 Case&\#(wires)&\#(gates)&\#(worlds)&Comp. time & Rewritings&Sota Comp. time*  \\
\hline
\hline
&5&6&4&0.000005&3&
\begin{tabular}{cc}
21.6  &symQV~\cite{bauer2023symqv}\\0.02&TDD~\cite{hong2022equivalence}\end{tabular}\\
\cline{2-7}
Quantum teleportation&500&600&$2^{200}$&0.001434&102&\nodata\\
&50000&60000&$2^{20000}$&31.043056&10002&\nodata\\
\hline
QPE: exact (symbolic entries)&210&5075&$2^{70}$&0.071082&72&\nodata\\
QPE: exact (symbolic entries)&2100&491750&$2^{700}$&98.428297&702&\nodata\\
QPE: exact- RSA 260 (concrete entry)
&2586&745199&$2^{862}$&15.213478&2&\nodata\\ \hline
&7&13&8&0.000021&5&\begin{tabular}{cc}
8.7 & symQV~\cite{bauer2023symqv}\end{tabular}\\
&11&27&32&0.000047&7&\begin{tabular}{cc}
234 &symQV~\cite{bauer2023symqv}\end{tabular}\\
QPE: approximate (concrete entry)
&101&1400&$2^{50}$&0.003662&52&\begin{tabular}{cc}
0.017 &QCEC\cite{burgholzer2022handling}\end{tabular}\\
&300&11025&$2^{90}$&0.020378&92&\nodata\\
&1036&6159&64&0.003416&8&\begin{tabular}{cc}
288,196&\cite{govindankutty2026formally}\end{tabular}\\
&3000&1082250&$2^{900}$&17.322091&902&\nodata\\
\hline
\phantom{BRUS($H^{\otimes n}$)}$\quad$k =100&20&1000&$>2^{999}$&0.001214&200&\nodata\\ 
BRUS($H^{\otimes n}$)&200&10000&$>2^{9999}$&0.042743&200&\nodata\\ 

\cline{2-7}
\phantom{BRUS($H^{\otimes n}$)}$\quad$k =500&200&50000&$>2^{49999}$&0.292615&1000&\nodata\\ 
&2000&500000&$>2^{499999}$&23.995457&1000&\nodata\\
\hline
\phantom{BRUS($QPE$)}$\quad$k =5&300&55125&$>2^{449}$&0.111073&460&\nodata\\ 
BRUS($QPE$)&3000&5411250&$>2^{4499}$&87.830885&4510&\nodata\\ 

\cline{2-7}
\phantom{BRUS($QPE$)}$\quad$k =10&300&110250&$>2^{899}$&0.534205&920&\nodata\\ 
&3000&10822500&$>2^{8999}$&177.056307&9020&\nodata\\
\hline
&7&16&4&0.000043&2&\nodata\\
&49&112&16384&0.000677&20&\nodata\\
Quantum error correction&700&1600&$2^{200}$&0.100990&299&\nodata\\
&14000&32000&$2^{4000}$&54.251738&5999&\nodata\\
\hline
\hline
&16&368&1&0.003037&32&\begin{tabular}{cc}     >300&ECMC~\cite{mei2024equivalence}\\0.03&SQV~\cite{ricciardi2025quantum}\\1.818&QuPRS~\cite{pathsumyfc}
\end{tabular}
\\
\cline{2-7}
Unitary equivalence: QFT&32&1248&1&0.024697&64&\begin{tabular}{cc}
     1.43&ZX~\cite{peham2022equivalence}\\0.04&QCEC~\cite{peham2022equivalence}\\ 
     \end{tabular}
\\
\cline{2-7}
&75&6150&1&0.336248&150&
\begin{tabular}{cc}
    1.23&ZX~\cite{peham2022equivalence}\\>3600&QCEC~\cite{peham2022equivalence}\end{tabular}\\
\cline{2-7}
&500&253500&1&132.308794&1000&\nodata\\
\hline

\hline
\end{tabular}

\begin{center}
    \begin{tabular}{@{}l@{\quad}l@{}}
        Computation times in seconds&$>x$: time out after x seconds\\ 
        * Times reported in the cited articles&\nodata{}: No data available
    \end{tabular}
\end{center} 
\label{table:xps}
\end{table}

\myparagraph{System Configuration} All the following experiments were conducted on a Dell
laptop equipped with an Intel Core i7-12800H CPU (14 cores, 20 threads, up to 
\SI{4.8}{GHz}), \SI{64}{GB} DDR5 RAM, and an NVIDIA T600 Laptop GPU, running Ubuntu 24.04.2
LTS with Linux kernel 6.14.

\smallskip

The \textbf{Quantum Teleportation} experiment is the parameterized
version of our running example protocol, as described in
\cref{sec:walking_through_telep}, where Alice sends an entire quantum register of
varying size to Bob, instead of a single qubit. As reported in
\cref{table:xps}, \hqbricks{} scales up to $10000$  teleported qubits in
$\simeq 30$ seconds (the total circuit requires $3$ quantum wires and $2$
classical wires per teleported qubit, as appears in \cref{QTelcirc}). \fv{We are not aware of any  other experiment of symbolic execution for
teleportation at such a scale. Still, symQV~\cite{bauer2023symqv} and
Tensored Decision Diagrams from~\cite{hong2022equivalence} provide experiments
for the teleportation of a single qubit (using 5 wires). \hqbricks outperforms them
by $6$ and $3$ orders of magnitude, respectively.}

\smallskip
The \textbf{Quantum Phase Estimation (QPE)} circuit \fv{is unitary. However, our
experiments integrates  measuring this circuit and deriving its textbook
specifications in terms of the probability of obtaining a desired classical result
 from this measurement.}  QPE contains a call to an oracle family  $U(k)$. In our
implementation,   we used  a parametric phase encoding circuit family $U_{\P}$.
It is  built by parallel occurrences of $Z_k$ rotations and performs the state
transformation $\ket{j}{} \rightarrow e^{\frac{k *2i\pi j}{2^n}} \ket{j}{}$.
Hence, any basis state $\ket{j}{}$ such that $0\leq j < 2^n$, $\ket{j}{}$  is an
eigenstate of $U(k)$ with corresponding eigenvalue $e^{\frac{k *2i\pi j}{2^n}}$.
Our experiments are led both in the
exact and the approximate case.  In the \emph{exact} case, we
used symbolic inputs, so that the verification holds for any $0\leq j
< 2^n$, just as observed above. To illustrate a concrete case and showcase our scaling, we also performed the computation over 
the RSA-260 secret-key  challenge of the RSA Security LLC lab~\cite{rsalab} (with unknown factor decomposition at present time).

In previous attempt, symQV~\cite{bauer2023symqv}  provided two instances (with respectively $7$ and $11$ wires, using the same oracle as us)\footnote{in~\cite{bauer2023symqv, burgholzer2022handling, govindankutty2026formally}, QPE instances are indexed after the cardinal of the top computing register (register \cre from \cref{fig:qpe-exec}) whereas we index them after the whole number of quantum and classical registers. Therefore, our $7$, $11$, $101$, and $1036$ instances are respectively designated as QPE$-3$ and QPE$-5$ in \cite{bauer2023symqv}, QPE$-50$ in~\cite{burgholzer2022handling}, and QPE$-1024$ in~\cite{govindankutty2026formally}.} of QPE in the approximate case. As shown in \cref{table:xps},  \hqbricks performs better by at least $5$ orders of magnitude. \fv{Another occurrence, also using the same oracle\footnote{More precisely, experiments from~\cite{bauer2023symqv, burgholzer2022handling} also make use of the phase shift oracle family, but it is not implemented from elementary gates in the circuit. Instead, they provide the full oracles as elementary gate constructs, together with their functional interpretation. For sake of similarity, in our instances comparing with~\cite{bauer2023symqv, burgholzer2022handling}, we used the same abstractions.},  comes from QCEC~\cite{burgholzer2022handling} and the case with $50$ qubits and concrete entries. The reported performance is largely outperformed by \hqbricks. In a recent proposition dedicated to the formal verification of QPE~\cite{govindankutty2026formally}, authors exhibits cases, still using the same oracle, with up to $1 024$ qubits in the bottom register (\textit{phase qubits})  and $6$ qubits in the top register (\textit{precision qubits}).  \hqbricks{} runs faster by almost  $5$ orders of magnitude.} 

In addition to the results reported in \cref{table:xps}, we performed a
robustness test for QPE calculation: it consists of iterating $100\ 000$ runs,
with randomly chosen entries of 100 qubits and  a correctness verification
check. The whole test lasted  $44$ minutes with no reported failure.
Note also that, interestingly, our tool performs with similar times  
for the exact and for the approximate case.

\smallskip
The \textbf{Bounded Repeat Until Success} is also parametrized by a unitary
subprocess. We led our experiments on both the \emph{parallel Hadamard circuit} \fv{and instances of the Phase Estimation introduced above}. 
\fv{As an example,  the Phase Estimation case with $300$ wires seeks the best $90$ bits approximation of a  $120$ phase qubits eigenvalue. \hqbricks{} verifies, in 0.534 seconds,  that the process would  succeed 
with probability $> 0.9945$ after $10$ tries.  
While Repeat-Until-Success (also often called \textit{post-selection}) is a fundamental pattern from quantum computing, the present result  is  the first attempt to validate it symbolically}.

\smallskip
\fv{Symbolic execution of \textbf{quantum error correction}~\cite{fang2024symbolic,huang2025efficient} and fault-tolerance (i.e., that errors propagate controllably)~\cite{Verifying_Fault_Chen_2025} was recently addressed. 
These works 
 are
 specialized to the modeling of error correction, featuring various correction
schemes and state injection. }
\fv{ What our experiments highlight, however, is the flexibility of our approach, 
which is able to address this class of problem without any extension. 
For instance, while the codes are different, making the comparison not fully fair, we verified the 3-qubit code with $100$ logical qubits, each one requiring $5$ quantum wires and
$2$ classical wires in $0.101$ seconds, while, e.g., experiments from~\cite{huang2025efficient}
compute a 16-wire-large case featuring the $[[6,1,3]]$ stabilizer code in
\SI{0.252}{s}. These results establish our approach as a promising candidate
for the symbolic verification of error correction at the FTQC scale.}

\smallskip

\textbf{Unitary circuit equivalence: the Quantum Fourier Transform.}
\fv{As a side effect, our implementation  also improve the analysis for the restricted unitary case.
It is a well-known fact that, for unitaries, the equivalence problem (checking whether two circuits $C_1$ and $C_2$ are functionally equivalent) reduces to a case of specifications verification (namely, whether the sequence $C_1 ;C^{\dagger}_2$ behaves as \Skip). To illustrate the performance of \hqbricks{} in this setting, we conducted equivalence checking experiments between the standard version of the Quantum Fourier Transform and variants obtained by shuffling gates that commute and adding Clifford gadgets (combinations of $X, H$ and $Z$ gates) with a neutral net effect. \cref{table:xps} reports  these experiments for input sizes varying from $16$ to $500$ qubits, and compares them  with similar experiments from the state of the art. 
\hqbricks{} performs better than any found comparative experiments by a factor of, at least $\frac{0.04}{0.024697}\simeq 1.6$ (QCEC for the 32 qubits case), and orders of magnitude in most cases.}

\myparagraph{Conclusion}
\hqbricks{} performs very well for a wide range of standard patterns from quantum  computing,
including  communication protocols,
post-selection, gate-to-gate circuit composition, and quantum error correction. 

\disc{The only comparison we found in the literature is from the
symbolic execution symQV tool~\cite{bauer2023symqv}, with 
the single qubit teleportation case and Another significant comparative consideration is to highlight that, as opposed to~\cite{bauer2023symqv}, \hqbricks features cases of industrial cases scale (up to thousands of qubits).
}

\newcommand{\myfootnote}[1]{\color{blu}#1\color{black}}

\section{Related Work}
In the last decade, several efforts have aimed to bring formal methods to quantum
programming~\cite{chareton2021automated, qhlprover,
hietala2020proving,coqq,lewis2024automated}. 
We extend prior work on path-sum analyses~\cite{chareton2021automated,
Vilmart2020SOP,amy2018towards,amy2023complete,deng2024case} to the hybrid case.
This allows tractable semantics and specifications, avoiding heavy
formalized density operators while providing flexibility. Flexibility is both
horizontal --- the solution supports both quantum and classical
computation, and vertical ---  both low-/high-level programming
are handled through generic constructors such as the quantum control and gate-to-gate circuit composition.
\smallskip

Our overall proposal gets inspiration from different ingredients that have been
developed separately in the literature.
First are
path-sums~\cite{amy2018towards,Vilmart2020SOP} (for the unitary parts),
including unbalanced~\cite{amy2023complete,deng2024case} or
parameterized~\cite{chareton2021automated,deng2024case} versions. As an extension
of path-sums, the closest work is certainly~\cite{Vilmart2020SOP}. With regard
to this development, our main contributions are the derivation of hybrid
path-sums from \hqbricks{} quantum programs, the formalization of both quantum
and classical control, and the extension of rewriting and simplification tools
with local reasoning facilities.
Additional inspirations for this work include hybrid state modeling~\cite{qhlcv,
unruh2021quantum} for the non-unitary parts and quantum Hoare
logic~\cite{qhlprover, ying2012floyd, unruh2019quantumghost} for
specification.    
\smallskip

An alternative line of research considers 
the  equivalence checking problem for instantiated
quantum circuits. It benefits 
from a variety of symbolic representations
(automata~\cite{chen2023autoq, chen2023automata},
path-sums~\cite{amy2018towards}, binary decision
graphs~\cite{sistla2023symbolic,10.1145/3651157}, etc.) and/or verification
techniques (symbolic execution~\cite{burgholzer2021qcec, bauer2023symqv}, model
checking~\cite{10.1007/978-3-540-70545-1_51,10.1007/978-3-540-70545-1_51},
assisted proofs~\cite{coqq,qhlprover}, etc). 

\smallskip
\fv{Concerning the analysis  of hybrid classical/quantum computing, several propositions~\cite{burgholzer2021qcec,hong2022equivalence,ricciardi2025quantum} take profit on the \textit{deferred measurement principle}. An important bottleneck of this approach is its intrinsic limitation to hybrid circuits with `unitary behavior' in the sense that discarding non important parts of the memory is forbidden. This severely limiting aspect  was identified and provided partial solutions in~\cite{hong2022equivalence, ricciardi2025quantum}.}

\smallskip
A last line of  research considers fully automatic verification of
simpler properties of quantum programs,  such as  the validity requirements
pointed out in
Section~\ref{sec:hqbricks-syntax}~\cite{rand2019formal,randthesis,garrigue2023typed},
or state abstractions~\cite{yu2021quantum, perdrix2008quantum,
bichsel2023abstraqt}. These works also include state-dependent properties, such
as the management (allocation and freeing) of ancillae, with automated
guaranteed synthesis of uncomputation
subroutines~\cite{silq,paradis2021unqomp,paradis2024reqomp, seidel2024qrisp}.

\section{Conclusion}%
\label{sec:discussion}
In this paper, we introduce the \hps{} formalism,   a  framework for the verification of
hybrid quantum programs that includes a compact symbolic representation, formal reasoning
capabilities, a core assertion language and a symbolic execution mechanism. This framework has
been implemented into a working prototype, and we demonstrated its relevance through a diverse
panel of typical hybrid case studies, ranging from quantum communication to computing and error
correction.

A current limitation of this work is that the control structures are limited to bounded loops.
However, since  \hps{} contain all the necessary information for characterizing their associated multiworld structure in terms of measurable spaces, we plan to generalize their usage to unbounded
loops. Other directions for future work include investigating the (possible extension towards)
completeness of our HPS equational theory, extension to parametrized hybrid quantum programs
and potential integration of SMT solvers.

\section*{Data-Availability Statement}

The reproduction package is available on Zenodo at \url{https://doi.org/10.5281/zenodo.19080601} (for the evaluated version, see\cite{HybridPathSums-artifact}). For reuse, the project is available on GitHub at \url{https://github.com/Qbricks/hqbricks}. \todo{un lien vers version avec supplementary material ?}

\begin{acks}
This work has been partially funded by the French National
 Research Agency (ANR)
 within the
framework of “Plan France 2030”, under the research projects EPIQ ANR-22-
PETQ-0007 and HQI-R\&D ANR-22-PNCQ-0002.
\end{acks}

 \vfill
 \bibliographystyle{ACM-Reference-Format}
 \bibliography{main}
 \newpage
\appendix
  \begin{center}
    \huge
\textbf{Supplementary material for the article}
\\
\vspace{.3cm}
\textbf{``Hybrid Path-Sums for Hybrid Quantum Programs''}
\vspace{.7cm}
\end{center}

\normalsize

  \section{Proofs and Supplementary Theorems}%
\label{app:proofs}

\subsection{Supplementary Material Helpful for the Proofs}
\begin{theorem}[Conservation of probabilities]%
    \label{thm:conservation-prob}
    For any program $p$, \hps{} $\h$, and history $\eta$, we have
    \[
    |\xi(\h)(\eta)|^2 = \sum_{\eta' \geq \eta} |\xi(\semantics{p}(\h))(\eta')|^2
    \]
    In other words, the probability of being in a future of $\eta$ after performing the program $p$ is the same as the probability of having been in $\eta$ before performing $p$. 
\end{theorem}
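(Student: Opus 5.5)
We argue by structural induction on the (closed, fully unrolled) program $p$, following the clauses of \cref{fig:denot-sem}. Here $\eta' \geq \eta$ means that $\eta'$ extends the history $\eta$: its older rows coincide with $\eta$ and it may have newer rows on top. For every constructor we relate $\xi(\semantics{p}(\h))$ to $\xi(\h)$ world by world. The key observation is that no construct ever rewrites an existing row of $\F_{\Cl}$. Measurement, classical assignment and initialisation only append rows or modify $\F_{\Qu}$. Hence every path $\vec y$ of $\semantics{p}(\h)$ that lies in a world $\eta' \geq \eta$ restricts to a path of $\h$ lying in $\eta$, and the sum over $\eta' \geq \eta$ is a regrouping of a sum indexed by paths of $\h$ in $\eta$.

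\textbf{Base cases.} For \textbf{Skip}, $\init$, $X$ and $Z_k$, the set of paths and the classical memory are unchanged, so $\eta' = \eta$. The map $\xi(\semantics{p}(\h))(\eta)$ is obtained from $\xi(\h)(\eta)$ by a norm-preserving linear map on the quantum part. This is a permutation of basis states for $X$ and a phase for $Z_k$. For $\init$ we rely on the well-formedness hypothesis that the register is freshly allocated, so it is a tensoring with $\ket{0}{}$. For $H$, a fresh variable $y$ is added, but it does not occur in $\F_{\Cl}$, so the world partition is unchanged. Each world vector is multiplied by $I\otimes H$ on $\qu$, which is unitary; this is the standard path-sum correctness, as in the $\xi$ computations of \cref{theorem:sound-rewrite-gen}. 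For $\textbf{Measure}(\Qu,\Cl)$, the world $\eta$ splits into the two extensions $\eta\cdot[\Cl{=}0]$ and $\eta\cdot[\Cl{=}1]$. The corresponding vectors are the projections $\ketbra{b}{b}_\qu\,\xi(\h)(\eta)$, and the Pythagorean identity gives $|\xi(\h)(\eta)|^2 = \sum_b |P_b\,\xi(\h)(\eta)|^2$. For $\Cl_1 := f(\get{\Cl_2})$, the appended value is a function of the top row of $\eta$, so each world has exactly one extension and the vector is unchanged.

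\textbf{Inductive cases.} Sequencing follows from the induction hypothesis applied twice and the transitivity of $\geq$:
\[
\sum_{\eta''\geq\eta}|\xi(\semantics{q}\semantics{p}\h)(\eta'')|^2=\sum_{\eta'\geq\eta}\sum_{\eta''\geq\eta'}|\cdots|^2=\sum_{\eta'\geq\eta}|\xi(\semantics{p}\h)(\eta')|^2 .
\]
In the middle sum we index $\eta'$ only over histories of the same age as $\semantics{p}\h$, so that each $\eta''$ has a unique such $\eta'$ below it. The \textbf{For} loop is iterated sequencing. For classical \textbf{If}, the condition $\kappa$ depends only on the top row of $\eta$. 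The set of worlds of $\h$ therefore splits into those satisfying $\kappa$ and those that do not, $\h=\h_\kappa\psadd\h_{\lnot\kappa}$, and by the linearity theorem for $\psadd$ each world of $\semantics{\cdot}(\h)$ comes entirely from one branch. We then apply the induction hypothesis to the relevant branch. For quantum \textbf{If}, both branches are in $\QProg$, so no row is added. By the Q-if typing side condition the control is untouched, so $\kappa\,\semantics{p}+(1-\kappa)\semantics{q}$ acts on each world vector as the controlled unitary $\ketbra{1}{1}_\kappa\otimes U_p+\ketbra{0}{0}_\kappa\otimes U_q$. This operator is unitary, and first we show by a separate induction on $\QProg$ that unitary programs act as a unitary on each world.

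\textbf{Main obstacle.} The delicate point is the world structure when histories of different paths may coincide. We must make sure that restricting a path $\vec y'$ of $\semantics{p}(\h)$ to the old rows really gives a well-defined path of $\h$. This fails if fresh path variables could appear in old rows, but old rows are never touched. We must also make sure that contributions from distinct $\eta$ never merge into a single $\eta'$, and this holds because $\eta'$ determines its prefix. We therefore state and prove first, by the same induction, the invariant that $\semantics{p}$ preserves the old rows of $\F_{\Cl}$ and that new path variables occur only in new rows or in $\F_{\Qu},\P,\N$. The probability identity then reduces in each case to unitarity or to the orthogonal-projection identity above.
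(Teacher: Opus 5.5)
Your proposal is correct and takes essentially the same route as the paper. Both argue by structural induction on $p$. Unitary (\QProg) constructs act world-wise as classically parametrized unitaries and leave the history unchanged. Measurement splits each world into its two extensions, by the Born rule / Pythagorean identity, while classical assignment pushes each world to a unique extension. Sequencing and \textbf{For} follow from transitivity of the future order, and classical \textbf{If} follows because the world supports of the two branches are disjoint.

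Two of your additions usefully make precise what the paper's sketch leaves implicit. Your old-rows invariant plays the role of the paper's conservation-of-histories corollary. Your well-formedness hypothesis restricting \textbf{Init} to fresh registers is genuinely needed, since the paper's proof does not treat \textbf{Init} and re-initializing a superposed qubit would break the identity.
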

\begin{proof}

    The proof is by induction on $p$. We elaborate on the case when $p$ is a
    \QProg. In such a case, $\llbracket p \rrbracket$ has the effect of a
    unitary operation on $\h$ possibly parametrized by classical variables (e.g.
    classical/hybrid conditioning) and does not change any classical memory;
    therefore, $\xi(\h)$ and $\xi(\llbracket p \rrbracket (\h))$ have the same
    supports of histories of uniform length $l$, so that if $\xi(\llbracket p
    \rrbracket (\h)(\eta')) \neq 0$ and $\xi(\h)(\eta) \neq 0$ and $\eta' \geq
    \eta$, then, $\eta = \eta'$. On the other hand, we have $\xi(\llbracket p
    \rrbracket (\h))(\eta) = U_\eta \xi(\h)(\eta)$. Therefore, $\sum_{\eta' \geq
    \eta}|\xi(\llbracket p \rrbracket (\h))(\eta')|^2 = |\xi(\llbracket p
    \rrbracket (\h))(\eta')|^2 = |\xi(\h)(\eta)|^2 $. 

    Note: up to this theorem, the set $\U$ of operators had no explicit reason to be
    limited to unitaries; this theorem is what imposes unitarity.

    We will not elaborate fully on all the constructs of programs, but we note
    that the proof relies heavily on the properties of the temporal order
    $\leq$, the $L^2$ norm, the Born rule, and the structure of $\psadd$ with
    respect to $\xi$. Notably, the cases of sequential composition and
    \textbf{For} require the transitivity of $\leq$. Classical assignment does
    not change quantum states, but pushes the histories forward one step in
    time, and the equality in the case of measurement is justified by the
    coherence of the Born rule with the $L^2$ justifies the equality. Finally,
    for \textbf{If}s with classical conditions, we rely heavily on the fact that
    $\kappa$ being classical guarantees that $\xi(\kappa \h)$ and 
    $\xi(\bar \kappa \h)$ have disjoint supports of histories, so that the
    $\psadd$ operation in  $\semantics p (\kappa \h) \psadd \semantics q 
    (\bar \kappa \h)$ corresponds to a disjoint union, and does not allow
    different worlds to interfere.
    \end{proof}
    
The conservation of probabilities is a generalization of unitarity for hybrid
programs, and closely parallels the preservation of trace in the language of
density operators.

\begin{corollary}[Conservation of histories]%
    \label{thm:conservation-histories}
    For any \hps{} $\h$ and any program $p$, if $\xi(\h)$ is supported on $S = \{ \eta | \xi(\h)(\eta) \neq 0\}$, then $\xi(\llbracket p \rrbracket (\h))$ is supported exclusively on futures of $S$; that is, if $\xi(\llbracket p \rrbracket (\h))(\eta) \neq 0$, then $\eta$ is the future of some $\eta' \in S$.

    In particular, if $\xi(\h)$ and $\xi(\h')$ have disjoint support of
    histories of the same length $l$, and $p$ and $q$ are any two programs,
    $\xi(\llbracket p \rrbracket (\h))$ and $\xi(\llbracket q \rrbracket (\h'))$
    also have disjoint supports, and $\h \psadd \h'$ refers purely to the union
    part of the operator $\psadd$. 
\end{corollary}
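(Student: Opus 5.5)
The corollary follows directly from Theorem~\ref{thm:conservation-prob}, read as a statement about supports. For the first claim, I would argue by contraposition. Suppose $\eta$ is a history of $\xi(\semantics{p}(\h))$ with $\xi(\semantics{p}(\h))(\eta) \neq 0$. Let $\eta'$ be its truncation to the age of $\h$, i.e.\ its unique past of that length. Every history of $\semantics{p}(\h)$ extends some history of age equal to that of $\h$, because the semantics only appends rows to $\F_{\Cl}$: measurement and assignment push, and unitaries and initialisation leave the classical stack unchanged. This structural fact can be checked constructor by constructor on \cref{fig:denot-sem}, with the \textbf{If} case using that $\psadd$ combines \hps{} of equal age, and with sequencing and \textbf{For} handled by transitivity of $\leq$.

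Now apply Theorem~\ref{thm:conservation-prob} at $\eta'$:
\[
|\xi(\h)(\eta')|^2 = \sum_{\eta'' \geq \eta'} |\xi(\semantics{p}(\h))(\eta'')|^2 \geq |\xi(\semantics{p}(\h))(\eta)|^2 > 0 .
\]
Hence $\xi(\h)(\eta') \neq 0$, so $\eta' \in S$, and $\eta$ is a future of an element of $S$. Nonnegativity of every term in the sum is what makes this lower bound legitimate.

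For the second claim, let $S$ and $S'$ be the supports of $\xi(\h)$ and $\xi(\h')$. These are disjoint sets of histories, all of the same length $l$. By the first part, any $\eta$ in the support of $\xi(\semantics{p}(\h))$ has its length-$l$ truncation in $S$. Likewise, any $\eta$ in the support of $\xi(\semantics{q}(\h'))$ has its length-$l$ truncation in $S'$. A single history has exactly one truncation of length $l$, so no $\eta$ can lie in both supports. Finally, because $\xi(\h\psadd\h') = \xi(\h)+\xi(\h')$ pointwise (the earlier theorem on $\psadd$) and the summands have disjoint supports, the sum is a disjoint union of the two world-maps: no world receives contributions from both operands, so there is no interference.

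The main obstacle is the structural claim that the truncation of a future history to length $l$ is well defined and comparable under $\leq$. This needs the histories of $\semantics{p}(\h)$ to all extend histories of the original age. In particular, the classical \textbf{If} case must combine branches of equal age. The typing side-condition requiring equal allocation profiles in both branches is what I would invoke to guarantee that both branches push the same number of rows.
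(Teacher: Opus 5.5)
Your proof is correct and is essentially the paper's own route: the paper gives no separate argument and presents the statement as an immediate corollary of Theorem~\ref{thm:conservation-prob}, and your truncation to $\pi_l(\eta)$ with nonnegativity of the summands, followed by uniqueness of the length-$l$ prefix for the disjointness claim, is the natural way to spell that out. One small remark: you only need that $\semantics{p}$ never shortens the classical stack, which holds in each branch separately, not that both branches of an \textbf{If} push the same number of rows. So the appeal to the allocation-profile typing condition is unnecessary, and that condition constrains which registers are allocated, not how many rows are pushed.
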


\begin{definition}[History and time]

    A history $\eta$ of length/age $l \in \mathbb N$ (denoted $|\eta| = l$) is a
    stack of length $l$ of address-indexed tuples of booleans all of the same
    length $l$.  A history represents the values taken by the classical memory
    over $l$ time-steps. As stacks, histories are subject to prefix functions
    $\pi_{l'}$ and suffix functions $\sigma_{l'}$. When $l' \leq |\eta|$, the
    $\pi_{l'}(\eta)$ and $\sigma_{l'}(\eta)$ are the prefix and suffix of $\eta$
    of length $l'$ respectively. When $l' > |\eta|$, $\pi_{l'}(\eta) = \eta
    \histconcat {\sigma_1(\eta)}^{(l'-|\eta|)}$ and $\sigma_{l'}(\eta) =
    {\pi_1(\eta)}^{(l'-|\eta|)} \histconcat \eta$ where $\histconcat$ is
    concatenation and $\eta^n$ is $\bighistconcat_i^n (\eta)$. These three
    operations $|\cdot|$, $\pi$, and
    $\sigma$ can be extended to \hps{} by applying to their symbolic
    histories in their classical output. $f(\pasum{\P}{\N}{\F_{\text{Cl}}
    \F_{\text{Qu}}}) = \pasum{\P}{\N}{f(\F_\text{Cl}) \F_{\text{Qu}}}$ for $f
    \in \{ \pi, \sigma\}$ and $\text{age}(\h) = |\F_{\text{Cl}}|$ which is uniform
    across all paths. We then get three orders on histories/\hps{}. We give them
    for histories, and they are defined exactly the same for \hps{}.
    \[
        \begin{array}{r r c c l}
        \text{Age}:& \text{young / old}: \quad & \eta \leq_a \eta' \quad &\equivdef
        \quad & |\eta| \leq |\eta'| \\
        \text{Time}:& \text{past / future}: \quad & \eta \leq_\pi \eta' \quad
                                          &\equivdef \quad & |\eta| \leq
                                          |\eta'| \land \pi_{|\eta|}(\eta') =
                                          \eta \\
        \text{Memory}:& \text{veiling / awakening}: \quad & \eta \leq_\sigma \eta' \quad
                                                  &\equivdef \quad & |\eta|
                                                  \leq |\eta'| \land
                                                  \sigma_{|\eta|}(\eta') = \eta
        \end{array}
    \]
    
\end{definition}

\begin{proof} (Sketch)

The construction of $\phi$ is by induction on the constructors of the type
\Prog, after normalization of the terms according to the rules of the
simply-typed $\lambda$-calculus introduced. In this proof sketch, we select only
a sample of the more interesting translation cases from \hqbricks{} to
\textsc{qhl-cv} to display.
\begin{itemize}
    \item If $p : \QProg$, $p$ represents a unitary, so we can define $\phi(p) = (\bar q := U)$.
    \item When the set of programs is quotiented over the monoid axioms 
    corresponding to the sequencing operator $;$ and its identity 
    $\textbf{Skip}$ 0-- that is associativity of $;$ and identity of 
    \textbf{Skip} with respect to $;$ 0-- $\phi$ is a morphism of monoids; that 
    is, $\phi(p;q) = 
    \phi(p);\phi(q)$ and $\phi(\textbf{Skip}) = \textbf{Skip}$.
    \item A closed \textbf{For} term has its bounds instantiated; therefore, it
    can be translated directly as a sequencing of its terms: $\phi(\for{x_n^I} i
    j p) = \phi(p[x_n^I:=i] ; \ldots ; p[x_n^I := j])$. In terms of the quotient through the monoid rules above,
    this expresses that $\for{x_n^I} i  j p$ is definable in the monoid as $\prod_{x_n^I = i}^j p$.
\end{itemize}

The reason such quotienting can be meaningfully used is that the interpretation of
programs $\llbracket \cdot \rrbracket$ remains well-defined modulo this quotient.
With these definitions laid down, the proof of $\llbracket \phi(p)
\rrbracket(\phi_{\llbracket\rrbracket}(\h)) =
\phi_{\semantics{}}(\llbracket p \rrbracket(\h))$ is a relatively
automatic induction on $p$. 

\end{proof}

\begin{lemma}[Linearity and consistency of $\xi \circ \llbracket p \rrbracket$]
    We have $\forall \prog, \xi(\bh) = \xi(\bh') \implies
    \xi(\semantics{\prog}(\bh)) = \xi(\semantics{\prog}(\bh')$ and $(\xi \circ \llbracket p
    \rrbracket)(\h \psadd \h') = (\xi \circ \llbracket p \rrbracket) (\h) + (\xi
    \circ \llbracket p \rrbracket) (\h')$. Furthermore, we have $\bh \prequiv
    \bh' \implies \llbracket p \rrbracket (\h) \prequiv \llbracket p \rrbracket
    (\h')$ 
\end{lemma}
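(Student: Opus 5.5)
We argue by structural induction on the closed, normalized program $\prog$ (after unrolling \textbf{For} loops and quotienting by the monoid laws of $;$ and \textbf{Skip}, as in the proof sketch above). The key idea is to establish a single stronger statement for each constructor: there is a function $F_\prog$ on vector maps (maps from histories to vectors of the Hilbert space) such that $\xi(\semantics{\prog}(\bh)) = F_\prog(\xi(\bh))$, and $F_\prog$ is additive. Both the consistency claim and the linearity claim then follow at once. Consistency holds because $\xi(\bh)=\xi(\bh')$ implies $F_\prog(\xi(\bh)) = F_\prog(\xi(\bh'))$. Linearity follows from the theorem stated earlier that $\xi(\h\psadd\h') = \xi(\h)+\xi(\h')$, combined with additivity of $F_\prog$.

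The base cases are handled by reading off \cref{fig:denot-sem}.
\begin{itemize}
\item For \textbf{Skip}, $F$ is the identity.
\item For $H$, $X$ and $Z_k$ (and more generally any \QProg{} without control), the effect on $\xi$ is pointwise: $F(v)(\eta) = U\,v(\eta)$. For $H$ this amounts to checking that adding a fresh path variable $y$, the phase $\frac{y\get{\qu}}{2}$ and the factor $\frac1{\sqrt2}$ exactly realizes $H$ on each path, then summing over paths with fixed history.
\item For \textbf{Measure}, \Init{} and classical assignment $\cre_1 := f(\cre_2)$, the history stack grows. A new history $\eta'$ is a pair consisting of an old history $\eta$ and a new top row. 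The new vector $F(v)(\eta')$ is a fixed linear map applied to $v(\eta)$: the projector $\ketbra{b}{b}$ on $\qu$ when the pushed value is $b$, or the indicator that the new row equals $f(\get\eta)$. This reorganization of the sum over paths by the value of the new top row is the substantive computation.
\item \Init{} is likewise a linear map on $v(\eta)$, namely a reset by partial trace-free projection-and-relabel applied at path level. We must check it is well defined at the level of $\xi$ and not merely on path representatives.
\end{itemize}

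The inductive cases are as follows.
\begin{itemize}
\item Sequencing is handled by $F_{p;q}=F_q\circ F_p$, and \textbf{For} by iterating this.
\item The classical conditional is treated via the decomposition $\bh = \kappa\bh \psadd (1-\kappa)\bh$. Since $\kappa$ depends only on $\get{\Cl}$, $\xi(\kappa\bh)$ is the restriction of $\xi(\bh)$ to histories satisfying $\kappa$. So $F$ restricts, applies $F_p$ and $F_q$ separately, and adds. Additivity uses disjointness of supports (\cref{thm:conservation-histories}).
\item The quantum conditional on a \QProg{} reduces to a controlled unitary acting pointwise in $\eta$, by the typing condition $\qreg(p)\cap\qreg(b)=\emptyset$.
\end{itemize}

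For the third claim, $\bh\prequiv\bh'\implies\semantics{\prog}(\bh)\prequiv\semantics{\prog}(\bh')$, we cannot rely only on $\xi$, because $\prequiv$ is syntactic. We therefore argue by induction on the derivation of $\bh\prequiv\bh'$, showing that each rule (\myrule{HH}, \myrule{PB}, \myrule{Filter}, \myrule{CV}, \myrule{FD}, \myrule{Split}, \myrule{Plus}, \myrule{Times}) commutes with each primitive transformation of \cref{fig:denot-sem}. The primitives only substitute into, or append to, the output signature and phase, and they introduce fresh path variables. Hence a rewrite applicable to $\bh$ remains applicable to $\semantics{\prog}(\bh)$, possibly after renaming fresh variables with \myrule{CV}.

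The main obstacle is the side conditions of the rules. In \myrule{HH}, $\Vec{y_1}$ must not occur in $\bF_{\Cl}$. After a \textbf{Measure}, the quantum value containing $\Vec{y_1}$ can be copied into classical memory, apparently violating the condition. We first try to perform the substitution $[\Vec{y_1}\leftarrow f]$ before executing $\prog$, which is legitimate because $\semantics{\prog}$ commutes with substitution of path variables. We then check that the resulting terms agree. If this fails, we fall back on proving the weaker, $\xi$-level statement and deriving $\prequiv$ through a completeness-style argument restricted to the rules used.
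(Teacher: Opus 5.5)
Your treatment of the first two claims is correct, and it is cleaner than the paper's. The paper proves additivity of $\xi\circ\semantics{p}$ by induction on $p$. In the sequencing case it invokes consistency ($\xi(\bh)=\xi(\bh')\Rightarrow\xi(\semantics{p}(\bh))=\xi(\semantics{p}(\bh'))$) as a fact attributed to another proof. Your invariant $\xi(\semantics{p}(\bh))=F_p(\xi(\bh))$ with $F_p$ additive gives consistency for free and makes sequencing trivial ($F_{p;q}=F_q\circ F_p$), so your argument is self-contained. There are two small slips. \Init{} does not push a row onto the history; it is the per-path linear map $\ket{b}{}\mapsto\ket{0}{}$ on $\Qu$. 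And additivity in the classical-conditional case does not need disjointness of supports, since restriction to $\kappa$-histories is already linear.

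The gap is in the third claim. The paper regards it as immediate: $\semantics{p}(\bh)$ is $\bh$ plugged into a fixed context $\bh_p[\cdot]$, and $\prequiv$ is treated as a congruence. You instead induct on the derivation of $\bh\prequiv\bh'$. You correctly observe that \myrule{HH} can become blocked once a \textbf{Measure} copies $\Vec{y_1}$ into $\bF_{\Cl}$, but you leave that step unresolved, and neither of your two options closes it.

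Your first attempt does not help. Commuting $\semantics{p}$ with $[\Vec{y_1}\leftarrow f]$ only rewrites $\semantics{p}(\bh')$ as $\semantics{p}(\pasum{\bP}{\bF}{2^{\#\Vec{y_0}}\bN})[\Vec{y_1}\leftarrow f]$. Meanwhile $\semantics{p}(\bh)$ still carries $\Vec{y_0},\Vec{y_1}$ and the phase $\frac{\Vec{y_0}\cdot(\Vec{y_1}+\lift{f})}{2}$. Connecting the two therefore still needs exactly the blocked \myrule{HH} step. The fallback fails outright: no completeness result exists for this rule set (the paper lists completeness as future work), so $\xi$-equality cannot be turned into a $\prequiv$-derivation.

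The missing idea is to route through the appendix's derivation of \myrule{HH} from \myrule{PB}, \myrule{CV} and \myrule{Filter}:
\begin{itemize}
\item \myrule{PB} only requires the summed variables $\Vec{y_0}$ to occur nowhere except in the phase term $\Vec{y_0}\cdot(\ldots)$.
\item \myrule{CV} (with $y_1\mapsto 1\oplus y_1\oplus f$) and \myrule{Filter} carry no side condition at all.
\item No primitive of \cref{fig:denot-sem} moves a phase-only variable into the output or the scalar. They only read output expressions, add phases built from them, and introduce fresh variables. Hence $\Vec{y_0}$ remains phase-only in $\semantics{p}(\bh)$.
\end{itemize}
So the chain applies in $\semantics{p}(\bh)$ even when $\Vec{y_1}$ sits in classical memory, and it lands on $\semantics{p}(\bh')$. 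This is also what one expects semantically, since the sum over $\Vec{y_0}$ takes place inside each world.

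You would also still need to treat the congruence rules \myrule{Plus} and \myrule{Times}. These require $\semantics{p}$ to commute with $\psadd$ and $\otimes$ up to $\prequiv$ at the syntactic level. In particular, fresh variables chosen independently in the two summands must be reconciled using \myrule{CV} and \myrule{Filter}.
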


\begin{proof}
    Any program $p$ has an associated \hpps{} $\bh_p$ with one free variable of
    type \hps{} such that, $\forall \h, \llbracket p \rrbracket(\h) = \h_p[\h]$.
    This can be seen immediately from the definition of $\llbracket \cdot
    \rrbracket$. The immediate consequence is that $\h \prequiv \h' \implies
    \llbracket p \rrbracket (\h) \prequiv \llbracket p \rrbracket (\h')$.

    Furthermore, in the proof of soundness of the Rewrite rule, we also proved a
    slightly different property that $\llbracket p \rrbracket$ respects equality
    modulo $\xi$: $\xi(\h) = \xi(\h') \implies \xi(\llbracket p \rrbracket (\h)) =
    \xi(\llbracket p \rrbracket (\h'))$. Linearity of $\xi \circ \llbracket p
    \rrbracket$ can be proven by induction on $p$ where, for the case of
    sequencing, we have used that $\llbracket p \rrbracket$ respects equality
    modulo $\xi$ and the inductive hypotheses. We note that linearity is trivial
    for \textbf{Skip} and unitaries.

    \begin{minipage}{0.30\textwidth}
    \begin{align*}
        &\xi(\llbracket p ; q \rrbracket (\h \psadd \h')) \\
        =&\xi(\llbracket q \rrbracket (\llbracket p \rrbracket (\h \psadd \h'))) \\
        =&\xi(\llbracket q \rrbracket (\llbracket p \rrbracket (\h) \psadd \llbracket p \rrbracket (\h')) \\
        =&\xi(\llbracket p ;q \rrbracket (\h) + \llbracket p ;q \rrbracket (\h')) 
    \end{align*}
    \end{minipage} \hfill
    \begin{minipage}{0.55\textwidth}
    \begin{align*}
        &\xi(\llbracket \Ifthene{\kappa}{p}{q} \rrbracket(\h \psadd\h')) \\
        =&\xi(\llbracket p \rrbracket (\kappa\h \psadd \kappa\h')) \psadd \llbracket q \rrbracket (\bar \kappa\h \psadd \bar \kappa\h')) \\
        =&\xi(\llbracket p \rrbracket (\kappa\h \psadd \kappa\h')) ) + \xi (\llbracket q \rrbracket (\bar \kappa\h \psadd \bar \kappa\h')) \\
        =&\xi(\llbracket p \rrbracket (\kappa\h)) + \xi( \llbracket p \rrbracket(\kappa\h')) + \xi (\llbracket q \rrbracket (\bar \kappa\h) + \llbracket q \rrbracket (\bar \kappa\h')) \\
        =&\xi(\llbracket p \rrbracket (\kappa\h) \psadd \llbracket q \rrbracket (\bar \kappa\h) + \xi( \llbracket p \rrbracket(\kappa\h') \psadd \llbracket q \rrbracket (\bar \kappa\h'))
    \end{align*}
    \end{minipage}

    \vskip 1em

Finally, both measurement and classical assignment are covered by the \hps{} form
$\h[c \leftarrow b]$ where $c$ is a classical register. Recall that $\xi(\h[c
\leftarrow b])$ is null on any history that does not have $b$ on the top of the
stack of $c$, so we will only consider the situations where it is.
\begin{align*}
    \xi((\h \psadd\h')[c \leftarrow b]) (\eta, c \leftarrow b)
    = &\xi((\h \psadd\h'))(\eta) \\
    = &\xi(\h)(\eta) + \xi(\h')(\eta) \\
    = &\xi(\h[c \leftarrow b])(\eta, c\leftarrow b) + \xi(\h'[c\leftarrow b])(\eta, c\leftarrow b)
\end{align*}

\end{proof}

\subsection{Relative Soundness of the Language}%
\label{full-proof-soundness-language}

We wish to show that our \hqbricks{} language and its \hps{} representation is
sound with respect to cq-states introduced in~\cite{qhlcv}

A cq-state $\Delta$ over a set of quantum variables $V$ is a function $\Sigma
\to \mathcal{D}(\mathcal{H}_V)$, where $\Sigma$ is the set of classical states
(an assignment of each classical variable to a boolean or integer value) and
$\mathcal{D}(\mathcal{H}_V)$ is the set of partial density operators on (the
Hilbert space generated by) $V$.
Furthermore, to be a cq-state $\Delta$ needs to verify the following conditions:
\begin{itemize}
    \item its support is countable
    \item $\textit{trace}(\Delta) := \sum\limits_{\sigma \in \Delta} \textit{trace}(\Delta(\sigma)) \leq 1$
\end{itemize}
That is for each possible value of the classical states, a (partial) density
operator is given such that the sum of their traces is at most 1.

The language of \hqbricks{} is not the same as the language used in~\cite{qhlcv},
and states in \hqbricks{} are \hpps{} while in~\cite{qhlcv}, they are cq-states. In
order to compare the two, we need to define a translation of languages
$\hqbricks \to \textsc{qhl-cv}$ as well as a translation of states $\hpps \to
(\Sigma \to \mathcal D(\mathcal H_V))$.

\begin{theorem}[Relative soundness of the language]%
\label{theo:language-soundness-app}
\hqbricks{} is sound with respect to \textsc{qhl-cv}: there is a mapping
$\phi_{\llbracket\rrbracket}$ of \hps{} into generalized cq-states, whose
restriction to \hps{} with norm lesser or equal to $1$ lends into cq-states, and a
mapping $\phi$ of closed \hqbricks{} \Prog{} terms into \textsc{qhl-cv} programs
such that \[\llbracket \phi(p) \rrbracket(\phi_{\llbracket\rrbracket}(\h)) =
\phi_{\llbracket\rrbracket}(\llbracket p \rrbracket(\h))\]
\end{theorem}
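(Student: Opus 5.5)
We take $\phi_{\llbracket\rrbracket} := \DO$ from \cref{def:density_maps}, seen as a map $c \mapsto \DO(\h)(c)$ from present classical values to (generalized, possibly unnormalized) positive operators. Its restriction to \hps{} of norm at most $1$ lands in cq-states, because $\mathrm{tr}(\DO(\h)(c)) = \texttt{proba}(\h)(c)$ and summing over $c$ gives $|\h|^2 \leq 1$; the support is finite, hence countable. For $\phi$ we use the translation sketched above: \QProg{} terms go to unitary assignments $\bar q := U$, \textbf{Init} to qubit initialization, \textbf{Measure} to the measurement statement with classical target, $\cre_1 := f(\cre_2)$ to classical assignment, classical \textbf{If} to \textbf{if}. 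Sequencing is treated as a monoid morphism, and closed \textbf{For} loops are unrolled into sequences. The proof is then an induction on the normal form of the closed program $p$.

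The central identity, used in every case, is $\DO(\h)(c) = \sum_{\get{\eta}=c} \ketbra{\xi(\h)(\eta)}{\xi(\h)(\eta)}$. It reduces each case to a statement about $\xi$ on individual histories.
\begin{itemize}
\item \textbf{Unitary case.} For $p:\QProg$ we show $\xi(\semantics{p}(\h))(\eta) = U\,\xi(\h)(\eta)$ by a sub-induction on \QProg. For $H$, $X$ and $Z_k$ this is a direct computation from \cref{fig:denot-sem}, exactly as for standard path-sums. Quantum \textbf{If} uses linearity of $\xi\circ\semantics{\cdot}$ (the lemma above) on the split $\kappa\h \psadd (1-\kappa)\h$, and the fact that $\kappa$ only reads qubits disjoint from the targets. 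Conjugating gives $U\DO(\h)(c)U^\dagger$.
\item \textbf{Measurement.} The new histories are $\eta$ extended by $b\in\{0,1\}$ on $\cre$, and $\xi$ at such a history is $\ket{b}{}\!\bra{b}{}_\q\,\xi(\h)(\eta)$. Grouping by the present value $\sigma[\cre:=b]$ then yields the cq-state measurement formula.
\item \textbf{Classical assignment.} This is analogous, with a deterministic extension of the history.
\item \textbf{Init.} We reset $\q$ to $0$, which corresponds to the qhl-cv initialization channel $\sum_k \ketbra{0}{k}\rho\ketbra{k}{0}$.
\item \textbf{Sequencing.} This follows from the induction hypotheses by composition.
\end{itemize}

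The main obstacle is classical \textbf{If}, together with the off-diagonal terms in measurement. In both, $\DO$ sums only over pairs of paths with equal \emph{full} history, while qhl-cv only groups by the present classical state. Interference between different histories with the same present value must therefore be shown to be absent. For \textbf{If} we rely on \cref{thm:conservation-histories}: $\xi(\kappa\h)$ and $\xi(\bar\kappa\h)$ have disjoint supports, and so do their images under $\semantics{p}$ and $\semantics{q}$. Hence $\psadd$ acts as a disjoint union of worlds, and $\DO$ splits as $\DO(\semantics{p}(\kappa\h)) + \DO(\semantics{q}(\bar\kappa\h))$. Since $\kappa$ depends only on the present value, this matches the case split on $\sigma\models\cb$.

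The remaining subtlety is that the induction hypothesis is stated for $\DO$, not $\xi$. We will therefore strengthen the induction to the statement $\DO(\semantics{p}(\h)) = \semantics{\phi(p)}_{cq}(\DO(\h))$ for all $\h$ (not just normalized ones). Applying it to the sub-\hps{} $\kappa\h$, whose $\DO$ is exactly the restriction of $\DO(\h)$ to $\{\sigma \mid \sigma\models\cb\}$, closes the case.
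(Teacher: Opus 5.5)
Your proposal is correct and follows the paper's proof. It uses the same $\phi_{\semantics{}} = \DO$, the same translation $\phi$ (closed \textbf{For} loops unrolled, $\QProg$ terms sent to unitaries) and the same structural induction. Your per-history identity $\DO(\h)(c) = \sum_{\get{\eta}=c} \ketbra{\xi(\h)(\eta)}{\xi(\h)(\eta)}$ and your appeal to \cref{thm:conservation-histories} make explicit what the paper's step $\phi(A)+\phi(B)=\phi(A\psadd B)$ in the classical \textbf{If} case silently relies on.

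The one point to tighten is \textbf{Init}, which the paper's proof does not treat at all. $\h[\q\leftarrow 0]$ only matches the reset channel $\sum_k \ketbra{0}{k}\rho\ketbra{k}{0}$ because the well-formedness conditions of \cref{sub:type-system} forbid re-initializing an allocated qubit. On a qubit already in superposition it makes paths interfere, e.g.\ it sends $\ket{-}{}$ to the zero vector. You should therefore invoke that freshness condition explicitly.
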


\begin{remark}
    $\phi$ is not surjective. For example, \hqbricks{} explicitly excludes
    unbounded loops, so that some \textsc{qhl-cv} \textbf{While} programs are
    not images of any \hqbricks{} program. More specifically, the image of $\phi$
    includes bounded while loops via $k := i; \textbf{while } (\kappa \land k
    \leq j) \textbf{ do } p; k := k + 1 \textbf{ done} \equiv p[k:=i] ; \ldots ;
    p[k:=j]$, and excludes general while loops, and the oracles of probabilistic
    classical assignment. However, the language of \hqbricks{} is open to
    extension by arbitrary unitaries. If we add unitaries corresponding to
    probability distributions $p$ such that $U_p : |\vec 0\rangle \mapsto
    \sqrt{p_i}|i\rangle$, probabilistic classical assignment $c :=_\$ p$ is also
    possible through the \hqbricks{} program $\textbf{Reset}(q); U_p(q);
    \textbf{Measure}(q, c)$. In essence, while $\phi$ is not surjective,
    it effectively only misses, intentionally,
    unbounded while loops, so the language of \hqbricks{} can be seen as an
    extension of that of \textsc{qhl-cv} with the restriction of bounded loops.
\end{remark}

\begin{proof} of theorem~\ref{theo:language-soundness}
    For a closed \hpps{} $\bh$, $\phi_{\semantics{}} (\bh) = \rho(\bh)$. For programs, we have a translation of QProg terms into unitaries as follows:
    \begin{align*}
        U(\textbf{Skip}) &= I \\
        U(U(q)) &= I_m \otimes U \otimes I_n \\
        U(p; q) &= U(q) U(p) \\
        U(\Ifthene{\kappa}{p}{q}) &= U(p)  P_{\kappa}  + U(q) P_{\bar \kappa}
    \end{align*}
    where $P_{\kappa}$ is the matrix defined for computational basis states $|\psi\rangle$ by $P_{\kappa} |\psi\rangle = \psi$ if $|\psi\rangle$ satisfies $\kappa$ given the state of the classical registers and $P_{\kappa} |\psi\rangle = 0$ otherwise. For a \textbf{closed} term of type Prog; that is, a program with all parameters specified,
    \begin{alignat*}{2}
        \phi(\textbf{Measure}(q, c)) &= c := \textbf{ meas } \mathcal M[q] \quad && \\
        \phi(c_1 := f(\lceil c_2 \rceil)) &= c_1 := c_2 \\
        \phi(p; q) &= \phi(p) ; \phi(q) \\
        \phi(\for{x^I_n} i j p)& =\\
        \multicolumn{3}{r}{$\phi(p[x_n^I := i]) ; \phi(p[x_n^I := i+1]) ; \cdots ; \phi(p[x_n^I := j])$}& \\  
        \phi(\Ifthene{\kappa}{p}{q}) &= \textbf{If } \kappa \textbf{ then } \phi(p) \textbf{ else } \phi(q) \textbf{ end}  && \text{ when }\kappa : \CBool \\
        \phi(p) &= \bar q *= U(p) && \text{ when } p : QProg \text{ and } \qreg(p) \subseteq q \\
    \end{alignat*}

    $\phi$ respects the monoid axioms for $;$ and \textbf{Skip}. Explicitly, 
    $\phi(\textbf{Skip}) = \textbf{Skip}$ and $\phi(p; q) = \phi(p) ; \phi(q)$
    by definition.

    The proof of the theorem follows by induction on program terms. We give the proof in full detail below.
    
    \begin{minipage}{0.45\textwidth}
    \begin{align*}
        &\llbracket \phi(\textbf{Skip}) \rrbracket(\phi(\bh)) \\
        =& \llbracket \textbf{Skip} \rrbracket(\phi(\bh)) \\   
        =& \phi(\bh) \\   
        =& \phi(\llbracket \textbf{Skip} \rrbracket (\bh))
    \end{align*}
    \end{minipage} \hfill
    \begin{minipage}{0.45\textwidth}
    \begin{align*}
        &\llbracket \phi(p; q) \rrbracket (\phi(\bh)) \\
        =& \llbracket \phi(p); \phi(q) \rrbracket (\phi(\bh)) \\
        =& \llbracket \phi(q) \rrbracket (\llbracket \phi(p) \rrbracket (\phi(\bh))) \\
        =& \phi(\llbracket q \rrbracket(\llbracket p \rrbracket (\bh)) \\
        =& \phi(\llbracket p; q \rrbracket (\bh)) \\
    \end{align*}
    \end{minipage}

    \begin{minipage}{0.45\textwidth}
    \begin{align*}
        &\llbracket \phi(c_1 := f(\lceil c_2 \rceil)) \rrbracket (\phi(\bh))\\
        = &\llbracket c_1 := f(c_2) \rrbracket (\bigoplus \langle c, \sum_{w, c(w) = c} |w\rangle \langle w| \rangle) \\
        = &(\bigoplus \langle c[c(f(c_2))/c_1], \sum_{w, c(w) = c} |w\rangle \langle w| \rangle) \\
        = &\phi(\bh[c_1 \leftarrow \lceil f(c_2) \rceil])
    \end{align*}
    \end{minipage} \hfill
    \begin{minipage}{0.45\textwidth}
    \begin{align*}
        &\llbracket \phi(\textbf{Measure }(q, c)) \rrbracket (\phi(\bh)) \\
        =& \llbracket c := \textbf{ meas } \mathcal M[q] \rrbracket (\bigoplus \langle c, \sum_{w, c(w) = c} |w\rangle \langle w|\rangle)  \\
        =& \sum_i \bigoplus \langle c[i/x], \sum_{w, c(w) = c} M_i |w\rangle \langle w| M_i^\dagger \rangle\\ 
        =& \phi(\bh[q \leftarrow c]) \\
        =& \phi(\llbracket \textbf{Measure }(q, c) \rrbracket(\bh))
    \end{align*}
    \end{minipage}

    If $\kappa : \CBool$,
    \begin{align*}
        &\llbracket \phi(\Ifthene{\kappa}{p}{q}) \rrbracket (\phi(\bh))\\
        =& \llbracket \textbf{If } \kappa \textbf{ then } \phi(p) \textbf{ else } \phi(q) \textbf{ end}) \rrbracket (\phi(\bh)) \\
        =& \llbracket \textbf{If } \kappa \textbf{ then } \phi(p) \textbf{ else } \phi(q) \textbf{ end}) \rrbracket (\bigoplus_{c} \langle c, \sum_{w,c(w) = c} |w\rangle \langle w| \rangle) \\
        =& \llbracket\phi(p) \rrbracket (\bigoplus_{c, c\vDash \kappa} \langle c, \sum_{w,c(w) = c} |w\rangle \langle w| \rangle) +\llbracket\phi(q)\rrbracket (\bigoplus_{c, c\vDash \bar \kappa} \langle c, \sum_{w,c(w) = c} |w\rangle \langle w| \rangle) \\
        =& \llbracket\phi(p) \rrbracket (\phi(\bkappa \bh)) +\llbracket\phi(q)\rrbracket (\phi(\bar \bkappa \bh)) \\
        =& \phi(\llbracket p \rrbracket (\bkappa \bh)) + \phi(\llbracket q\rrbracket (\bar \bkappa \bh)) \\
        =& \phi(\llbracket p \rrbracket (\bkappa \bh) \psadd \llbracket q\rrbracket (\bar \bkappa \bh)) \\
        =& \phi(\llbracket \Ifthene{\kappa}{p}{q} \rrbracket (\bh))
    \end{align*}

    If $p : \QProg$,
    \begin{align*}
        &\llbracket \phi(p) \rrbracket (\phi(\bh))\\
        =& \llbracket U(p) \rrbracket (\bigoplus_{c} \langle c, \sum_{w,c(w) = c} |w\rangle \langle w| \rangle) \\
        =& (\bigoplus_{c} \langle c, \sum_{w,c(w) = c} U(p) |w\rangle \langle w|
        {U(p)}^\dagger \rangle) \\
        =& (\bigoplus_{c} \langle c, \sum_{w,c(w) = c} |U(p) w\rangle \langle U(p) w| \rangle) \\
        =& \phi(\llbracket p \rrbracket (\bh))
    \end{align*}

    \begin{align*}
        \llbracket \phi(\for{x_n^I} i j p) \rrbracket (\phi(\bh))
        &= \llbracket \phi(p[x^I_n := i]) ; \phi(p[x^I_n := i+1] ; \ldots ; \phi([x^I_n := j]) \rrbracket (\phi(\bh)) \\
        &= \llbracket \phi(p[x^I_n := j]) \rrbracket \circ \cdots \circ \llbracket \phi(p[x^I_n := i] \rrbracket (\phi(\bh)) \\
        &= \llbracket \phi(p[x^I_n := j]) \rrbracket \circ \cdots \circ \llbracket p[x^I_n := i+1] \rrbracket (\phi(\llbracket p[x^I_n := i] \rrbracket (\bh))) \\
        &= \cdots \\
        &= \phi(\llbracket p[x^I_n := j] \rrbracket \circ \cdots \circ \llbracket p[x^I_n := i] \rrbracket (\bh))) \\
        &= \phi(\llbracket p[x^I_n := i] ; \ldots ; p[x^I_n := j] \rrbracket (\bh))) \\
        &= \phi(\llbracket \for{x_n^I} i j p \rrbracket (\bh))
    \end{align*}
\end{proof}

\subsection{Equational Theory}%
\label{app:equational-theory}

\begin{theorem}[Soundness of \hps{} reductions]
    For any closed \hpps{} $\bh$ and $\bh'$,
    \begin{itemize}
        \item If $\bh \prequiv \bh'$ then  $ \xi(\bh) = \xi(\bh')$
        \item If $\bh \equiv \bh'$ then $ \tilde\xi(\bh) = \tilde\xi(\bh')$
    \end{itemize}
\end{theorem}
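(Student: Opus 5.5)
The plan is an induction on the derivation of $\bh \prequiv \bh'$ (respectively $\bh \equiv \bh'$). Since $\prequiv$ is the least congruence containing the axioms \myrule{HH}, \myrule{PB}, \myrule{Filter}, \myrule{CV}, \myrule{FD}, \myrule{Split}, together with the algebraic laws of $\psadd$ and $\otimes$ and the compatibility rules \myrule{Plus} and \myrule{Times}, it suffices to check three things. First, every axiom instance preserves $\xi$. Second, reflexivity, symmetry and transitivity preserve equality of $\xi$, which is immediate. Third, the context rules preserve it. For the context rules I would rely on the earlier theorem stating $\xi(a\h_1 \psadd \h_2) = a\xi(\h_1)+\xi(\h_2)$ and $\xi(\h_1\otimes\h_2)=\xi(\h_1)\otimes\xi(\h_2)$. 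With it, \myrule{Plus} and \myrule{Times} follow at once, because $\xi$ of the compound term depends only on $\xi$ of its components. The same theorem also settles the vector-space laws (commutativity, distributivity, and so on), since they hold for the concrete sums and tensors.

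For the axioms I work world by world. Fix a history $\eta$ and write $\xi(\bh)(\eta)$ as the sum over $\vec y$ with $\F_{\Cl}(\vec y)=\eta$.

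\myrule{CV}: a bijection $\sigma$ of the summed variables merely reindexes this sum. The constraint $\F_{\Cl}(\vec y)=\eta$ is transported along with it, so the value is unchanged.

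\myrule{Filter}: the terms with some $y_j=0$ have zero amplitude and can be dropped. The remaining terms are exactly the substitution $\vec y\leftarrow 1$.

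\myrule{PB}: the variable $y$ does not occur in $\F$ (in particular not in $\F_{\Cl}$), so the world constraint does not involve $y$. Summing over $y$ first then gives $e^{2\pi i\P_1}(1+e^{2\pi i \P_2})$, and Euler's identity $1+e^{2\pi i\P_2}=2\cos(\pi\P_2)e^{2\pi i\P_2/2}$ turns this into the right-hand side.

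\myrule{HH}: I sum over $\vec y_0$ first. The side condition $\vec y_0\cap\support=\emptyset$ ensures that $\vec y_0$ appears only in the phase term $\vec y_0\cdot(\vec y_1+\lift f)/2$. The sum over $\vec y_0$ is therefore $\prod_j \sum_{y_{0,j}}(-1)^{y_{0,j}(y_{1,j}+\lift{f_j})}$, which equals $2^{\#\vec y_0}$ when $\vec y_1=f$ and $0$ otherwise. The side condition $\vec y_1\cap\pvar(f,\F_{\Cl})=\emptyset$ is what makes the substitution $\vec y_1\leftarrow f$ legitimate inside the world constraint: $\F_{\Cl}$ does not mention $\vec y_1$, so the set of paths in world $\eta$ is unaffected, and $f$ does not depend on $\vec y_1$, so the substitution is well defined.

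\myrule{Split}: I partition the sum according to the value of $y$. Using the definition of $\psadd$ with a fresh $y_f$ playing the role of $y$, the linearity theorem gives exactly the two halves.

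\myrule{FD}: this is the definition of $\otimes$, and the disjointness conditions ensure the worlds factor as pairs of histories.

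For $\equiv$, it remains to handle \myrule{PE}. From $\bh\prequiv\pasum{\P_1+\P_2}{\F}{\N}$, the first item already gives equality of $\xi$. Since $\P_2$ is a polynomial in the expressions stored in the top classical memory $\get{\Cl}$, it is constant on all paths with $\F_{\Cl}(\vec y)=\eta$. It therefore factors out of $\xi(\cdot)(\eta)$ as a global phase $e^{2\pi i \P_2(\eta)}$, and the two vectors coincide modulo global phase, hence under $\tilde\xi$.

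I expect the main obstacle to be \myrule{HH} (and to a lesser extent \myrule{Split}). The difficulty is the world-indexed restriction of the sum. Here I must use the side conditions carefully to argue that the summed variables never occur in $\F_{\Cl}$, so that summation over them commutes with the restriction to world $\eta$. I would also check that the integer lift $\lift f$ gives the correct parity, so that $(-1)^{y_0(y_1+\lift f)}$ agrees with $(-1)^{y_0(y_1\oplus f)}$.
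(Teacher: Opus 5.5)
Your argument is correct, and its skeleton is the paper's: \myrule{Split} is read off the defining sum of $\xi(\cdot)(\eta)$, and the algebraic laws of $\psadd$ and $\otimes$ follow from linearity of $\xi$ and its compatibility with $\otimes$, as do \myrule{Plus} and \myrule{Times}. \myrule{FD} holds by definition, and \myrule{PE} follows by factoring out a phase that is constant on each world.

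The genuine difference is in \myrule{Filter}, \myrule{CV}, \myrule{PB} and \myrule{HH}. The paper adds two auxiliary axioms, \myrule{Scalar} and \myrule{Phase} (pulling a constant scalar or phase out of an \hps{}), and the derived rule \myrule{Split-Many}. It then obtains these four rules as syntactic derivations from that small core plus the vector-space laws; for instance, \myrule{HH} comes via \myrule{PB}, then \myrule{CV}, then \myrule{Filter}. You instead verify each rule directly on the world-restricted sum.

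The paper's route shows the listed rules are admissible over a minimal semantic core, which helps if rules are added later. Yours is more self-contained and more careful. It makes explicit why summing out the eliminated variables commutes with the restriction $\texttt{o}_{\texttt{Cl}}(y)=\eta$. It also uses the exact identity $1+e^{2\pi i \texttt{p}_2}=2\cos(\pi \texttt{p}_2)\,e^{\pi i \texttt{p}_2}$, whereas the paper's \myrule{PB} derivation writes $2|\cos \pi Q|$ and so loses the sign when the cosine is negative.

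One small correction to your \myrule{HH} case. Disjointness of $y_0$ from the remaining support does not by itself say that $y_0$ is absent from the phase, output, scalar and $f$. Your character-sum computation needs exactly that freshness, in particular $y_0\notin f$. It is an implicit hypothesis of the rule, and the paper's route needs it too, through the side condition of \myrule{PB}. So state it as an assumption rather than deriving it from the side condition.
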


\begin{proof} of Theorem~\ref{theorem:sound-rewrite-gen}

\begin{prooftree}
    \AxiomC{$y \in \bsupport(\h)$}
    \RightLabel{\myrule{Split}}
    \UnaryInfC{$\h \prequiv \h[y:=0] \psadd \h[y:=1]$}
\end{prooftree}
\begin{prooftree}
    \AxiomC{}
    \RightLabel{\myrule{Scalar}}
    \UnaryInfC{$\pasum{\bP}{\bF}{\alpha \bN} \prequiv \alpha \pasum{\bP}{\bF}{\bN}$}
    \DisplayProof\hskip 1em
    \AxiomC{}
    \RightLabel{\myrule{Phase}}
    \UnaryInfC{$\pasum{\theta + \bP}{\bF}{\bN} \prequiv e^{2 \pi i \theta} \pasum{\bP}{\bF}{\bN}$}
\end{prooftree}

The soundness of these rules follows immediately once $\xi(\h)$ is expanded:
\[
    \xi(\h)(\eta) = \sum_{\Vec y_0, \text{history}(\h[\Vec y:= \Vec y_0]) =
    \eta} \bN(\Vec y_0)\cdot e^{2 \pi i \P(\Vec y_0)} |\F_\text{qu}\rangle
\] 
The Scalar and Phase rules express the factorization of constants out of the
sum, and Split expresses the decomposition of the sum as first summing over $y
\in \Vec y$, then over $\Vec y \setminus \{y\}$. 

The properties of $\xi$ being linear and respecting $\otimes$ imply the
soundness of the axioms of \hps{} forming a $\C$-vector space modulo $\prequiv$
and $\otimes$ being commutative, associative, and bilinear with respect to
$\psadd$ modulo $\prequiv$. In fact, these axioms modulo $\prequiv$ are nothing
more than the pullbacks of the respective axioms for $\xi(\h)$. The soundness of
the \myrule{Plus} and \myrule{Times} rules also follow naturally from that.

\mncom{I removed the mention of \myrule{Split-Merge} as I think it doesn't appear in the current version of the paper.}
We note also that the \myrule{Fact-Distr} rule is
correct by definition: the left-hand side of $\prequiv$ is \textbf{defined} as
the right-hand side.

We introduce another rule derived from \myrule{Split}, namely,
\myrule{Split-Many}, which is useful for the remainder of the proof.
\myrule{Split-Many} consists simply of repeating Split to many path variables;
its use lies in the brevity it provides.

\begin{prooftree}
    \AxiomC{$\Vec y \subseteq \bsupport(\h)$}
    \RightLabel{\myrule{Split-Many} --- $\#(\Vec{y})$}
    \UnaryInfC{$\h \prequiv \psadd_{t\in 2^{\#(\Vec y)}} \h[\Vec y := t]$}
\end{prooftree}

The remaining $\prequiv$ rules are proven sound by their construction using the
given rules.  We give the details of these constructions below.

\myrule{Filter}:
\begin{align*}
    \pasum{\bP}{\bF}{\left(\prod \Vec y\right)\bN}_{\support \cup \Vec y}
     &\prequiv_{\myrule{Split-Many}} \bigpsadd_{\Vec y_0 \in 2^{|\Vec y|} \setminus
     \{1\}} (\pasum{\bP}{\bF}{0\bN}_{\support}) \psadd
     \pasum{\bP}{\bF}{\bN}_{\support} \\
     &\prequiv_{\myrule{Scalar}} 0 \cdot \bigpsadd_{\Vec y_0 \in 2^{|\Vec y|}
     \setminus \{1\}} (\pasum{\bP}{\bF}{\bN}_{\support}) \psadd
     \pasum{\bP}{\bF}{\bN}_{\support} \\
     &\prequiv_{\mathbb C\text{-vector space}}  \pasum{\bP}{\bF}{\bN}_{\support}
     \\
\end{align*}

\myrule{Change-Var}:
\begin{align*}
    \h &\equiv_{\myrule{Split-Many}} \bigpsadd_{\Vec y_0 \in 2^{|\Vec y|}} \h[y :=
    y_0] \\
    &\equiv_{\mathbb C\text{-vector space}} \bigpsadd_{\Vec y_0 \in 2^{|\Vec
    y|}} \h[y := \sigma(y_0)] \\
    &= \bigpsadd_{\Vec y_0 \in 2^{|\Vec y|}} \h[y := \sigma(y)][y := y_0] \\
    &\equiv_{\myrule{Split-Many}} \h[y := \sigma(y)] \\
\end{align*}

\myrule{Phase-Bisector}:
\begin{align*}
\pasum{\bP+y \lift Q}{\bF}{\bN} \prequiv_{\myrule{Split-Many}}& \psadd_{\Vec y'_0
= \bsupport} (\pasum{\bP}{\bF}{\bN} \psadd \pasum{\bP + \lift Q}{\bF}{\bN}) \\
\prequiv_{\myrule{Phase}}& \psadd_{\Vec y'_0 = \bsupport} (\pasum{\bP(\Vec
y'_0)}{\bF(\Vec y'_0)}{\bN(\Vec y'_0)} \psadd e^{2 \pi i \lift Q(\Vec y'_0)}
\pasum{\bP(\Vec y'_0)}{\bF(\Vec y'_0)}{\bN(\Vec y'_0)})  \\
\prequiv_{\mathbb C\text{-vector space}}& \psadd_{\Vec y'_0 = \bsupport} (2|\cos \pi \lift Q(\Vec y'_0)| e^{2 \pi i \frac{\lift Q}{2}})
\pasum{\bP(\Vec y'_0)}{\bF(\Vec y'_0)}{\bN(\Vec y'_0)}\\
\prequiv_{\myrule{Phase, Scalar}}& \pasum{\bP + \frac{\lift Q}{2}}{\bF}{\bN}
\end{align*}

\myrule{HH}:
\begin{align*}
    \pasum{\bP + \frac{y_0 \cdot (y_1 \oplus \lift Q)} 2}{\bF}{\bN}
    \prequiv_{\myrule{Phase-Bisector}}& \pasum{\bP + \frac{y_1 +\oplus \lift
    Q} 4}{\bF}{2 |\cos(\pi (y_1 \oplus \lift Q))| \bN} \\
    =& \pasum{\bP + \frac{y_1 \oplus \lift
    Q} 4}{\bF}{2 (1 \oplus y_1 \oplus \lift Q) \bN} \\
    \prequiv_{\myrule{Change-Var}}& \pasum{\bP[y_1 \leftarrow 1 \oplus y_1 \oplus
    \lift Q] + \frac{1 \oplus y_1} 4}{\bF[y_1 \leftarrow 1 \oplus y_1 \oplus \lift
    Q]}{2 y_1 \bN} \\
    \prequiv_{\myrule{Filter}}& \pasum{\bP[y_1 \leftarrow \lift Q]}{\bF[y_1 \leftarrow \lift Q]}{2 \bN}
\end{align*}

\mncom{I commented the \myrule{Split-Ctrl} part here as I think it doesn't appear anywhere else in the current version of the paper.}

Finally, we target \myrule{Phase-equiv}. At this point, we must loosen the
soundness to equality modulo $\tilde \xi$ instead of $\xi$. Given $\P_2$ is
fully determined by the history since it shares no variables with the classical
memory stacks, we have,

\begin{align*}
     \tilde \xi(\pasum{\bP_1 + \bP_2}{\bF}{\bN}_{\bsupport})
    =&\overline{\sum_{y, \text{history}(\bh[y]) = \eta} \bN(y) \cdot e^{2\pi i
    \bP_1(y) + 2 \pi i \bP_2(\eta)} |\F\rangle_{\text{Qu}}} \\
    =&\overline{\sum_{y, \text{history}(\bh[y]) = \eta} \bN(y) \cdot e^{2\pi i
    \bP_1(y)} |\bF\rangle_{\text{Qu}}} \\
    =&\tilde \xi(\pasum{\bP_1}{\bF}{\bN}_{\bsupport}) 
\end{align*}

\begin{lemma}[Correctness of $\otimes$]%
    \label{lemma:otimes-correct}
    Let $\h$ and $\h'$ be \hps{} satisfying the conditions for forming the
    tensor product. Then, $\xi(\h \otimes \h')(\eta \otimes \eta') =
    \xi(\h)(\eta) \otimes \xi(\h')(\eta')$ for $\eta$ and $\eta'$ histories on
    $\bF_{\text{Cl}}$ and $\bF'_{\text{Cl}}$ respectively.
\end{lemma}
\begin{proof}
    Let $\h_1 = \pasum{\bP_1}{\bF_1}{\bN_1}_{\support_1}$ and $\h_2 =
    \pasum{\bP_2}{\bF_2}{\bN_2}_{\support_2}$ be \hps{} satisfying the
    conditions for forming the tensor product; that is, $\support_1 \cap
    \support_2 = \emptyset$ and $\bF_1 \cap \bF_2 = \emptyset$, and let $\eta_1$
    be a history on $\bF_{{1}_{\text{Cl}}}$ and $\eta_2$ be a history on
    $\bF_{{2}_{\text{Cl}}}$.

    \begin{align*}
        \xi(\h_1 \otimes \h_2)(\eta_1 \otimes \eta_2) &= \sum_{{\scriptsize
            \begin{array}{c}
                \Vec y_1 \in
                \support_1, \Vec y_2 \in \support_2\\ \text{history}(\h_1[\Vec y_1]) =
                \eta_1\\ \text{history}(\h_2[\Vec y_2]) = \eta_2
        \end{array}}
} \bN_1(\Vec y_1) \bN_2(\Vec
    y_2) e^{2 \pi i \bP_1(\Vec y_1)} e^{2 \pi i \bP_2(\Vec y_2)}
    |\F_{1_{\text{Qu}}}\rangle \otimes |\F_{2_{\text{Qu}}}\rangle \\ \\
  &= \phantom{\otimes} \sum_{\Vec y_1 \in \support_1, \text{history}(\h_1[\Vec y_1]) =
        \eta_1} \bN_1(\Vec y_1) e^{2 \pi i \bP_1(\Vec y_1)}
        |\F_{1_{\text{Qu}}}\rangle \\
        & \phantom{=}\otimes \sum_{\Vec y_2 \in \support_2, \text{history}(\h_2[\Vec y_2]) = \eta_2} \bN_2(\Vec y_2) e^{2 \pi i \bP_2(\Vec y_2)} |\F_{2_{\text{Qu}}}\rangle \\
        &= \xi(\h_1)(\eta_1) \otimes \xi(\h_2)(\eta_2)
    \end{align*}
\end{proof}

From Lemma~\ref{lemma:otimes-correct} follows the soundness of the rewriting
rules for $\otimes$: associativity, commutativity, and bilinearity with respect
to $\psadd$:
\begin{itemize}
    \item Associativity:
        \begin{align*}
            \xi((\h_1 \otimes \h_2) \otimes \h_3)((\eta_1 \otimes \eta_2)
            \otimes \eta_3) &= \xi(\h_1 \otimes \h_2)(\eta_1 \otimes \eta_2)
            \otimes \xi(\h_3)(\eta_3) \\
            &= \xi(\h_1)(\eta_1) \otimes \xi(\h_2)(\eta_2)
            \otimes \xi(\h_3)(\eta_3) \\
            &= \xi(\h_1)(\eta_1) \otimes \xi(\h_2 \otimes \h_3)(\eta_2 \otimes
            \eta_3) \\
            &= \xi(\h_1 \otimes (\h_2 \otimes \h_3))(\eta_1 \otimes (\eta_2 \otimes
            \eta_3))
        \end{align*}
        Noting that $\o_1 \otimes (\o_2 \otimes \o_3) = (\o_1 \otimes \o_2)
        \otimes \o_3$.
    \item Commutativity:
        \begin{align*}
            \xi(\h_1 \otimes \h_2)(\eta_1 \otimes \eta_2)
            &= \xi(\h_1)(\eta_1) \otimes \xi(\h_2)(\eta_2) \\
            &= \xi(\h_2)(\eta_2) \otimes \xi(\h_1)(\eta_1) \\
            &= \xi(\h_2 \otimes \h_1)(\eta_2 \otimes \eta_1)
        \end{align*}
        since the registers are indexed by addresses, so the order of the tensor
        product does not matter, and $\eta_1 \otimes \eta_2 = \eta_2 \otimes
        \eta_1$.
    \item Bilinearity:
        \begin{align*}
            \xi((\h_1 \psadd \h_2) \otimes \h_3)(\eta_1 \otimes
            \eta_3) 
            &= \xi(\h_1 \psadd \h_2)(\eta_1) \otimes \xi(\h_3)(\eta_3) \\
            &= (\xi(\h_1)(\eta_1) + \xi(\h_2)(\eta_1)) \otimes \xi(\h_3)(\eta_3) \\
            &= \xi(\h_1)(\eta_1) \otimes \xi(\h_3)(\eta_3) + \xi(\h_2)(\eta_1)
            \otimes \xi(\h_3)(\eta_3) \\
            &= \xi(\h_1 \otimes \h_3)(\eta_1 \otimes \eta_3) + \xi(\h_2 \otimes
            \h_3)(\eta_1 \otimes \eta_3)
        \end{align*}
\end{itemize}

\end{proof}

\begin{theorem}[Monotonicity of $\otimes$, and $\sem{P}{}$ with respect to
    $\prinduce$]%
    \label{theo:monotonicity}
    Let $\h \prinduce \h'$, $P$ be a program that exclusively uses registers in
    $\h'$, and $\h''$ be an arbitrary \hps{} satisfying the conditions for
    forming the tensor product with $\h$. Then,
    \begin{enumerate}
        \item $\h \otimes \h'' \prinduce \h' \otimes \h''$ (also implying $\h''
            \otimes \h' \prinduce \h'' \otimes \h'$), and
        \item $\sem{P}{}(\h) \prinduce \sem{P}{}(\h')$.
    \end{enumerate}
\end{theorem}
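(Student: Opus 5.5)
Both items are proved by unfolding $\prinduce$ through the third item of Theorem~\ref{theorem:sound-rewrite-gen}. From $\h \prinduce \h'$ we get a witness $\h_d$ with $|\h_d| = 1$, containing no input variables, and with $\h \prequiv \h' \otimes \h_d$. In each item we build a new witness and then re-derive $\prinduce$ from the rules \myrule{Disc.}, \myrule{\prequiv\prinduce-Comp.} and \myrule{\prinduce{}-Trans.}. The structural induction is over the derivation of $\h \prinduce \h'$: the cases \myrule{Disc.} and \myrule{\prequiv\prinduce-Comp.} are base cases, and transitivity composes witnesses by $\h_d \otimes \h_d'$. Its norm is $1$ by Lemma~\ref{lemma:otimes-correct}, since the norm of a tensor of worlds factorises.

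For item (1), \myrule{Times} gives $\h \otimes \h'' \prequiv (\h' \otimes \h_d) \otimes \h''$. Associativity and commutativity of $\otimes$ modulo $\prequiv$, proved just above, rewrite this as $(\h' \otimes \h'') \otimes \h_d$. Then \myrule{Disc.} applied to $\h_d$, chained by transitivity, yields $\h \otimes \h'' \prinduce \h' \otimes \h''$. For this to be legal, the address and path-variable sets of $\h_d$ must be disjoint from those of $\h''$. I would secure this by renaming path variables with \myrule{CV}, which is sound for $\xi$. Addresses are disjoint automatically, because $\h''$ is tensorable with $\h \prequiv \h'\otimes\h_d$. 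The symmetric statement follows by commutativity.

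For item (2), I would show $\sem{P}{}(\h' \otimes \h_d) \prequiv \sem{P}{}(\h') \otimes \h_d$ by induction on $P$. This uses the hypothesis that $P$ only touches registers of $\h'$. Every primitive clause of \cref{fig:denot-sem} (gates, \Init, measurement, classical assignment) acts only on the addressed cells, the phase, the scalar and the support. Those clauses therefore commute syntactically with tensoring by an untouched factor, and fresh path variables can be chosen outside $\pvar(\h_d)$. Sequencing and \textbf{For} follow from the induction hypothesis. We also need $\sem{P}{}$ to respect $\prequiv$, which is given by the linearity/consistency lemma for $\xi\circ\semantics{P}$, to transport $\h \prequiv \h'\otimes\h_d$ through $\sem{P}{}$. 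From there, \myrule{Disc.} on $\h_d$ finishes the item.

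The main obstacle is the conditional $\Ifthene{\kappa}{p}{q}$. Its semantics $\kappa\,\sem{p}{}\h \psadd \bar\kappa\,\sem{q}{}\h$ introduces a fresh $y_f$ and case-split outputs, so the result is not syntactically a tensor. I would first observe that $\kappa$ reads only registers of $\h'$, so $\kappa(\h'\otimes\h_d) = (\kappa\h')\otimes\h_d$. The induction hypothesis then applies in each branch. Next, bilinearity of $\otimes$ over $\psadd$, proved above at the level of $\xi$, factors out $\h_d$. This yields equality of $\xi$ rather than syntactic identity, which is enough because $\prequiv$ is justified via $\xi$. One more point needs checking here. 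For quantum control, the no-input-variable and norm-$1$ conditions on $\h_d$ guarantee that $\h_d$ does not interfere with the branch decomposition, which is exactly why \myrule{Disc.} demands them.
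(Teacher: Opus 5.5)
Your proposal is correct and follows the paper's proof. You take the witness $\h_d$ with $\h = \h'\otimes\h_d$, then reassociate, commute and apply \myrule{Disc.} for (1); for (2) you prove $\sem{P}{}(\h'\otimes\h_d) = \sem{P}{}(\h')\otimes\h_d$ by induction on $P$, with primitives touching only $\h'$'s cells, phase, scalar and support, and the conditional handled by bilinearity of $\otimes$ over $\psadd$. Your extra bookkeeping (working modulo $\prequiv$ instead of syntactic equality, renaming path variables apart, transporting through $\sem{P}{}$ via the consistency lemma, composing witnesses under transitivity) makes explicit what the paper leaves implicit, but two justifications need adjusting. First, cite the bilinearity step as the distributivity axiom of $\prequiv$, since equality of $\xi$ does not by itself give $\prequiv$. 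Second, the factorization in the conditional case needs only that $\kappa$ reads registers of $\h'$, not the norm or input-variable conditions on $\h_d$.
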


\begin{proof}
    Let $\h \prinduce \h'$, then, by definition, $\h = \h' \otimes \h_a$
    with $|\h_a| = 1$ and $\text{var}(\h_a) \cup X = \emptyset$.

    \paragraph{(1)}

    Indeed, $\h \otimes \h'' = (\h' \otimes \h_a) \otimes \h'' = \h' \otimes
    (\h_a \otimes \h'')$. Since $\otimes$ is commutative (it is defined using
    addresses for registers), we have $\h' \otimes (\h_a \otimes \h'') = \h'
    \otimes (\h'' \otimes \h_a) = (\h' \otimes \h'' )\otimes \h_a$. Once again,
    $|\h_a| = 1$ and $\text{var}(\h_a) \cup X = \emptyset$, so $(\h' \otimes
    \h'') \otimes \h_a \prinduce \h' \otimes \h''$.

    \paragraph{(3)}

    We prove that by induction on the structure of $P$ with the induction
    hypothesis: $\h = \h' \otimes \h_a \implies \sem{P}{}(\h) = \sem{P}{}(\h')
    \otimes \h_a$. We have the following cases:

    \begin{itemize}
        \item \textbf{If $P = \textbf{Skip}$: } $\sem{P}{}(\h) = \h = \h'
            \otimes \h_a = \sem{P}{}(\h') \otimes \h_a$.

        \item \textbf{If $P = \textbf{Measure}(q, c)$:} Both $q$ and $c$ are
            registers of $\h'$. In $\sem{P}{}(\h) = \h[c \leftarrow \get q] = (\h'
            \otimes \h_a)[c \leftarrow \get q]$, the change occurs in the
            register $c$ of $\h'$, so $\sem{P}{}(\h) = \h'[c \leftarrow \get q]
            \otimes \h_a = \sem{P}{}(\h') \otimes \h_a$.

        \item \textbf{If $P = c_1 := f(\lceil c_2 \rceil)$:} In \hps{}, this
            is identical to measurement as no hard distinction is made between
            quantum and classical registers.

        \item \textbf{If $P = p; q$:} By induction, $\sem{p}{}(\h) =
            \sem{p}{}(\h') \otimes \h_a$. Since $\sem{q}{}$ only acts on the
            registers of $\h'$, it also only acts on the registers of
            $\sem{p}{}(\h')$, so that we have $\sem{q}{}(\sem{p}{}(\h)) =
            \sem{q}{}(\sem{p}{}(\h')) \otimes \h_a$. Thus, $\sem{p; q}{}(\h) =
            \sem{p; q}{}(\h') \otimes \h_a$.

        \item \textbf{If $P = \for{x^I_n} i j p$:} This follows from the
            sequencing case $P = p; q$ and the unfolding of the for loop at
            instantiation: $\sem{P}{} = \sem{p[x_n^I := i] ; \ldots ;
            p[x_n^I := j]}{}$.

        \item \textbf{If $P = U(q)$:} Since $q$ is a register of $\h'$, the
            phase, norm, support, and changes in the values of the registers $q$
            are all contained in $\h'$ and can be factored out the \hps{}
            $\sem{P}{}(\h)$. More concretely, 
            \begin{itemize}
                \item $(\h' \otimes \h_a)[q \leftarrow b] = \h'[q \leftarrow
                    b] \otimes \h_a$.
                \item $(\h' \otimes \h_a) +_p \P = (\h' +_p \P) \otimes \h_a$
                \item $(\h' \otimes \h_a) *_n \N = (\h' *_n \N) \otimes \h_a$
                \item $(\h' \otimes \h_a) \cup_{\bsupport} y = (\h'
                    \cup_{\bsupport} y) \otimes \h_a$ since the registers in
                    $\h_a$ cannot be made to contain $y$ with this operation
                    that only modifies registers in $\h'$.
            \end{itemize}
            
        \item \textbf{If $P = \Ifthene{\kappa}{p}{q}$:}
            By induction, $\sem{p}{}(\h) = \sem{p}{}(\h') \otimes \h_a$ and
            $\sem{q}{}(\h) = \sem{q}{}(\h') \otimes \h_a$. Therefore,
            $\sem{P}{}(\h) = \kappa \sem{p}{}(\h) + \bar \kappa \sem{q}{}(\h)
            = \kappa \sem{p}{}(\h') \otimes \h_a + \bar \kappa \sem{q}{}(\h')$.
            By linearity of $\otimes$, we then have 

            \[
                \sem{P}{}(\h) = (\kappa \sem{p}{}(\h') + \bar \kappa
                \sem{q}{}(\h')) \otimes \h_a
            \]

            Since $\kappa$ can only concern registers in $\h'$, we do also have
            $\sem{P}{}(\h') = \kappa \sem{p}{}(\h') + \bar \kappa
            \sem{q}{}(\h')$, finally giving $\sem{P}{}(\h) = \sem{P}{}(\h')
            \otimes \h_a$.

    \end{itemize}
    
\end{proof}

   \vfill

    \section{Exhaustive Rule Exposition}%
\label{type-system}

This appendix provides a more detailed and comprehensive description of the
rules of typing system, the equational theory, and the evaluation of \hpps{} to
\hps{} while also describing fully formally the grammars for constructing \hps{}
and \hpps{} terms. Each boxed figure in the following elaborates on one aspect
of these descriptions, and is accompanied by short explanations.

Note: we will use Reg instead of QReg and CReg when the rule is available in
both forms. \CBool{} is closed under $\land$, $\oplus$, and $?\textsc{Bool}$,
and $\QProg$ are closed under \textsc{seq} and \textsc{for}. 

\begin{figure}[htb]
    \centering
    \begin{subfigure}{0.33\textwidth}
    \begin{prooftree}
        \AxiomC{$r$ identifier}
        \RightLabel{\myrule{Ident}}
        \UnaryInfC{$\Gamma \vdash r_Q: \QReg$}
    \end{prooftree}    
    \end{subfigure}
    \begin{subfigure}{0.33\textwidth}
    \begin{prooftree}
        \AxiomC{$r$ identifier}
        \RightLabel{\myrule{Ident}}
        \UnaryInfC{$\Gamma \vdash r_C: \CReg$}
    \end{prooftree}
    \end{subfigure}
    \begin{subfigure}{0.33\textwidth}
        \begin{prooftree}
        \AxiomC{}
        \RightLabel{\myrule{Scratch}}
        \UnaryInfC{$\Gamma \vdash c_0 : \CReg$}
        \end{prooftree}%
        \label{subfig:identifier_typing_rules}
    \end{subfigure}
    \caption{Identifiers typing rules}%
    \label{fig:identifier_typing_rules}
    \Description[]{}
\end{figure}

\begin{figure}[htb]
    \centering
    \begin{subfigure}{0.4\textwidth}
        \begin{prooftree}
            \AxiomC{$\Gamma \vdash r : \Reg$}
            \AxiomC{$\Gamma \vdash i : \Int$}
            \RightLabel{$\myrule{r[i:]}$}
            \BinaryInfC{$\Gamma \vdash r[i:] : \Reg$}
        \end{prooftree}
    \end{subfigure}

    \begin{subfigure}{\textwidth}
        \begin{prooftree}
            \AxiomC{$\Gamma \vdash r : \Reg$}
            \AxiomC{$\Gamma \vdash i : \Int$}
            \RightLabel{$\myrule{r[:i]}$}
            \BinaryInfC{$\Gamma \vdash r[:i] : \Reg$}
        \end{prooftree}
    \end{subfigure}
    
    \begin{subfigure}{\textwidth}
        \begin{prooftree}
            \AxiomC{$\Gamma \vdash r : \Reg$}
            \AxiomC{$\Gamma \vdash i : \Int$}
            \AxiomC{$\Gamma \vdash j : \Int$}
            \RightLabel{$\myrule{r[i:j]}$}
            \TrinaryInfC{$\Gamma \vdash r[i:j] : \Reg$}
        \end{prooftree}
    \end{subfigure}
    \caption{Registers typing rules. These rules to be duplicated, once for
    \CReg{} and once for \QReg.}%
    \label{fig:resg_typing_rules}
        \Description[]{}
\end{figure}

\begin{figure}[htb]
        \centering
        \begin{subfigure}{0.33\textwidth}
            \begin{prooftree}
                \AxiomC{}
                \RightLabel{$\myrule{0_b}$}
                \UnaryInfC{$\Gamma \vdash 0_b : \CBool$}
            \end{prooftree}
        \end{subfigure}
        \begin{subfigure}{0.33\textwidth}
            \begin{prooftree}
                \AxiomC{}
                \RightLabel{$\myrule{1_b}$}
                \UnaryInfC{$\Gamma \vdash 1_b : \CBool$}
            \end{prooftree}
        \end{subfigure}
        \begin{subfigure}{0.33\textwidth}
            \begin{prooftree}
                \AxiomC{$i \in \mathbb N$}
                \RightLabel{\myrule{VarBool}}
                \UnaryInfC{$\Gamma \vdash x^b_i : \CBool$}
            \end{prooftree}
        \end{subfigure}

        \begin{subfigure}{\textwidth}
            \begin{prooftree}
                \AxiomC{$\Gamma \vdash r : \CReg$}
                \AxiomC{$\Gamma \vdash i : \Int$}
                \RightLabel{$\lceil r[i] \rceil$\myrule{-classical}}
                \BinaryInfC{$\Gamma \vdash \lceil r[i] \rceil : \CBool$}
            \end{prooftree}
        \end{subfigure}
        \caption{$\CBool~$typing rules. Classical booleans are those that are
        uniform across a given world issued from constants, program parameters,
        or classical registers. $\CBool$ is a subtype of $\Bool$. Technically,
        three more rules are omitted: $\land$, $\oplus$, and $\text{?}$ which are rules
        for \Bool{} (figure~\ref{fig:bool_typing_rules}) and under which
        \CBool{} is closed.}%
        \label{fig:cbool_typing_rules_app}
\end{figure}

\begin{figure}[htb]
        \begin{subfigure}{0.4\textwidth}
            \begin{prooftree}
                \AxiomC{$\Gamma \vdash b : \CBool$}
                \RightLabel{\myrule{CBool}}
                \UnaryInfC{$\Gamma \vdash b : \Bool$}
            \end{prooftree}
        \end{subfigure}
        \begin{subfigure}{0.4\textwidth}
            \begin{prooftree}
                \AxiomC{$\Gamma \vdash r : \QReg$}
                \AxiomC{$\Gamma \vdash i : \Int$}
                \RightLabel{$\myrule{\lceil r[i] \rceil }$}
                \BinaryInfC{$\Gamma \vdash \lceil r[i] \rceil : \Bool$}
            \end{prooftree}
        \end{subfigure}

        \begin{subfigure}{0.4\textwidth}
            \begin{prooftree}
                \AxiomC{$\Gamma \vdash b_1 : \Bool$}
                \AxiomC{$\Gamma \vdash b_2 : \Bool$}
                \RightLabel{$\myrule{\land}$}
                \BinaryInfC{$\Gamma \vdash b_1 \land b_2 : \Bool$}
            \end{prooftree}
        \end{subfigure}
        \begin{subfigure}{0.4\textwidth}
            \begin{prooftree}
                \AxiomC{$\Gamma \vdash b_1 : \Bool$}
                \AxiomC{$\Gamma \vdash b_2 : \Bool$}
                \RightLabel{$\myrule{\oplus}$}
                \BinaryInfC{$\Gamma \vdash b_1 \oplus b_2 : \Bool$}
            \end{prooftree}
        \end{subfigure}

        \begin{subfigure}{\textwidth}
            \begin{prooftree}
                \AxiomC{$\Gamma \vdash b_1 : \Bool$}
                \AxiomC{$\Gamma \vdash b_2 : \Bool$}
                \AxiomC{$\Gamma \vdash b_3 : \Bool$}
                \RightLabel{\myrule{?Bool}}
                \TrinaryInfC{$\Gamma \vdash b_1 ? b_2 : b_3 : \Bool$}
            \end{prooftree}
        \end{subfigure}
    \centering
    \caption{$\Bool~$ typing rules. $b_1 ? b_2 : b_3$ represents the ternary
    operator $\texttt{ if } b_1 \texttt{ then } b_2 \texttt{ else } b_3$. The
    distinction between $\CBool$ and $\Bool$ is the appearance of quantum
    registers.}%
    \label{fig:bool_typing_rules}
    \Description[]{}
\end{figure}

\begin{figure}[htb]
        \centering
        \begin{subfigure}{0.33\textwidth}
            \begin{prooftree}
                \AxiomC{$n \in \mathbb N$}
                \RightLabel{$\myrule{\mathbb N}$}
                \UnaryInfC{$\Gamma \vdash n : \Int$}
            \end{prooftree}
        \end{subfigure}
        \begin{subfigure}{0.33\textwidth}
            \begin{prooftree}
                \AxiomC{$i \in \mathbb N$}
                \RightLabel{\myrule{VarInt}}
                \UnaryInfC{$\Gamma \vdash x^I_i : \Int$}
            \end{prooftree}
        \end{subfigure}
        \begin{subfigure}{0.33\textwidth}
            \begin{prooftree}
                \AxiomC{$\Gamma \vdash i : \Int$}
                \AxiomC{$\Gamma \vdash j : \Int$}
                \RightLabel{$\myrule{+}$}
                \BinaryInfC{$\Gamma \vdash i + j : \Int$}
            \end{prooftree}
        \end{subfigure}

        \begin{subfigure}{\textwidth}
            \begin{prooftree}
                \AxiomC{$\Gamma \vdash i : \Int$}
                \AxiomC{$\Gamma \vdash j : \Int$}
                \RightLabel{$\myrule{*}$}
                \BinaryInfC{$\Gamma \vdash i * j : \Int$}
            \end{prooftree}
        \end{subfigure}

        \begin{subfigure}{\textwidth}
            \begin{prooftree}
                \AxiomC{$\Gamma \vdash i : \Int$}
                \AxiomC{$\Gamma \vdash j : \Int$}
                \RightLabel{\myrule{Exp}}
                \BinaryInfC{$\Gamma \vdash i^ j : \Int$}
            \end{prooftree}
        \end{subfigure}
        \caption{$\Int~$ typing rules}%
        \label{fig:enter-label}
    \Description[]{}
\end{figure}

\begin{figure}[htb]
        \centering
        \begin{subfigure}{0.33\textwidth}
            \begin{prooftree}
                \AxiomC{$\Gamma \vdash p : \QProg$}
                \RightLabel{\myrule{\QProg}}
                \UnaryInfC{$\Gamma \vdash p : \Prog$}
            \end{prooftree}
        \end{subfigure}
        \begin{subfigure}{0.33\textwidth}
            \begin{prooftree}
                \AxiomC{$\Gamma \vdash q : \QReg$}
                \RightLabel{\myrule{Init}}
                \UnaryInfC{$\Gamma \vdash \textbf{Init}(q) : \Prog$}
            \end{prooftree}
        \end{subfigure}
        \begin{subfigure}{\textwidth}
            \begin{prooftree}
                \AxiomC{$\Gamma \vdash q : \QReg$}
                \AxiomC{$\Gamma \vdash c : \CReg$}
                \RightLabel{\myrule{Measure}}
                \BinaryInfC{$\Gamma \vdash \textbf{Measure}(q, c) : \Prog$}
            \end{prooftree}
        \end{subfigure}

        \begin{subfigure}{\textwidth} 
            \begin{prooftree}
                \AxiomC{$\Gamma \vdash c_1 : \CReg$}
                \AxiomC{$\Gamma \vdash c_2 : \CReg$}
                \RightLabel{\myrule{Classical}}
                \BinaryInfC{$\Gamma \vdash c_1 := f(\lceil c_2 \rceil) : \Prog$}
            \end{prooftree}            
        \end{subfigure}

        \begin{subfigure}{\textwidth}
            \begin{prooftree}
                \AxiomC{$\Gamma \vdash p : \Prog$}
                \AxiomC{$\Gamma \vdash q : \Prog$}
                \RightLabel{\myrule{seq}}
                \BinaryInfC{$\Gamma \vdash p;q : \Prog$}
            \end{prooftree}
        \end{subfigure}

        \begin{subfigure}{\textwidth}
            \begin{prooftree}
                \AxiomC{$\Gamma \vdash i : \Int$}
                \AxiomC{$\Gamma \vdash j : \Int$}
                \AxiomC{$\Gamma \vdash p : \Prog$}
                \RightLabel{\myrule{for}}
                \TrinaryInfC{$\Gamma \vdash \textbf{For } x_n^I = i,j \textbf{ do } \{ p\}: \Prog$}
            \end{prooftree}
        \end{subfigure}

        \begin{subfigure}{\textwidth}
            \begin{prooftree}
                \AxiomC{$\Gamma \vdash b : \CBool$}
                \AxiomC{$\Gamma \vdash p : \Prog$}
                \AxiomC{$\Gamma \vdash q : \Prog$}
                \RightLabel{\myrule{C-if}}
                \TrinaryInfC{$\Gamma \vdash \Ifthene{b}{p}{q}: \Prog$}
            \end{prooftree}
        \end{subfigure}
        \caption{$\Prog~$ typing rules}%
        \label{fig:prog_typing_rules_app}
    \Description[]{}
\end{figure}
\begin{figure}
        \centering
        \begin{subfigure}{0.43\textwidth}
            \begin{prooftree}
                \AxiomC{}
                \RightLabel{\textsc{\myrule{Skip}}}
                \UnaryInfC{$\Gamma \vdash \textbf{Skip} : \QProg$}
            \end{prooftree}
        \end{subfigure}
        \begin{subfigure}{0.43\textwidth}
            \begin{prooftree}
                \AxiomC{$U \in \mathcal G$}
                \AxiomC{$\Gamma \vdash q : \QReg$}
                \RightLabel{\textcolor{re}{\myrule{Unitary}}\color{black}}
                \BinaryInfC{$\Gamma \vdash \Apply\ U(q) : \QProg$}
            \end{prooftree}
        \end{subfigure}

        \begin{subfigure}{\textwidth}
            \begin{prooftree}
                \AxiomC{$\Gamma \vdash \qreg (p) \cap\qreg(b) = \emptyset$}
                \AxiomC{$\Gamma \vdash b : \Bool$}
                \AxiomC{$\Gamma \vdash p : \QProg$\; $\Gamma \vdash q : \QProg$}
                \RightLabel{\myrule{Q-if}}
                \TrinaryInfC{$\Gamma \vdash \Ifthene{b}{p}{q} :\QProg$}
            \end{prooftree}
        \end{subfigure}
        \caption{$\QProg~$ typing rules. A \QProg{} is a unitary program.
        Similar to \CBool{}, \QProg{} is a subtype of \Prog{} and is closed
        under sequencing the sequencing operator $\text{;}$ (1 rule omitted).}%
        \label{fig:qprog_typing_rules}
\end{figure}

\begin{figure}[htb]
    \centering

  \begin{center}      
    \AxiomC{}
    \RightLabel{\myrule{Ax}}
    \UnaryInfC{$\Gamma, x : T \vdash x : T$}
\DisplayProof\hskip 1em
    \AxiomC{$\Gamma \vdash \Delta$}
    \RightLabel{\myrule{Weakening}}
    \UnaryInfC{$\Gamma, \Gamma' \vdash \Delta, \Delta'$}
\DisplayProof\vskip 1em
    \AxiomC{$\Gamma, x : T \vdash t : U$}
    \RightLabel{\textcolor{re}{$\lambda $}\color{black}}
    \UnaryInfC{$\Gamma \vdash \lambda x. t : T \to U$}
\DisplayProof\hskip 1em
    \AxiomC{$\Gamma \vdash f : T \to U$}
    \AxiomC{$\Gamma \vdash t : T$}
    \RightLabel{\myrule{@}}
    \BinaryInfC{$\Gamma \vdash f t : U$}
\DisplayProof%
\end{center}

    \caption{Typing rules for the $\lambda-$calculus.}%
    \label{fig:lambda_calculus_and_primitive_gates_typing_rules_main}
\end{figure}

    \begin{figure}[htb]
        \centering
\scalebox{1}{
$
\begin{array}{lcll}
L[R] &:=& \emptyset_{L[R]} \mid R \cdot L[R] &\mathit{LIST\ OF\ R\ SCHEME} \\
\midrule

\creg&:=&  \texttt{ident}^C \mid \creg[\I:] \mid \creg[:\I] \mid
\creg[\I_1:\I_2]& Quantum\ registers\\
\qreg&:=&  \texttt{ident}^Q \mid \qreg[\I:] \mid \qreg[:\I] \mid
\qreg[\I_1:\I_2]& Classical\ registers\\
\B_1, \B_2 &:=&  0 \mid 1 \mid y_\I \mid\get{\cqreg[\I]}  \mid
\B_1\wedge \B_2\mid \B_1 \oplus \B_2  &\mathit{Booleans}\\
\I_1, \I_2&:=&n \mid  \length \cqreg \mid \I_1 +_i\I_2\mid\I_1 *\I_2\mid\I_1^{\I_2} &\mathit{Integers}\\

\support_1, \support_2&:=& \emptyset \mid \{y_\I\} \mid \support_1 \cup \support_2 \mid \support_1\setminus \support_2 
& \textit{\hps support}\\
\P, \P_1, \P_2&:=& 0_{\P}\mid \frac{\B}{2^{\I}}\mid \frac{\I}{2^{\I}}\mid \P_1 +_{\P} \P_2 \mid \I\cdot\P\mid \B \cdot \P\mid -_{\P} \P &\mathit{Dyadics}\\
\An, \An_1, \An_2&:=& 2\pi \frac{\I_1}{\I_2}\mid \arccos \Reals \mid \arcsin
\Reals \mid \I *_a \An\mid \B *_a \An\mid &\mathit{Angles}\\
                                     &&\An_1 +_a \An_2 \mid - \An\\
\Reals, \Reals_1, \Reals_2&:=& \frac{\I_1}{\I_2} \mid \sqrt{\Reals}\ \mid \cos
\An \mid  \sin \An \mid \Reals_1 +_r \Reals_2 \mid  \Reals_1 *_r \Reals_2 \mid
\frac{1}{\Reals} &\mathit{Constructibles}\\
\qmem &:=& L[|\B\rangle_{\qreg[\I]}] & Quantum\ memory \\
\cmem &:=& L[{[\B]}_{\creg[\I]}] & Classical\ memory \\
\stack{\cmem} &:=& L[\cmem] & \cmem \mathit{\ stack} \\
\F&:=& \qmem{} \cdot \stack \cmem &\mathit{Output}\\
\h & := & \pasum{\P}{\F}{\N}_{\support} &\hps\\
\end{array}
$}

\caption{The grammar for \hps{} terms}%
        \label{fig:hps_grammar}
    \end{figure}

\begin{figure}[!h]
\begin{prooftree}
    \AxiomC{}
\RightLabel{(Add-L-distr)}
\UnaryInfC{$(a + b)\bh\prequiv a\bh \psadd b\bh$}
 \DisplayProof\hskip 1.5em
\AxiomC{}
\RightLabel{(Add-R-distr)}
\UnaryInfC{$a(\bh_1\psadd\bh_2)\prequiv a\bh_1 \psadd a\bh_2$}
 \DisplayProof\vskip 1.5em
\AxiomC{}
\RightLabel{(Add-comm)}
\UnaryInfC{$\bh_1 \psadd\bh_2\prequiv \bh_2 \psadd \bh_1$}
 \DisplayProof\hskip 1.5em
\AxiomC{}
\RightLabel{(Add-assoc)}
\UnaryInfC{$\bh_1\psadd(\bh_2\psadd\bh_3)\prequiv (\bh_1\psadd\bh_2)\psadd\bh_3$}
 \DisplayProof\vskip 1.5em
\AxiomC{$\bh_2\prequiv\bh_3$}
\RightLabel{(Local-Add-rew)}
\UnaryInfC{$\bh_1\psadd\bh_2\prequiv \bh_1\psadd\bh_3$}
 \DisplayProof\hskip 1.5em
\AxiomC{}
\RightLabel{(Neutral-add)}
\UnaryInfC{$ \bh \psadd \pasum{x_{\bP}}{x_{\bF}}{0} \prequiv \bh$}
 \DisplayProof\vskip 1.5em
\AxiomC{$\bsupport_1 \cap\bsupport_2 = \emptyset$}
\AxiomC{$\oad_1\cap \oad_2 = \emptyset$}
\RightLabel{(Fact-Distr)}
 \BinaryInfC{$\pasum{\bP_1}{\bF_1}{\bN_1}_{\bsupport_1} \pstens \pasum{\bP_2}{\bF_2}{\bN_2}_{\bsupport_2} \prequiv \pasum{\bP_1+\bP_2}{\bF_1\cup\bF_2}{\bN_1*\bN_2}_{\bsupport_1\cup\bsupport_2}$}
 \DisplayProof\vskip 1.5em
\AxiomC{(Fact-Distr) preconditions }
\RightLabel{(Fact-comm)}
\UnaryInfC{$\bh_1 \pstens\bh_2\prequiv \bh_2 \pstens \bh_1$}
 \DisplayProof\vskip 1.5em
\AxiomC{(Fact-Distr) preconditions}
\RightLabel{(Fact-assoc)}
\UnaryInfC{$\bh_1\pstens(\bh_2\pstens\bh_3)\prequiv (\bh_1\pstens\bh_2)\pstens\bh_3$}
 \DisplayProof\vskip 1.5em
\AxiomC{$\bh\prequiv\bh''$}
\AxiomC{(Fact-Distr) preconditions}
\RightLabel{(Local-Fact-rew)}
\BinaryInfC{$\bh_1\pstens\bh_2\prequiv \bh_1\pstens\bh3$}
\end{prooftree}%
\label{fig:rewriting_rules}

\caption{Additional rules of the equational theory stemming from the axioms of
the algebraic structure of $\psadd$ and $\otimes$}
\end{figure}

    \end{document}